\renewcommand{\em}{\it}
\numberwithin{equation}{section}
\newtheorem{theorem}{Theorem}[section]
\newtheorem{proposition}[theorem]{Proposition} 
\newtheorem{definition}[theorem]{Definition}
\newtheorem{remark}[theorem]{Remark}
\newtheorem{example}[theorem]{Example}
\newtheorem{convention}{Convention}
\newtheorem{choice}{\rm \bf Choice}
\newenvironment{proof}{{\it Proof. }}{{ \vskip 0.1cm
 \hfill{$\square$}}  \vspace{0.0cm} \medskip}
 \DeclareMathOperator{\Map}{Map}
\DeclareMathOperator{\start}{\bullet}
\DeclareMathOperator{\startpoint}{start}
\DeclareMathOperator{\epoint}{end}
\DeclareMathOperator{\Image}{Image}
\DeclareMathOperator{\arc}{arc}
 \DeclareMathOperator{\Tr}{Tr}
 \DeclareMathOperator{\Mat}{Mat}
\DeclareMathOperator{\supp}{supp}
 \DeclareMathOperator{\Det}{Det}
\DeclareMathOperator{\codim}{codim}
\DeclareMathOperator{\Hom}{Hom}
\DeclareMathOperator{\Ad}{Ad}
\DeclareMathOperator{\ad}{ad}
\DeclareMathOperator{\Aut}{Aut}
 \DeclareMathOperator{\WLO}{WLO}
\DeclareMathOperator{\gleam}{gleam}
\DeclareMathOperator{\sing}{sg}
\DeclareMathOperator{\aff}{aff}
\DeclareMathOperator{\Aff}{Aff}
\DeclareMathOperator{\Hol}{Hol}
\DeclareMathOperator{\mean}{mean}
\newcommand{\N}{{\mathbf N}}
\newcommand{\be}{\begin{equation}}
\newcommand{\ee}{\end{equation}}
\newcommand{\vf}{\varphi}
\newcommand{\bN}{{\mathbb N}}
\newcommand{\bR}{{\mathbb R}}
\newcommand{\bC}{{\mathbb C}}
\newcommand{\bZ}{{\mathbb Z}}
\newcommand{\ct}{{\mathfrak t}}
\newcommand{\cG}{{\mathfrak g}}
\newcommand{\ck}{{\mathfrak k}}
\newcommand{\cg}{c_{\cG}}
\newcommand{\cA}{{\mathcal A}}
\newcommand{\cB}{{\mathcal B}}
\newcommand{\cC}{{\mathcal C}}
\newcommand{\cK}{{\mathcal K}}
\newcommand{\cP}{{\mathcal P}}
\newcommand{\cR}{{\mathcal R}}
\newcommand{\cW}{{\mathcal W}}
\newcommand{\G}{{\mathcal G}}
\newcommand{\face}{\mathfrak F}
\newcommand{\cell}{\mathrm Cell}
\newcommand{\cl}{{\mathrm h}}
\newcommand{\reg}{reg}
\newcommand{\orth}{\perp}
\newcommand{\eps}{\epsilon}
\newcommand{\id}{{ \rm id}}
\begin{document}

\title{From simplicial Chern-Simons theory to the shadow invariant I}

\maketitle

\begin{center} \large
Atle Hahn
\end{center}

\begin{center} \em \large  Grupo de F{\'i}sica Matem{\'a}tica da Universidade de Lisboa\\
Av. Prof. Gama Pinto, 2\\
PT-1649-003 Lisboa, Portugal\\
Email: atle.hahn@gmx.de
  \end{center}

\begin{abstract} This is the first of a series of  papers
in which we introduce and study a rigorous ``simplicial'' realization
of the non-Abelian Chern-Simons path integral for manifolds
$M$ of the form $M=\Sigma \times S^1$ and arbitrary simply-connected
compact structure groups $G$.
More precisely, we will introduce, for general links $L$ in $M$,
 a rigorous simplicial version $\WLO_{rig}(L)$  of  the corresponding
 Wilson loop observable $\WLO(L)$ in the so-called
 ``torus gauge''  by Blau and Thompson (Nucl. Phys. B408(2):345--390, 1993).
For a simple class of links $L$ we then evaluate $\WLO_{rig}(L)$
explicitly in a non-perturbative way,
 finding agreement with Turaev's shadow invariant $|L|$.
\end{abstract}

\medskip

{AMS subject classifications:}  57M27,   81T08,  81T45

\medskip

\section{Introduction}
\label{intro}

In a celebrated paper, cf. \cite{Wi},   Witten  succeeded  in
defining, on a physical level of rigor,
  a large class of new 3-manifold and link invariants
 by making use of arguments based on
the heuristic Chern-Simons path integral.
   Later, Reshetikhin and  Turaev found a rigorous definition of Witten's  invariants using the representation theory of quantum groups and suitable surgery operations on the base manifold, cf.
\cite{ReTu1,ReTu2} and part I of \cite{turaev}.
 A related approach is the so-called ``shadow world'' approach  by Turaev, cf.
\cite{Tu2} and part II of \cite{turaev}, which also works with quantum group representations
 but eliminates  the use of surgery operations. In fact, the
  shadow invariant of a link is simply given by a finite ``state sum'' (see Appendix B in \cite{Ha7b} for an ``internal'' reference concerning the special case relevant for us).\par

Until now it has not been possible  to find a  rigorous realization of
the path integral expressions appearing in Witten's approach.
 Moreover,  it is not yet clear if/how one can derive the combinatorial/algebraic expressions
in the approach of Reshetikhin and  Turaev directly from Witten's path integral expressions.
These two open problems (``Problem 1'' and ``Problem 2'') are important problems by themselves, cf. \cite{Kup,Saw99}.
Moreover, their solution  would
greatly simplify the task of relating (in a rigorous way)
the  Reshetikhin-Turaev invariants to classical topology and geometry.
The latter problem (``Problem 3'') is one of the central open problems in the field of  3-manifold quantum topology,
 cf.  \cite{Oht}. \par

The aim of the the present paper is to make some progress towards the solution
of Problems 1--3. Our strategy will be to restrict our attention  to the special case
where the base manifold is of the form $M=\Sigma \times S^1$ and to apply
the so-called ``torus gauge fixing'' procedure, which was
introduced by Blau and Thompson in \cite{BlTh1} for the study of
Chern-Simons models on such manifolds.\par
 The results in \cite{Ha4,Ha6,HaHa}, which were obtained by extending and
combining the work in \cite{BlTh1,BlTh2,BlTh3,Ha3b,Ha3c} and \cite{ASen,Ha2},
 suggest that for such manifolds it should  be
possible to find a rigorous realization of the heuristic path
integral expressions for the Wilson loop observables (WLOs) which
appear after torus gauge fixing has been applied, cf. Eq.
\eqref{eq2.48} below.\par
 Let us remark, however, that the ``continuum'' approach in
\cite{Ha4,Ha6,HaHa} is quite technical and its complete implementation will be rather laborious.
Moreover, it is not clear
how useful the continuum approach will be regarding our aforementioned goal of relating
 the Reshetikhin-Turaev invariants to the classical topological and geometric invariants (= ``Problem 3'').\par

 In the present paper we will therefore propose  an alternative approach,
 which  was inspired by Adams' ``simplicial''  framework (see \cite{Ad0,Ad1})
 for Abelian Chern-Simons models\footnote{or, rather, Abelian $BF_3$-models, cf. the beginning of Sec. \ref{sec4} below}.  The new approach  is  elementary and does not have the other drawbacks of the continuum approach mentioned above.

 \smallskip

\noindent The present paper is organized as follows: \par

In  Sec.~\ref{sec2} we  describe  the torus gauge fixing
 approach of \cite{BlTh1,BlTh2,BlTh3,Ha3b,Ha3c,Ha4}
to  Chern-Simons models on manifolds $M$ of the form $M = \Sigma \times S^1$.
Most of this section is a summary of the exposition in \cite{Ha3c,Ha4},
but Sec. \ref{subsec2.3} contains some new material (e.g the two heuristic formulas Eq. \eqref{eq2.41simpl} and  Eq. \eqref{eq2.48}). \par

  In Sec.~\ref{sec3}, which can be considered as a continuation of the introduction,
  we give some background information regarding the general simplicial program
  for CS-theory/$BF_3$-theory. The simplicial program is very closely
   related to the open problems mentioned above. In fact, a successful implementation
   of the general simplicial program would immediately resolve Problem 1 and Problem 2
   and it is plausible to expect that this would also lead to some progress regarding
    Problem 3, see Sec.~\ref{subsec3.2}. \par

  In   Sec.~\ref{sec4.0} and in Sec.~\ref{sec4} we then return to the special situation which is relevant in the
  present paper and describe our approach for the discretization of the  heuristic equation Eq. \eqref{eq2.48}.\par

Finally, in Sec.~\ref{sec6} we present our main result, Theorem \ref{main_theorem}
(which will be proven in \cite{Ha7b}), before
we conclude the main part of the paper with a short
outlook in Sec.~\ref{sec7}.

\smallskip

The present paper has an appendix consisting of four parts.
In part \ref{appB} we list the Lie theoretic notation used in the present paper
and we give some explicit formulas in the special case $G=SU(2)$.
In part \ref{appB'} we fill in some   details  which were omitted in Sec. \ref{sec2}.
In part \ref{appF} we give a formal treatment of  the notion
of a  ``polyhedral cell complex''.  In part \ref{appG} we make some more comments regarding
the aforementioned simplicial program for CS-theory/$BF_3$-theory.

\section{Chern-Simons theory on $M=\Sigma \times S^1$ in the torus gauge}
\label{sec2}

\subsection{Chern-Simons theory}
\label{subsec2.1}

 Let us fix a  simply-connected compact Lie group\footnote{cf. part \ref{appB} of the Appendix for concrete
 formulas in the special case $G=SU(2)$}  $G$
 with Lie algebra $\cG$. \par

For every smooth manifold $M$, every real vector space $V$
and every  $n \in \bN_0$ we  will denote by $\Omega^n(M,V)$
  the space of $V$-valued n-forms on $M$
  and  we set
\begin{equation} \label{eq2.1} \cA_{M,V}:=\Omega^1(M,V), \quad \quad \cA_{M}:=\cA_{M,\cG}
\end{equation}
 By $\G_M$ we will denote the ``gauge group'' $C^{\infty}(M,G)$.
    We will usually write $\cA$ instead of $\cA_M = \Omega^1(M,\cG)$ and $\G$ instead of $\G_M$.

  \smallskip

 In the following  we will restrict ourselves to the special case where $M$ is an oriented closed 3-manifold.
Moreover,  we will  consider only
 the special case where $G$ is  simple (cf. Remark \ref{rm2.1}
 below for the case of general simply-connected compact Lie groups).
  The  Chern-Simons  action function  $S_{CS}:= S_{CS}(M,G,k)$
      associated to $M$, $G$,  and the ``level'' $k \in \bZ \backslash \{0\}$ is  given by
  \begin{equation} \label{eq2.2'} S_{CS}(A) = - k \pi \int_M \langle A \wedge dA \rangle
   + \tfrac{1}{3} \langle A\wedge [A \wedge A]\rangle \in \bR, \end{equation}
for all $ A \in \cA$. Above
 $[\cdot \wedge \cdot]$  denotes the wedge  product associated to the
bilinear map $[\cdot,\cdot] : \cG \times \cG \to \cG$
and where   $\langle \cdot \wedge  \cdot \rangle$
  denotes the wedge product  associated to the
  suitably\footnote{More precisely, the normalization is chosen such that
 $\langle \Check{\alpha},\Check{\alpha}
\rangle = 2$ for every short real coroot $\Check{\alpha}$ w.r.t. any fixed Cartan subalgebra of $\cG$.
Observe that  after making  the identification $ \ct \cong \ct^*$ which is induced by $\langle
\cdot,\cdot \rangle$ we have $\langle \alpha,\alpha \rangle = 2$ for every {\em long} root $\alpha$.
  Thus the normalization here   coincides with the one in \cite{Roz}.
   This  normalization
 guarantees that the exponential $\exp(i S_{CS})$ is ``gauge invariant'', i.e. invariant
under the standard $\G$-operation on $\cA$} normalized
  Killing form $\langle \cdot , \cdot \rangle : \cG \times \cG \to \bR$.

\medskip

From the definition of $S_{CS}$ it is obvious that $S_{CS}$ is
invariant under (orientation-preserving) diffeomorphisms. Thus, at a
heuristic level, we can expect that the heuristic integral (the
``partition function'') $Z(M) := \int  \exp(i S_{CS}(A)) DA$ is a
topological invariant of the  3-manifold $M$. Here $DA$ denotes  the
informal ``Lebesgue measure'' on the space $\cA$.\par
A similar statement holds if we consider
 (oriented and ordered) links $L$ in $M$,
i.e. finite tuples  $L= (l_1, l_2, \ldots, l_m)$, $m \in \bN$,
where each $l_i$ is a knot\footnote{i.e.  a smooth embedding $S^1 \to M$} in $M$
such that $\arc(l_i) \cap  \arc(l_{i'}) = \emptyset$ holds whenever $i \neq i'$.
In the following we will identify each knot $l_i:S^1 \to M$
with the loop $[0,1] \ni t \mapsto l_i(i_{S^1}(t)) \in M$
where $i_{S^1}: [0,1]  \ni s \mapsto  \exp(2\pi i s) \in U(1) \cong S^1$.\par
If we fix a finite tuple
  $\rho = (\rho_1,\rho_2,\ldots,\rho_m)$ of
irreducible, finite-dimensional, complex representations (=``colors'') of $G$
then we can expect  at a heuristic level that  the mapping which maps every
 link $L= (l_1, l_2, \ldots, l_m)$ in $M$
 to the heuristic integral (the ``expectation value of the Wilson loop observable
  associated to $L$ and $\rho$'')
\begin{equation} \label{eq2.4}
\WLO(L,\rho)  :=  \int_{\cA} \prod_i
\Tr_{\rho_i}\bigl(\Hol_{l_i}(A)\bigr) \exp(i S_{CS}(A)) DA
\end{equation}
is a link invariant. Here   $\Tr_{\rho_i}$ denotes the trace in the  representation $\rho_i$
and $\Hol_{l_i}(A)$  denotes the   holonomy of $A$ around the loop $l_i$.
Among the  many different ways of writing  $\Hol_{l_i}(A)$ explicitly
 the following equation
  will be particularly convenient  for our purposes (cf.  Sec. \ref{subsec4.3} below):
\begin{equation} \label{eq2.5}
\Hol_{l_i}(A) = \lim_{n \to \infty} \prod_{j=1}^n \exp(\tfrac{1}{n}
A(l'_i(\tfrac{j}{n})))
\end{equation}
Here $\exp:\cG \to G$ is the exponential map of $G$.

\begin{remark} \rm
We will simply write  $\WLO(L)$  instead of  $\WLO(L,\rho)$ if no confusion
about the tuple $\rho = (\rho_1,\rho_2,\ldots,\rho_m)$ of ``colors'' can arise.

\smallskip

We remark that in the standard physics literature the notation $Z(M,L)$
is normally used instead of  $\WLO(L)$.
\end{remark}

For convenience we will assume  (without loss of generality) in the following
that the  Lie group $G$  fixed above
 is a Lie subgroup of $U(\N)$, $\N \in \bN$. The Lie algebra $\cG$ of $G$ can
then be identified with the obvious Lie subalgebra of
 the Lie algebra $u(\N)$ of $U(\N)$ and we have
  \begin{equation} \label{eq_Tr_scalar}
 \langle A,B \rangle  =   - \Tr(A \cdot B)
\quad \forall A,B \in \cG
\end{equation}
 where ``$\cdot$'' denotes the matrix
 multiplication in $\Mat(\N,\bC)$ and
where $\Tr := c \Tr_{\Mat(\N,\bC)}$ for suitably chosen\footnote{observe that if  $G$ is simple
then every $\Ad$-invariant scalar product on $\cG$ is
proportional to the Killing form} $c \in \bR$.
For example, in the special case $G=SU(\N)$ we have $c = \tfrac{1}{4 \pi^2}$.\par

The Chern-Simons  action function  $S_{CS}$
  can  then  be rewritten as
\begin{equation} \label{eq2.2} S_{CS}(A) = k\pi \int_M \Tr(A \wedge dA + \tfrac{2}{3} A\wedge A\wedge A), \quad A \in \cA
\end{equation}
where  ``$\wedge$'' is now the wedge product
 for $(\Mat(\N,\bC), \cdot)$-valued forms. Moreover, on  the RHS
  of Eq. \eqref{eq2.5}
 we can then reinterpret  $\prod \cdots$ as the
 matrix product  and $\exp$  as the exponential map of $\Mat(\N,\bC)$.

\begin{remark} \label{rm2.1}  \rm
 Observe that a simply-connected compact Lie group is automatically semi-simple
and can therefore be written  as a  product
 of the form $G = \prod_{i=1}^{r} G_i$, $r \in
\bN$, where each $G_i$ is a simple simply-connected compact Lie
group. We can generalize the definition of $S_{CS}$  to this general situation
by setting  $S_{CS}(M,G,k)(A) := \sum_i S_{CS}(M,G_i,k)(A_i)$ for all $A \in \cA$
where $(A_i)_i$ are the components of $A$ w.r.t. to the decomposition
 $\cG = \oplus_i \cG_i$ ($\cG_i$ being the Lie algebra of $G_i$).\par
In view of Sec. 7 in \cite{Ha7b}  let us generalize the definition of $S_{CS}$ even further
and introduce for every sequence $(k_i)_{i \le r}$ of non-zero integers
the function  $S_{CS}(M,G,(k_i)_i)$by setting
$S_{CS}(M,G,(k_i)_i)(A) := \sum_i S_{CS}(M,G_i,k_i)(A_i)$ for all $A \in \cA$.
In fact,  in the present paper and in \cite{Ha7b} only
two special cases will play a role, namely the case
 $r=1$ (i.e. $G$ simple) and the case $r=2$, $G_2 = G_1$ and $k_2
= - k_1$, cf. Sec. 7 in \cite{Ha7b}.
\end{remark}

\subsection{Torus gauge fixing}
\label{subsec2.2}

For the rest of this paper let us fix a  maximal torus $T$
of $G$. The Lie algebra  of $T$ will be denoted by  $\ct$.
\subsubsection{Motivation}
\label{subsubsec2.2.1}

In order to motivate the definition
of the torus gauge fixing procedure for the manifold $M$ of the form
$M=\Sigma \times S^1$ where $\Sigma$ is a connected surface let us first have a quick look at
the orbit space $\cA / \G$ and the canonical projection $\pi_{\G}: \cA \to \cA/ \G$
for the three manifolds $M=\bR$, $M=S^1$, and $M=\Sigma \times \bR$.\par

In the following
  $\frac{\partial}{\partial t}$ will denote the vector field
  on $\bR$ (resp. $S^1$) which is induced by the map $\id_{\bR}$ (resp. the map
  $i_{S^1}: [0,1]  \ni s \mapsto  \exp(2\pi i s) \in U(1) \cong S^1$)
  and $dt$ will denote the dual 1-form on $\bR$ (resp. $S^1$).
  The obvious ``lift''/pullback of $\frac{\partial}{\partial t}$ and $dt$
 to the product manifolds $\Sigma \times \bR$ and $\Sigma \times S^1$
will again be denoted by $\frac{\partial}{\partial t}$ and $dt$.

\begin{enumerate}
 \item $M=\bR$: Here every 1-form $A = A_0 dt \in \cA$
is gauge-equivalent to the trivial 1-form  $0 \, dt = 0$, so  $\cA /
\G$ has just one element.

\item $M= S^1$: Here
every $A \in \cA$ is gauge equivalent\footnote{this follows, e.g.,
by looking at the explicit form of the well-known injection
$\psi:\cA_{flat}/ \G \to \Hom(\pi_1(M),G)/G$ and taking into account
that in the special case $M=S^1$ we have $\cA/ \G  = \cA_{flat}/ \G$
and $ \Hom(\pi_1(M),G)/G \cong G/G$ } to a 1-form of the form $Bdt$
with constant  $B:S^1 \to \cG$. Moreover, according to the
fundamental theorem of maximal tori we can choose $B$ to be
$\ct$-valued, so the map $\pi_{\G}:\{B dt \mid B \in
C^{\infty}(S^1,\ct)
 \text{ is constant }\} \to \cA/ \G$ is surjective.

\item $M = \Sigma \times \bR$:
 Every $A \in \cA$ can be written uniquely in the form
 $A = A^{\orth} + A_0 dt $
with  $A^{\orth} \in \cA^{\orth} := \{A \in \cA \mid A(\tfrac{\partial}{\partial t}) =0\}$
and $A_0 \in C^{\infty}(M,\cG)$.
Using the argument for the case $M=\bR$
for each of the ``fibers'' $\{\sigma\} \times \bR \cong \bR$, $\sigma \in \Sigma$,
  we can easily conclude that
every 1-form $A$ can be  gauge-transformed
 into an element of $\cA^{\orth}$.
In other words, the map $\pi_{\G}: \cA^{\orth} \to \cA / \G$ is surjective.
\end{enumerate}
After these preparations let us now go back to the original
manifold.
\begin{enumerate}
\item[iv)]  $M = \Sigma \times S^1$: Again every  $A \in \cA$ can be written uniquely in the form
  $A = A^{\orth} + A_0 dt $ with $A^{\orth} \in \cA^{\orth}$ and $A_0 \in C^{\infty}(M,\cG)$
  where $\cA^{\orth}$ is defined again by
 \begin{equation} \label{eq2.6}
 \cA^{\orth} := \{A \in \cA \mid A(\tfrac{\partial}{\partial t}) =0 \}
 \end{equation}

Combining the results for the case $M=S^1$ and $M = \Sigma \times \bR$
and making the identification
$  \{ B \in C^{\infty}(\Sigma \times S^1,\ct) \mid
  B(\sigma,\cdot) \text{ constant for all $\sigma \in \Sigma$}
 \} \cong C^{\infty}(\Sigma,\ct)$
one is naturally led to the space
\begin{align} \label{eq2.7}
   \cA^{qax}(T) & := \cA^{\orth} \oplus  \{ B dt \mid B \in C^{\infty}(\Sigma,\ct)\}
 \end{align}
 and to the question whether the map
$\pi_{\G}:  \cA^{qax}(T) \to \cA / \G$
is surjective.
For  technical reasons let us also introduce the space
\begin{align} \label{eq2.8}
 \cA^{qax} & := \cA^{\orth} \oplus  \{ B dt \mid B \in C^{\infty}(\Sigma,\cG)\}
\end{align}
\end{enumerate}

In Sec. \ref{subsubsec2.2.3} and Sec. \ref{subsubsec2.2.4} below we will  study the map $\pi_{\G}:
\cA^{qax}(T) \to \cA / \G$ in the situation relevant for us. Before
we do this we will have to make a short digression where we
introduce the two heuristic concepts of a generalized  gauge fixing
and an abstract gauge fixing which will be useful in Sec.
\ref{subsubsec2.2.3} and Sec. \ref{subsubsec2.2.4} below.

\subsubsection{Two heuristic concepts: ``generalized'' and ``abstract'' gauge fixing}
\label{subsubsec2.2.2}

Let us call a (not necessarily linear)
subspace $V$ of $\cA$
 a {\em gauge fixing subspace}
 iff its elements form a complete and minimal
set of representatives of $\cA/\G$, or, equivalently, iff the map
$\Pi_{V}:V \to \cA/ \G$ which is obtained by restricting the
canonical projection $\pi_{\G}: \cA \to \cA/ \G$ onto $V$ is a
bijection. (Thus  $\Pi_{V}^{-1}: \cA/\G \to \cA$
 will be a {\em gauge fixing} in the usual sense).\par
Let $ d\nu_{\cA / \G}$ denote the image of the informal Lebesgue measure $DA$
under  $\pi_{\G}$, i.e.
\begin{equation} \label{eq2.9}
 d\nu_{\cA / \G}  := (\pi_{\G})_* DA
\end{equation}
If $V$ is a gauge fixing subspace then, setting $d\mu_V:= (\Pi_V)^{-1}_* (d\nu_{\cA / \G})$,
 we trivially have (at an informal level)
\begin{equation} \label{eq2.10}
(\Pi_V)_{*} d\mu_V = d\nu_{\cA / \G}
\end{equation}
and therefore\footnote{since
$\int_{\cA} \chi(A) DA =
\int_{\cA/ \G} \bar{\chi} \ d\nu_{\cA / \G} =
\int_{\cA/ \G} \bar{\chi} \ (\Pi_V)_* d\mu_V
= \int_V   (\bar{\chi} \circ \Pi_V) d\mu_V = \int_V \chi(A) d\mu_V(A)$
where $\bar{\chi} : \cA/\G \to \bC$ is uniquely given by
$\chi = \bar{\chi} \circ \pi_{\G}$}
also  (informally)
\begin{equation} \label{eq2.11}
\int \chi(A) DA =  \int_V \chi(A) d\mu_V(A)
\end{equation}
for every $\G$-invariant   function $\chi:\cA \to \bC$.

\begin{remark} \label{rm_FadPop} \rm If $V$ is a ``sufficiently nice'' subspace of $\cA$
the informal measure $d\mu_V$ will have an explicitly computable
 ``density'' w.r.t. to $DA_{| V}$.
This density is usually called the ``Faddeev-Popov determinant'' (of
the gauge fixing associated to $V$), cf., e.g., \cite{Pok} and Appendix C in \cite{Ha4}.
\end{remark}

We will call a subset
$V \subset \cA$ a {\em generalized gauge fixing subspace}
iff the map
$\Pi_{V}:V \to \cA/ \G$ given as above is ``essentially surjective''
in the sense that the complement of $\pi_{\G}(V)$
in  $\cA/ \G$ is a $ d\nu_{\cA / \G}$-zero subset.
Since for such a generalized gauge fixing space the map $\Pi_V:V \to \cA / \G$
 need not be  injective we can not hope to obtain
an informal measure $d\mu_V$ fulfilling Eq. \eqref{eq2.10}  above in
a canonical way as above. However, since by assumption   $\Pi_V$ is
essentially surjective there will be (non-canonical informal)
measures   $d\mu_V$  on $V$ such that Eq. \eqref{eq2.10} above is
fulfilled at an informal level. We will call any  such\footnote{i.e.
$V$ is a generalized gauge fixing space and  $d\mu_V$ a  measure on
$V$ fulfilling Eq. \eqref{eq2.10} } pair $(V,d\mu_V)$
 a  {\em generalized gauge fixing}.\par

An {\em abstract  gauge fixing} is a triple $(V,\Pi_V,d\mu_V)$ where
$V$ is an arbitrary (heuristic) measurable space, $\Pi_V$ is a
(heuristic) measurable map $V \to \cA/ \G$ and $d\mu_V$ a measure on
$V$ such that Eq. \eqref{eq2.10}  above  is fulfilled\footnote{in this case, one can conclude
at an informal level that $\Pi_V$ should then be essentially
surjective in the sense above since one can argue that the informal
measure
 $d\nu_{\cA / \G}$ on $\cA / \G$   should have ``full support''}.
 Of course, for an abstract gauge fixing
 to be useful the space $V$ should have
 enough structure so that one can perform concrete computations\footnote{this excludes, for example,
 the ``trivial''  abstract  gauge fixing
 $V=\cA / \G$, $\Pi_V = \id_{\cA / \G}$ and $d\mu_V = d\nu_{\cA / \G}$.
  The orbit space  $\cA / \G$ has so little structure
that it is not very useful for explicit computations, which is
exactly the reason why one usually  tries to apply a suitable gauge
fixing (in the usual or generalized sense)}. For the abstract gauge
fixing which will appear in Sec. \ref{subsubsec2.2.4}
 the space $V$ will be the  product of a linear space and a countable discrete space.

\subsubsection{Torus gauge fixing for non-compact $M=\Sigma \times S^1$}
\label{subsubsec2.2.3}

Recall from \cite{Br_tD} that an element $g$ of $G$ is called ``regular'' iff it is contained
in exactly one maximal torus of $G$. In the following we set
\begin{equation}G_{reg}:= \{g \in G \mid g \text{ is regular}\}
\end{equation}

\medskip

Let us now restrict to the special situation
which is relevant for CS theory, i.e. where  $G$ is simply-connected, cf. Sec. \ref{subsec2.1}.
It turns out  that if $\Sigma$ is non-compact we have (cf. point ii) in part \ref{appB'_1} of the Appendix)
 \begin{equation} \label{eq2.13} \cA_{reg} / \G \subset \cA^{qax}(T)/ \G = \pi_{\G}(\cA^{qax}(T))
\end{equation}
 where
\begin{equation} \label{eq2.14}\cA_{reg} := \{ A \in \cA \mid \Hol_{l_{\sigma}}(A) \in G_{reg} \text{ for every }
\sigma \in \Sigma \}
\end{equation}
and where $l_{\sigma}$ is the ``vertical'' loop ``above''  $\sigma$,
i.e. $l_{\sigma}: [0,1] \ni s \mapsto (\sigma,i_{S^1}(s)) \in M$
with $i_{S^1}$  as above. Moreover, since $\codim(G \backslash
G_{reg}) \ge 3$ (cf., e.g., Chap. V in \cite{Br_tD}) and
$\dim(\Sigma)=2$, a ``generic'' function $f:\Sigma \to G$ will
remain inside $G_{reg}$. Thus we can argue at a heuristic level that
the difference sets
$$ C^{\infty}(\Sigma,G) \backslash C^{\infty}(\Sigma,G_{reg}),
\quad \cA  \backslash \cA_{reg}, \quad (\cA / \G)  \backslash
(\cA_{reg}/\G)$$ are all  ``negligible'', i.e. ``zero-subsets''
w.r.t. $D\Omega$ (the heuristic Haar measure on $
C^{\infty}(\Sigma,G)$) resp. $DA$ resp.  $ d\nu_{\cA / \G}$.
 Thus,  if $\Sigma$ is non-compact, the space $V:= \cA^{qax}(T)$ should  indeed  be a
generalized gauge fixing space and there should be a measures $d\mu_V$ on  $V$ fulfilling
Eq. \eqref{eq2.10}.\par

In order to describe such a measure $d\mu_V$ explicitly,
let us make the identification
\begin{equation} \label{eq2.15} V = \cA^{qax}(T) \cong A^{\orth} \times \cB
\end{equation}
where we have set
\begin{equation} \label{eq2.16} \cB:=C^{\infty}(\Sigma,\ct)
\end{equation}
Below $DA^{\orth}$ will denote the (informal) ``Lebesgue measure'' on
$\cA^{\orth}$ and $DB$ the (informal) ``Lebesgue measure'' on $\cB$.\par

It can be shown at an informal level\footnote{by computing the  Faddeev-Popov determinant
of a closely related (proper) gauge fixing,
 cf.  Sec. 2.3 and Sec. 2.4 in \cite{Ha4} (and Remark \ref{rm_FadPop} above).
Observe that Eq. \eqref{eq2.18} below is just the analogue of Eq. (2.23) in
Sec. 2.4  in \cite{Ha4} where $P$ is replaced by $\ct$}
  that the measure $d\mu_V$ on  $V \cong
A^{\orth} \times \cB$ given by
\begin{equation} \label{eq2.17}
 d\mu_V :=       DA^{\orth}  \otimes \bigl(\det\bigl(1_{\ck}-\exp(\ad(B))_{| \ck}\bigr) DB\bigr)
\end{equation}
fulfills Eq. \eqref{eq2.10} up to a multiplicative constant, or, equivalently, that
\begin{equation} \label{eq2.18}
 \int_{\cA} \chi(A) DA
 \sim  \int_{\cB} \biggl[ \int_{\cA^{\orth}} \chi(A^{\orth} + B dt)
   DA^{\orth}  \biggr]   \det\bigl(1_{\ck}-\exp(\ad(B))_{| \ck}\bigr) DB
  \end{equation}
holds for every $\G$-invariant function $\chi:\cA \to \bC$.
Here  $\ck$ is the  $\langle \cdot,\cdot \rangle$-orthogonal
complement of $\ct$ in $\cG$,   $1_{\ck}$ denotes the identity operator on $C^{\infty}(\Sigma,\ck)$,
and $\exp(\ad(B))_{| \ck}$ is the well-defined\footnote{observe that
$\exp(\ad(B(\sigma))) = \Ad(\exp(B(\sigma)))$ and that $\ck$ is an $\Ad_{| T}$-invariant
subspace of $\cG$} linear operator on $C^{\infty}(\Sigma,\ck)$
given by $(\exp(\ad(B))_{| \ck} \cdot f)(\sigma) = \exp(\ad(B(\sigma)))_{| \ck} \cdot f(\sigma)$
for all $f \in C^{\infty}(\Sigma,\ck)$ and $\sigma \in \Sigma$.

  \begin{convention} \rm Above and in the following ``$\sim$''   denotes
  equality up to a multiplicative constant.
  Of course, this  ``constant'' can/should depend on $G$ and $M$ but
  it is  independent of  $\chi$.
 \end{convention}

\subsubsection{Torus gauge fixing for compact $M=\Sigma \times S^1$ }
\label{subsubsec2.2.4}

The case of compact  $\Sigma$, which is the case we are actually
interested in, requires more care since in this case we   have
$\cA_{reg} / \G \not\subset \pi_{\G}(\cA^{qax}(T))$  (cf. point iii) in part \ref{appB'_1} of the Appendix; cf. also Example 2 in Sec. 3 in \cite{BlTh3}).
 If one still wants to transform a general 1-form $A \in \cA_{reg}$
 into an element $A^{\orth}+ Bdt$
 of $\cA^{qax}(T)$ one can do so only
 if  one uses a certain (mildly)
 singular gauge transformation $\Omega$
 and also allows  $A^{\orth}$ to have a similar  singularity
 (cf. the maps $\Omega_{\cl}$ and the 1-forms $A_{\sing}(\cl)$ appearing in Step 2 and Step 3 below).
For functions $\chi$ fulfilling an additional property (cf. Step 2
below) this strategy indeed allows us to generalize Eq.
\eqref{eq2.18} so that also the case of compact surfaces $\Sigma$ is
included, cf. Eq \eqref{eq2.27} below.\par
 We will now sketch the derivation of Eq \eqref{eq2.27}.
 For more details we refer to \cite{Ha3c}
where a detailed derivation of a very similar
equation is given where $\cB = C^{\infty}(\Sigma,\ct)$
is replaced by $C^{\infty}(\Sigma,P)$, $P \subset \ct$ being a fixed Weyl alcove.

\medskip

\noindent {\bf Preparation for Step 1:} Let us
set\footnote{\label{ft11} the notation $[\Sigma,G/T]$ is motivated
by the fact that $C^{\infty}(\Sigma,G/T)/\G_{\Sigma}$ coincides with
the set of  homotopy classes of maps $\Sigma \to G/T$, cf.
Proposition 3.2 in \cite{Ha3c}}
\begin{equation} \label{eq2.19}
[\Sigma,G/T] := C^{\infty}(\Sigma,G/T)/\G_{\Sigma}
\end{equation}
where the expression on the RHS  denotes the orbit space of the
$\G_{\Sigma}$-operation on $C^{\infty}(\Sigma,G/T)$ given by
$\bar{g} \cdot \Omega := \Omega^{-1}\bar{g}$ for $\bar{g} \in
C^{\infty}(\Sigma,G/T)$ and $\Omega \in \G_{\Sigma}$.
 For each $\cl \in
[\Sigma,G/T]$ we  pick a representative $\bar{g}_{\cl} \in
C^{\infty}(\Sigma,G/T)$. We will keep each $\bar{g}_{\cl}$ fixed in
the following.\par For each  $\bar{g} \in G/T$ and $b \in \ct$  we
set $\bar{g} b \bar{g}^{-1} := g b g^{-1} \in \cG$ where $g$ is an
arbitrary\footnote{observe that for $b \in \ct$ the value $\bar{g} b
\bar{g}^{-1}$ will not depend on the special choice of $g$} element
of $G$ fulfilling $gT = \bar{g}$ (and where we use the somewhat sloppy notation
$g b g^{-1}$ instead of $\Ad(g) \cdot b$).

\medskip
 \noindent {\bf Step 1:} We now introduce  a suitable abstract
gauge fixing  $(\overline{V},
\Pi_{\overline{V}},d\mu_{\overline{V}})$ (``abstract torus gauge
fixing''). We take
\begin{equation} \label{eq2.20} \overline{V} := A^{\orth} \times
\cB \times [\Sigma,G/T]
\end{equation} and  define $\Pi_{\overline{V}} :
\overline{V} \to \cA/ \G$  by
\begin{equation} \label{eq2.21} \Pi_{\overline{V}} (A^{\orth},B,\cl)  =
\pi_{\G}( A^{\orth} + \bar{g}_{\cl}B \bar{g}_{\cl}^{-1} dt)
\end{equation}
for all $(A^{\orth},B,\cl) \in  \overline{V}$ where $\bar{g}_{\cl}B
\bar{g}_{\cl}^{-1} \in C^{\infty}(\Sigma,\cG)$ is given by
$(\bar{g}_{\cl}B \bar{g}_{\cl}^{-1})(\sigma):= \bar{g}_{\cl}(\sigma)
B(\sigma) \bar{g}_{\cl}^{-1}(\sigma)$ for all $\sigma \in \Sigma$.

\smallskip

We will motivate this choice of  $\overline{V}$ and of $\Pi_{\overline{V}}$
in part \ref{appB'_1} of the Appendix.
There we will also show (cf. Eq. \eqref{eq2.22_app} below) that
\begin{equation} \label{eq2.22} \cA_{reg}/ \G \subset
  \Image(\Pi_{\overline{V}}),
\end{equation}    so $\Pi_{\overline{V}}$ is essentially surjective in the sense above.
Thus we can  hope to be able to find a heuristic measure $d\mu_{\overline{V}}$ on $\overline{V}$
fulfilling Eq. \eqref{eq2.10} (with $V$ replaced by $\bar{V}$).\par

Recalling that  in the situation of non-compact $\Sigma$
the measure $\mu_V$ given by   Eq. \eqref{eq2.17} above has the right properties (ie fulfills Eq. \eqref{eq2.10})
 and taking into account that  $[\Sigma, G/T]$ is a countable set (cf. Remark \ref{rm2.4} below)
it is tempting to try the ansatz
\begin{equation} \label{eq2.23}
d\mu_{\overline{V}} := DA^{\orth}  \otimes \bigl(\det\bigl(1_{\ck}-\exp(\ad(B))_{| \ck}\bigr) DB\bigr)
\otimes \#
\end{equation}
where $\#$ is the counting measure on $[\Sigma, G/T]$.\par

It turns out   that with this choice
Eq. \eqref{eq2.10} is indeed fulfilled at a heuristic level
(with $V$ replaced by $\bar{V}$ and with ``$=$'' replaced by ``$\sim$''), or, equivalently, that we have
\begin{multline} \label{eq2.24}
\int_{\cA} \chi(A) DA \\
\sim \sum_{\cl \in [\Sigma,G/T]}
 \int_{\cB} \biggl[ \int_{\cA^{\orth}} \chi(A^{\orth} +
  \bar{g}_{\cl} B  \bar{g}_{\cl}^{-1} dt)  DA^{\orth} \biggr]
  \det\bigl(1_{\ck}-\exp(\ad(B))_{| \ck}\bigr) DB
\end{multline}
for every $\G$-invariant function $\chi:\cA \to \bC$, cf. Eq. (2.26) in \cite{Ha4}.

\smallskip

Clearly,  for $\cl \in [\Sigma,G/T]$, $B \in \cB$ the function $\bar{g}_{\cl}
 B  \bar{g}_{\cl}^{-1} \in C^{\infty}(\Sigma,\cG)$  will in general not be in  $C^{\infty}(\Sigma,\ct)$.
   This reduces the usefulness of Eq. \eqref{eq2.24}
considerably. Fortunately, for a certain class of functions
 $\chi$ one can derive  a more useful variant of Eq. \eqref{eq2.24}, cf. Eq. \eqref{eq2.27} below.

 \medskip

\noindent {\bf Preparation for Step 2:} Let us fix a point $\sigma_0
\in \Sigma$ and   pick, for each $\cl \in [\Sigma,G/T]$,  a lift
  $\Omega_{\cl} \in   \G_{\Sigma \backslash \{\sigma_0\}} = C^{\infty}(\Sigma \backslash \{\sigma_0\},G)$ of $(\bar{g}_{\cl})_{|\Sigma \backslash
  \{\sigma_0\}} \in C^{\infty}(\Sigma \backslash \{\sigma_0\},G/T)$,
  cf. Remark \ref{rm2.3} below. We will keep $\sigma_0$ and each  $\Omega_{\cl}$ fixed for the rest of
this paper.

\begin{remark} \label{rm2.3}  \rm
The existence of the ``lifts'' $\Omega_{\cl} \in   \G_{\Sigma \backslash
\{\sigma_0\}}$ of $(\bar{g}_{\cl})_{|\Sigma \backslash
  \{\sigma_0\}}$ picked above is guaranteed
  by the following three observations:
\begin{enumerate}
\item the surface $\Sigma \backslash \{\sigma_0\}$ (and in fact every
non-compact  surface) is homotopy equivalent to a 1-complex
\item  $G/T$ is simply-connected
\item $\pi_{G/T}: G \to G/T$ is a  fiber bundle
and therefore possesses   the homotopy lifting property
 (cf., e.g.,  \cite{Hu})
\end{enumerate}
The first two observations imply that
 each of the maps $(\bar{g}_{\cl})_{|\Sigma \backslash
  \{\sigma_0\}}$ is $0$-homotopic and the third  observation then
  implies the existence of $\Omega_{\cl}$ (cf. also Example \ref{rm_SU2_Top} in part \ref{appB'_1} of the Appendix below).
\end{remark}
 Clearly, the restriction mapping
 $\G_{\Sigma} \ni \Omega \mapsto
  \Omega_{| \Sigma \backslash \{\sigma_0 \}} \in \G_{ \Sigma \backslash \{\sigma_0 \}}$
is injective so
  we can consider $\G_{\Sigma}$
to be  a subgroup of  $\G_{ \Sigma \backslash \{\sigma_0 \}}$.
In a similar way, we will identify
 $C^{\infty}(\Sigma,\cG)$ with a subspace of $C^{\infty}(\Sigma \backslash \{\sigma_0\}, \cG)$,
$\cA^{\orth}$ with a subspace of
 $\cA^{\orth}_{ (\Sigma \backslash \{\sigma_0 \}) \times S^1}$ (defined
 in the obvious way), and
$\cA^{qax}= \cA^{\orth} \oplus C^{\infty}(\Sigma,\cG) dt$ with a subspace of
$\cA^{qax}_{(\Sigma \backslash \{\sigma_0 \}) \times S^1}:=
\cA^{\orth}_{(\Sigma \backslash \{\sigma_0 \}) \times S^1}
 \oplus C^{\infty}(\Sigma \backslash \{\sigma_0\},\cG) dt$.
Summarizing this  we have
\begin{align*}
 \G_{\Sigma} & \subset &  \overline{\G_{\Sigma}} & :=
  \G_{\Sigma \backslash \{\sigma_0\}} \\
 C^{\infty}(\Sigma,\cG) &  \subset & \overline{C^{\infty}}(\Sigma,\cG) &:=  C^{\infty}(\Sigma \backslash \{\sigma_0\},\cG) \\
\cA^{\orth}   &\subset&  \overline{\cA^{\orth}} & :=
   \cA^{\orth}_{(\Sigma\backslash \{\sigma_0\}) \times S^1} \\
\cA^{qax}  & \subset & \overline{\cA^{qax}} & :=  \cA^{qax}_{(\Sigma \backslash \{\sigma_0 \}) \times S^1}
\end{align*}

\bigskip

\noindent {\bf Step 2:}  Let  $\chi: \cA \to \bC$ be a
$\G$-invariant function with the extra property that the function
$\chi^{qax}: \cA^{qax} \to \bC$ given by $\chi^{qax} := \chi_{|
\cA^{qax}}$ is not only $\G_{\Sigma}$-invariant but can also be
extended to a $\overline{\G_{\Sigma}}$-invariant function
$\overline{\chi^{qax}}: \overline{\cA^{qax}}
 \to \bC$.
 If $\chi$ has this property  we  obtain
for the integrand in the inner integral on the right-hand side of
\eqref{eq2.24}
\begin{multline} \label{eq2.25}
\chi(A^{\orth} + (\bar{g}_{\cl}  B  \bar{g}_{\cl}^{-1}) dt) =
\chi^{qax}(A^{\orth} + (\bar{g}_{\cl}  B  \bar{g}_{\cl}^{-1}) dt)
  = \overline{\chi^{qax}}(A^{\orth} + (\Omega_{\cl}  B  \Omega_{\cl}^{-1}) dt) \\
 = \overline{\chi^{qax}}( (A^{\orth} \cdot \Omega_{\cl}) +  B   dt)
= \overline{\chi^{qax}}( \Omega^{-1}_{\cl} A^{\orth} \Omega_{\cl} + \Omega_{\cl}^{-1} d  \Omega_{\cl} +  B   dt)
\end{multline}
So for such a function $\chi$ we arrive at  the following
modification of Eq. \eqref{eq2.24}
\begin{multline} \label{eq2.26}
\int_{\cA} \chi(A) DA  \sim \sum_{\cl \in [\Sigma,G/T]}
 \int_{\cB} \biggl[ \int_{\cA^{\orth}}
  \overline{\chi^{qax}}(\Omega^{-1}_{\cl} A^{\orth} \Omega_{\cl} + \Omega_{\cl}^{-1} d  \Omega_{\cl}
   +  B   dt) DA^{\orth} \biggr]\\
\times   \det\bigl(1_{\ck}-\exp(\ad(B))_{| \ck}\bigr)   DB
\end{multline}

\noindent {\bf Step 3:}  As a final step let us  simplify Eq.
\eqref{eq2.26} by performing -- for each
 fixed $\cl \in [\Sigma,G/T]$ and each fixed $B \in \cB$
   -- the  change of variable   $ \Omega^{-1}_{\cl} A^{\orth} \Omega_{\cl} + \pi_{\ck}(\Omega_{\cl}^{-1}
  d  \Omega_{\cl})  \rightarrow  A^{\orth}$
  in the integral  $\bigl[ \int_{\cA^{\orth}}
  \overline{\chi^{qax}}(\Omega^{-1}_{\cl} A^{\orth} \Omega_{\cl} + \Omega_{\cl}^{-1} d  \Omega_{\cl}
   +  B   dt) DA^{\orth} \bigr]$ appearing  on the RHS of   Eq. \eqref{eq2.26}.
  Here   $\pi_{\ck}$ is the   $\langle \cdot,\cdot \rangle$-orthogonal projection $\cG \to \ck$.\par

  For the special  $\chi$ relevant for us (cf. Sec. \ref{subsubsec2.3.1} below),
  these changes of variable can indeed be justified\footnote{by contrast
 the more radical change of  variable  $ \Omega^{-1}_{\cl} A^{\orth} \Omega_{\cl} + \Omega_{\cl}^{-1} d  \Omega_{\cl}
   \rightarrow  A^{\orth}$ is not admissible (at least not in the present form),
    cf. Remark \ref{rm_var_trans} below}, cf. Sec. 4.2 in \cite{Ha3c}.
       After performing  them  we finally obtain
\begin{multline} \label{eq2.27}
 \int_{\cA} \chi(A) DA
  \sim \sum_{\cl \in [\Sigma,G/T]}
 \int_{\cB} \biggl[ \int_{\cA^{\orth}}  \overline{\chi^{qax}}(
  A^{\orth} + A_{\sing}(\cl) + Bdt)
 DA^{\orth} \biggr] \\
 \times  \det\bigl(1_{\ck}-\exp(\ad(B))_{| \ck}\bigr) DB
\end{multline}
where we have set
\begin{equation} \label{eq2.28}
A_{\sing}(\cl) := \pi_{\ct}(\Omega_{\cl}^{-1} d\Omega_{\cl})
\end{equation}
where  $\pi_{\ct}$ is the   $\langle \cdot,\cdot \rangle$-orthogonal projection $\cG \to
\ct$.

\medskip

Observe that we can equally well work with the abstract gauge fixing
space $\overline{V} := A^{\orth} \times C^{\infty}(\Sigma,\ct_{reg})
\times [\Sigma,G/T]$ instead of $\overline{V} := A^{\orth} \times
\cB \times [\Sigma,G/T]$ in Step 1 above
where
\begin{equation}\ct_{reg} := \exp^{-1}(T_{reg})
\end{equation}
with  $T_{reg}:= T \cap G_{reg}$ ($\ct_{reg}$ is just the union of all the Weyl alcoves).

\medskip

 If we do so we  arrive
at\footnote{alternatively, one could try to ``derive'' Eq.
\eqref{eq2.29} directly form Eq. \eqref{eq2.27} by argueing that,
since $\det\bigl(1_{\ck}-\exp(\ad(b))_{| \ck}\bigr) = 0$ for every
$b \in \ct \backslash \ct_{reg}$ the set $C^{\infty}(\Sigma,\ct)
\backslash C^{\infty}(\Sigma,\ct_{reg})$ should be a
$\det\bigl(1_{\ck}-\exp(\ad(B))_{| \ck}\bigr) DB$-zero subset of
$\cB= C^{\infty}(\Sigma,\ct)$}
 \begin{multline} \label{eq2.29}
 \int_{\cA} \chi(A) DA
  \sim \sum_{\cl \in [\Sigma,G/T]}
 \int_{\cB} 1_{C^{\infty}(\Sigma,\ct_{reg})}(B) \biggl[ \int_{\cA^{\orth}}  \overline{\chi^{qax}}(
  A^{\orth} + A_{\sing}(\cl) + Bdt)
 DA^{\orth} \biggr] \\
 \times  \det\bigl(1_{\ck}-\exp(\ad(B))_{| \ck}\bigr) DB
\end{multline}
which will be slightly more convenient in Sec. \ref{subsubsec2.3.1}
below.

 \begin{remark}  \label{rm2.4}  \rm
Let $n: [\Sigma,G/T] \to \ct$  be given  by
$$n(\cl) :=
 \int_{\Sigma \backslash \{\sigma_0\}} dA_{\sing}(\cl) := \lim_{\eps \to 0} \int_{\Sigma \backslash  B_{\eps}(\sigma_0)} dA_{\sing}(\cl)$$
 where $ B_{\eps}(\sigma_0)$, $\eps >0$, denotes the
  closed $\eps$-ball around $\sigma_0$ w.r.t.  to an arbitrary fixed
  Riemannian metric ${\mathbf g}$ on $\Sigma$. One can show that $n$ is well-defined and
    independent of the special choice of
 ${\mathbf g}$, $\sigma_0$, $\bar{g}_{\cl}$,  and $\Omega_{\cl}$ involved\footnote{the independence of $\sigma_0$
 is not mentioned explicitly in \cite{Ha3c} but is obvious from the proof of Proposition 3.4 and
  Proposition 3.5  in \cite{Ha3c} }(see Proposition 3.4 and Proposition 3.5 in \cite{Ha3c}).
 Moreover, one can show
that $n$ is  a bijection from $[\Sigma,G/T]$
onto the lattice
\begin{equation} \label{eq2.30} I := \ker(\exp_{| \ct})
\end{equation}
(see Proposition 3.6 and Remark 3.2 in \cite{Ha3c}; cf. also Sec. 5 in  \cite{BlTh3}  and Example \ref{rm_SU2_Top} in part \ref{appB'_1} of the Appendix below).
\end{remark}

\subsection{Chern-Simons theory in the torus gauge}
\label{subsec2.3}

\subsubsection{Application of Eq. \eqref{eq2.29} to Chern-Simons theory}
\label{subsubsec2.3.1}

 From now on we will assume that the (connected) surface $\Sigma$ fixed above
 is oriented and compact. Let  $L=
((l_1, l_2, \ldots, l_m),(\rho_1,\rho_2,\ldots,\rho_m))$ be a
colored link in $M= \Sigma \times S^1$. We would like to apply Eq.
\eqref{eq2.29} to  the $\G$-invariant function $\chi:\cA \to \bC$
given by
\begin{equation} \label{eq2.31} \chi(A) = \prod_i  \Tr_{\rho_i}\bigl(\Hol_{l_i}(A)\bigr)
 \exp(iS_{CS}(A))
 \end{equation}
 Before we can do this we need to extend
 the $\G_{\Sigma}$-invariant
 function $\chi^{qax}:= \chi_{|\cA^{qax}}$ on $\cA^{qax}$
 to a  $\overline{\G_{\Sigma}}$-invariant
 function $\overline{\chi^{qax}}$ on $\overline{\cA^{qax}}$.
 We do this by extending each of   the  $\G_{\Sigma}$-invariant functions
 \begin{subequations}
\begin{align} \label{eq2.32}
  \Tr_{\rho_i}(\Hol_{l_i}):\cA^{qax} \ni  A^q  & \mapsto
   \Tr_{\rho_i}\bigl(\Hol_{l_i}(A^q)\bigr) \in \bC, \quad \quad i \le m\\
S_{CS}^{qax}:\cA^{qax} \ni  A^q  & \mapsto S_{CS}( A^q) \in \bC
\end{align}
\end{subequations}
separately
 to  $\overline{\G_{\Sigma}}$-invariant
 functions $\overline{\Tr_{\rho_i}(\Hol_{l_i})}$ and $\overline{S_{CS}^{qax}}$
   on $\overline{\cA^{qax}}$ and then setting
 \begin{equation} \label{eq2.34}
 \overline{\chi^{qax}}(A^q) := \prod_i  \overline{\Tr_{\rho_i}(\Hol_{l_i})}(A^q)
 \exp(i \overline{S_{CS}^{qax}}(A^q))
 \end{equation}
 for $A^q \in \overline{\cA^{qax}}$.
In the following we will assume that $\sigma_0$ does not lie on
 any of the loops $l^i_{\Sigma}:= \pi_{\Sigma} \circ l_i$ where $\pi_{\Sigma}:\Sigma \times S^1 \to \Sigma$ is the canonical projection.
More precisely, we will assume that
\begin{equation} \label{eq_ass_sigma0} \sigma_0 \notin \bigcup_{i=1}^m \arc(  l^i_{\Sigma}) \subset \Sigma
\end{equation}
Clearly, under this assumption  $\Hol_{l_i}(A^q)$
makes sense for arbitrary    $A^q \in \overline{\cA^{qax}}$
   and setting  $\overline{\Tr_{\rho_i}(\Hol_{l_i})}(A^q) :=
\Tr_{\rho_i}(\Hol_{l_i}(A^q))$ for all $A^q \in \overline{\cA^{qax}}$
we obtain a  $\overline{\G_{\Sigma}}$-invariant function on $\overline{\cA^{qax}}$.\par
Before we write down an  explicit expression for
 $\overline{S_{CS}^{qax}}$  we will
  first give a convenient explicit formula
 for the function $S_{CS}^{qax}$ on $\cA^{qax} = \cA^{\orth} \oplus C^{\infty}(\Sigma,\cG) dt$.
 We have
\begin{multline}\label{eq2.35} S_{CS}^{qax}(A^{\orth} + B dt ) = S_{CS}(A^{\orth} + B dt ) \\
= k\pi  \int_M \bigl[ \Tr(A^{\orth} \wedge dA^{\orth} ) +
2 \Tr( A^{\orth} \wedge  B dt \wedge A^{\orth}) + 2  \Tr(A^{\orth}
\wedge dB \wedge dt) \bigr]
\end{multline}
for all $A^{\orth} \in \cA^{\orth}$, $B \in C^{\infty}(\Sigma,\cG)$.
As in \cite{Ha3b,Ha4} we can rewrite  the RHS  of Eq.
\eqref{eq2.35} using the identification
\begin{equation} \label{eq2.36}
\cA^{\orth} \cong C^{\infty}(S^1,\cA_{\Sigma})
\end{equation}
 where $C^{\infty}(S^1,\cA_{\Sigma})$
denotes the space of all functions $\alpha:S^1 \to \cA_{\Sigma}$
which are ``smooth'' in the sense that for every smooth vector field
$X$ on $\Sigma$ the function $ \Sigma \times S^1 \ni (\sigma,t)
\mapsto \alpha(t)(X_{\sigma})$ is smooth.\par

Using this identification we have (cf. Proposition 5.2 in
\cite{Ha3b})
\begin{multline} \label{eq2.37}
S_{CS}^{qax}(A^{\orth} + B dt ) = S_{CS}(A^{\orth} + B dt )\\
 =  - k\pi  \int_{S^1} dt \int_{\Sigma} \biggl[ \Tr\bigl( A^{\orth}(t) \wedge
\bigl(\tfrac{\partial}{\partial t} + \ad(B) \bigr) A^{\orth}(t)
\bigr) - 2  \Tr\bigl( A^{\orth}(t) \wedge dB \bigr) \biggr]
\end{multline}
where $\tfrac{\partial}{\partial t}:C^{\infty}(S^1,\cA_{\Sigma}) \to
C^{\infty}(S^1,\cA_{\Sigma})$ is the obvious differential
operator.\par

Let us now introduce  $\overline{S_{CS}^{qax}}:  \overline{\cA^{qax}}
 \to \bC$ by
\begin{multline} \label{eq2.38}
\overline{S_{CS}^{qax}}(A^{\orth} + B dt )\\
 :=  - k\pi \biggl\{ \lim_{\eps \to 0}
  \int_{S^1} dt \int_{\Sigma \backslash B_{\eps}(\sigma_0)}
\biggl[ \Tr\bigl( A^{\orth}(t) \wedge \bigl(\tfrac{\partial}{\partial t} +
\ad(B) \bigr) A^{\orth}(t) \bigr) - 2  \Tr\bigl( dA^{\orth}(t) \cdot
B \bigr) \biggr] \biggr\}
\end{multline}
(with  $B_{\eps}(\sigma_0)$  as in Remark \ref{rm2.4} above)
for all $A^{\orth} \in \overline{\cA^{\orth}}$,  $B \in \overline{C^{\infty}}(\Sigma,\cG)$
for which the $\eps \to 0$-limit exists and by
$$\overline{S_{CS}^{qax}}(A^{\orth} + B dt ) := 0$$
otherwise.
 Observe that $\overline{S_{CS}^{qax}}$ is indeed an extension of $S_{CS}^{qax}$
since in the special case
where $A \in \cA^{\orth} $
and $B \in  C^{\infty}(\Sigma,\cG)$
Stokes' Theorem implies that for every $t \in S^1$ we have\footnote{here and above $dA^{\orth}(t)$ is a short notation for $d(A^{\orth}(t))$,
 i.e. ``$d$'' is the differential of $\Sigma$}
\begin{equation} \label{eq2.39}
  \int_{\Sigma \backslash \{\sigma_0\}}
  \Tr\bigl( dA^{\orth}(t) \cdot B \bigr) =
 \int_{\Sigma }
  \Tr\bigl( dA^{\orth}(t) \cdot B \bigr)
=  \int_{\Sigma }
  \Tr\bigl( A^{\orth}(t) \wedge dB \bigr)
\end{equation}
Moreover,   $\overline{S_{CS}^{qax}}$ is indeed a
$\overline{\G_{\Sigma}}$-invariant function, cf. Proposition 4.1 in
\cite{Ha3c}.
 We can therefore apply Eq. \eqref{eq2.29}  to the functions $\chi$
and $\overline{\chi^{qax}}$ given by Eq. \eqref{eq2.31} and Eq.
\eqref{eq2.34} with the choice Eq. \eqref{eq2.38} above and obtain
\begin{multline}  \label{eq2.41_pre}
 \WLO(L)  \sim \sum_{\cl \in [\Sigma,G/T]}  \int_{\cB} 1_{C^{\infty}(\Sigma,\ct_{reg})}(B)
\biggl[ \int_{\cA^{\orth}} \prod_i  \overline{\Tr_{\rho_i}( \Hol_{l_i}\bigr)}\bigl(A^{\orth}+ A_{\sing}(\cl) + Bdt\bigr) \\ \times   \exp(i  \overline{S^{qax}_{CS}}( A^{\orth} + A_{\sing}(\cl) + B dt)) DA^{\orth} \biggr]  \det\bigl(1_{\ck}-\exp(\ad(B))_{|\ck}\bigr) DB
\end{multline}
Here and in the following $\sim$ denotes equality up to a
  multiplicative  constant independent of $L$.

  \begin{remark} \label{rm_var_trans} \rm
 It is tempting to try to simplify Eq. \eqref{eq2.41_pre}
 by performing   the informal change of variable $A^{\orth} + A_{\sing}(\cl) \to A^{\orth}$
 in the $\int_{\cA^{\orth}} \cdots DA^{\orth}$ integral.
  However, it turns out that if we perform this change of variable directly
 in Eq. \eqref{eq2.41_pre} things go wrong.
 More precisely, the explicit  evaluation of the expression that we obtain by replacing  each appearance of
 $A^{\orth} + A_{\sing}(\cl)$    on the RHS of Eq. \eqref{eq2.41_pre}
   by $A^{\orth}$ leads to incorrect values for $\WLO(L)$.
On the other hand, it turns out that if we perform an analogous change of variable at a later stage
(after having rewritten Eq. \eqref{eq2.41_pre} above in a suitable way,
cf. Eq. \eqref{eq2.41simpl} below)  we do get the correct values for $\WLO(L)$.
\end{remark}

In order to get to the  aforementioned
 formula \eqref{eq2.41simpl} we  will use  the following identity\footnote{step $(*)$ holds
since by Stokes' theorem we have
 $\int_{\Sigma \backslash \{\sigma_0\}} d (A_{\sing}(\cl) \cdot B)
 = \lim_{\eps \to 0} \int_{\Sigma \backslash  B_{\eps}(\sigma_0)}
  d (A_{\sing}(\cl) \cdot B) =  \lim_{\eps \to 0} \int_{\partial B_{\eps}(\sigma_0)}
 (A_{\sing}(\cl) \cdot B) \overset{(+)}{=}
 \bigl(\lim_{\eps \to 0} \int_{\partial B_{\eps}(\sigma_0)}   A_{\sing}(\cl)\bigr) \cdot B(\sigma_0)
 =\bigl(\lim_{\eps \to 0} \int_{\Sigma \backslash  B_{\eps}(\sigma_0)} dA_{\sing}(\cl) \bigr) \cdot B(\sigma_0)
 =  n(\cl) \cdot B(\sigma_0)$.
 Here step $(+)$ holds for all $B$ provided that  $A_{\sing}(\cl)$ was chosen to be sufficiently regular.
Moreover, if $B$ is locally constant around $\sigma_0$ (which is the only case relevant for us, cf. part \ref{appB'_3}
of the appendix) then step $(+)$ holds for an arbitrary choice of $A_{\sing}(\cl)$}
 \begin{align} \label{eq_identity_ncl} \int_{\Sigma \backslash \{\sigma_0\}} \Tr\bigl(dA_{\sing}(\cl) \cdot B\bigr)
  & =\int_{\Sigma \backslash \{\sigma_0\}} \Tr\bigl(d (A_{\sing}(\cl) \cdot B)\bigr)
   + \int_{\Sigma \backslash \{\sigma_0\}} \Tr (A_{\sing}(\cl) \wedge dB) \nonumber \\
& \overset{(*)}{=} \Tr\bigl( n(\cl) \cdot B(\sigma_0)) + \int_{\Sigma \backslash \{\sigma_0\}} \Tr (A_{\sing}(\cl) \wedge dB)\bigr)
 \end{align}
where $ n(\cl) = \int_{\Sigma \backslash \{\sigma_0\}} dA_{\sing}(\cl)
       =  \lim_{\eps \to 0} \int_{\Sigma \backslash  B_{\eps}(\sigma_0)} dA_{\sing}(\cl)$
  (cf. Remark \ref{rm2.4} above).\par

Using \eqref{eq_identity_ncl}
and  taking into account that
\begin{equation} \label{eq2.40} \overline{S_{CS}^{qax}}(A^{\orth}+ A_{\sing}(\cl)+ Bdt) =
S_{CS}^{qax}(A^{\orth}+Bdt) + 2 \pi k \int_{\Sigma \backslash
\{\sigma_0\}} \Tr\bigl(dA_{\sing}(\cl) \cdot B\bigr)
\end{equation}
 we see that we  can rewrite Eq. \eqref{eq2.41_pre} above as
    \begin{multline}  \label{eq2.41'} \WLO(L)
 \sim \sum_{\cl \in [\Sigma,G/T]}  \int_{\cB} 1_{C^{\infty}(\Sigma,\ct_{reg})}(B) \times \\
 \times \biggl[ \int_{\cA^{\orth}} \prod_i
  \overline{\Tr_{\rho_i}( \Hol_{l_i}\bigr)}\bigl(A^{\orth}+ A_{\sing}(\cl) + Bdt\bigr)
  \exp(i  \bar{S}^{qax}_{CS}( A^{\orth} + A_{\sing}(\cl) + B dt)) DA^{\orth} \biggr] \\
 \times \exp\bigl( 2\pi i k  \Tr(n(\cl) \cdot B(\sigma_0) ) \bigr)
  \det\bigl(1_{\ck}-\exp(\ad(B))_{|\ck}\bigr) DB
\end{multline}
 where  $\bar{S}^{qax}_{CS}(A^{\orth} + A_{\sing}(\cl) + B dt)$
 is given in an analogous way\footnote{ie on the RHS of \eqref{eq2.37} we replace $A^{\orth}(t)$ by $A^{\orth}(t) + A_{\sing}(\cl)$  and we replace the integral $\int_{\Sigma} \cdots$ by
  $\lim_{\eps \to 0}  \int_{\Sigma \backslash B_{\eps}(\sigma_0)} \cdots$}
  as in  Eq. \eqref{eq2.37} above.
 If we now perform in Eq. \eqref{eq2.41'} above
   the informal change of variable $A^{\orth} + A_{\sing}(\cl) \to A^{\orth}$
  in the $\int_{\cA^{\orth}} \cdots DA^{\orth}$ integral we obtain
\begin{multline}  \label{eq2.41simpl} \WLO(L)
 \sim \sum_{\cl \in [\Sigma,G/T]}  \int_{\cB} 1_{C^{\infty}(\Sigma,\ct_{reg})}(B) \times \\
 \times \biggl[ \int_{\cA^{\orth}} \prod_i  \Tr_{\rho_i}\bigl(
 \Hol_{l_i}(A^{\orth} + Bdt)\bigr)
  \exp(i  S_{CS}( A^{\orth} + B dt)) DA^{\orth} \biggr] \\
 \times \exp\bigl( 2\pi i k  \Tr(n(\cl) \cdot B(\sigma_0) ) \bigr)
  \det\bigl(1_{\ck}-\exp(\ad(B))_{|\ck}\bigr) DB
\end{multline}
(Here we use again the notation $S_{CS}( A^{\orth} + B dt)$ instead of $\bar{S}^{qax}_{CS}( A^{\orth} + B dt)$
and $\Tr_{\rho_i}(\Hol_{l_i}(\cdot))$ instead of $\overline{\Tr_{\rho_i}(\Hol_{l_i})}$).

\smallskip

 Let us emphasize that the derivation of Eq. \eqref{eq2.41simpl} above was
  rather sloppy. In particular, we did not specify explicitly the domain of the function $\bar{S}^{qax}_{CS}(\cdot)$.
  Moreover, we did not give any explanation why now the informal change of variable $A^{\orth} + A_{\sing}(\cl) \to A^{\orth}$ will ``work'' while above (cf. Remark \ref{rm_var_trans}) it didn't.
 We will clarify this issue in part \ref{appB'_2} of the Appendix below where we will
  give a  careful derivation of Eq. \eqref{eq2.41simpl}.

 \begin{remark} \label{rm_loc_constant} \rm
  In view of the discussion  in Sec. \ref{subsec4.10}  below let us mention
 that  from  part \ref{appB'_3} of the Appendix below
 it follows  (on a heuristic level) that the integral
 $$\int_{\cA^{\orth}} \prod_i  \Tr_{\rho_i}\bigl( \Hol_{l_i}(A^{\orth} + Bdt)\bigr)
  \exp(i  S_{CS}( A^{\orth} + B dt)) DA^{\orth}$$
  vanishes for every  $B \in \cB$ which is not locally constant around $\sigma_0$.
  Accordingly, we can replace the space $\cB$  appearing in the outer integral $\int_{\cB} \cdots DB$
  in Eq. \eqref{eq2.41simpl} above and in Eq. \eqref{eq2.48} below
   by the space
 $$\cB^{loc}_{\sigma_0}:= \{ B \in \cB \mid B \text{ is locally constant around $\sigma_0$} \}$$
  \end{remark}

\subsubsection{Rewriting $S_{CS}(A^{\orth} + B dt)$}
\label{subsubsec2.3.2}

Let us rewrite $S_{CS}(A^{\orth} + B dt)$ yet another time. In order
to do so we fix an auxiliary Riemannian metric ${\mathbf g}$ on
$\Sigma$.  By $\star$ we will denote both the Hodge star operator on
$\cA_{\Sigma}$ induced by  ${\mathbf g}$
and the linear isomorphism on $\cA^{\orth} \cong
C^{\infty}(S^1,\cA_{\Sigma})$ given by $(\star A^{\orth})(t) = \star
(A^{\orth}(t))$ for all $t \in S^1$.
By $\ll \cdot, \cdot
\gg_{\cA^{\orth}}$ we will denote the scalar product on $\cA^{\orth} \cong
C^{\infty}(S^1,\cA_{\Sigma})$ given by
\begin{equation} \label{eq2.43} \ll A^{\orth}_1,  A^{\orth}_2 \gg_{\cA^{\orth}} =  \int_{S^1} dt
  \int_{\Sigma} ( A^{\orth}_1(t),  A^{\orth}_2(t))_{\cA_{\Sigma}} d\mu_{\mathbf g}
\end{equation}
where $d\mu_{\mathbf g}$ is the volume measure  on $\Sigma$ associated to $\mathbf g$ and  $(\cdot,\cdot)_{\cA_{\Sigma}}: \cA_{\Sigma}
\times  \cA_{\Sigma} \to C^{\infty}(\Sigma,\bR)$ is the obvious
bilinear map induced by $\mathbf g$
and $\langle \cdot, \cdot \rangle_{\cG}$.\par

  Eq. \eqref{eq2.37} can then be rewritten as  (cf.\footnote{we remark that in  \cite{Ha4} we use a different sign convention for the $\star$ operator} Sec. 3.3 in \cite{Ha4})
   \begin{align} \label{eq2.42}
S_{CS}(A^{\orth} + B dt ) & =  k \pi  \bigl[ \ll A^{\orth},
 \star  \bigl(\tfrac{\partial}{\partial t} + \ad(B) \bigr) A^{\orth} \gg_{\cA^{\orth}}
 + 2 \ll  \star A^{\orth}, dB \gg_{\cA^{\orth}} \bigr],
\end{align}
We remark that even though  Eq. \eqref{eq2.42}  is  less natural
than Eq. \eqref{eq2.37} above (since it depends on the
auxiliary  Riemannian metric  ${\mathbf g}$)
 it will be more convenient for our purposes,
 cf. the paragraph after Eqs \eqref{eq2.49}  below and
 also Sec. \ref{subsec4.2} below where a ``simplicial'' analogue of
  $S_{CS}(A^{\orth} + B dt)$ is  introduced.

\subsubsection{The final heuristic formula}
\label{subsubsec2.3.3}

Clearly,  the operator $ \tfrac{\partial}{\partial t} +
\ad(B):C^{\infty}(S^1,\cA_{\Sigma}) \to
C^{\infty}(S^1,\cA_{\Sigma})$ is not injective. For $B \in
C^{\infty}(\Sigma,\ct_{reg})$  the kernel of this operator is given
by
\begin{equation} \label{eq2.44} \cA_{c}^{\orth}   :=
 \{ A^{\orth} \in  C^{\infty}(S^1,\cA_{\Sigma}) \mid A^{\orth}
   \text{ is constant and }\cA_{\Sigma,\ct}\text{-valued}\} \cong  \cA_{\Sigma,\ct}
\end{equation}
 For making rigorous sense of the RHS  of Eq. \eqref{eq2.41simpl}
 it is
 useful to work with a  decomposition $\cA^{\orth} = \cC \oplus
\cA^{\orth}_c$ for a suitably chosen linear subspace $\cC$ of
$\cA^{\orth}$ and to incorporate this decomposition into Eq.
\eqref{eq2.41simpl}. For technical reasons
  we worked in \cite{Ha3b,Ha3c,Ha4,Ha6} with the choice
 $\cC:= \hat{\cA}^{\orth}$ where
\begin{equation}\label{eq2.45} \hat{\cA}^{\orth}  := \{ A^{\orth} \in C^{\infty}(S^1,\cA_{\Sigma})
 \mid A^{\orth}(t_0) \in \cA_{\Sigma,\ck}\}
 \end{equation}
where $t_0$ is an arbitrary but fixed point in $S^1$.
 It would also
have been possible to work with  $\cC:= \Check{\cA}^{\orth}$ where
\begin{equation} \label{eq2.46}
\Check{\cA}^{\orth}  := \{ A^{\orth} \in
C^{\infty}(S^1,\cA_{\Sigma})
 \mid  \int_{S^1} A^{\orth}(t) dt \in \cA_{\Sigma,\ck}\}
\end{equation}
In fact, the latter choice is in some sense more natural and  has
the important advantage of possessing a ``good'' simplicial analogue, cf. the second of the two equations in
\eqref{eq4.31} below. In the present paper we will therefore work
with the latter choice. Taking into account that
\begin{equation} \label{eq2.47} S_{CS}(\Check{A}^{\orth} + A^{\orth}_c + B dt)
= S_{CS}(\Check{A}^{\orth}  + B dt) + S_{CS}(A^{\orth}_c + B dt)
\end{equation}
for $\Check{A}^{\orth} \in \Check{\cA}^{\orth}$, $A^{\orth}_c \in \cA^{\orth}_c $, and $B \in \cB$
and the fact that the map $n:[\Sigma,G/T] \to I = \ker(\exp_{| \ct})$
appearing in Remark \ref{rm2.4} above is a bijection
we arrive, after incorporating the decomposition
 $\cA^{\orth} = \Check{\cA}^{\orth}  \oplus \cA^{\orth}_c$ into Eq. \eqref{eq2.41simpl}, at
\begin{multline}  \label{eq2.48} \WLO(L)
 \sim \sum_{y \in I}  \int_{\cA^{\orth}_c \times \cB} \biggl\{
 1_{C^{\infty}(\Sigma,\ct_{reg})}(B)  \Det_{FP}(B)\\
 \times   \biggl[ \int_{\Check{\cA}^{\orth}} \prod_i  \Tr_{\rho_i}\bigl(
 \Hol_{l_i}(\Check{A}^{\orth} + A^{\orth}_c ,B)\bigr)
  \exp(i  S_{CS}( \Check{A}^{\orth}, B)) D\Check{A}^{\orth} \biggr] \\
 \times \exp\bigl( - 2\pi i k  \langle y,  B(\sigma_0)\rangle \bigr) \biggr\}
 \exp(i S_{CS}(A^{\orth}_c, B)) (DA^{\orth}_c \otimes DB)
\end{multline}
where we have used Eq. \eqref{eq_Tr_scalar}
and where,  as a preparation for Sec. \ref{sec4}, we have introduced the
short notation
\begin{subequations} \label{eq2.49}
\begin{align}
S_{CS}(A^{\orth},B) & := S_{CS}(A^{\orth} + B dt ) \\
\Det_{FP}(B) &:=    \det\bigl(1_{\ck}-\exp(\ad(B))_{|
\ck}\bigr)\\
\label{eq2.49c} \Hol_{l_i}(A^{\orth},   B) & := \Hol_{l_i}(A^{\orth} +  B dt)
\end{align}
\end{subequations}

 From the explicit formula Eq. \eqref{eq2.42} for $S_{CS}(A^{\orth} + B dt )$ it
follows immediately that both (heuristic) complex measures $\exp(i
S_{CS}( \Check{A}^{\orth}, B)) D\Check{A}^{\orth}$ and $\exp(i
S_{CS}(A^{\orth}_c, B)) (DA^{\orth}_c \otimes DB)$ appearing above
are of ``Gaussian type''. This considerably increases the chances of
making rigorous sense of the RHS  of Eq. \eqref{eq2.48}.\par
 In fact, in \cite{Ha3b,Ha4,Ha6} we have  already sketched how this works
in the framework of white noise analysis.
In \cite{Ha3b,Ha4,Ha6,HaHa} we also  demonstrated that in the special
case where the link $L$ has no double points and  ``horizontal''
framing is used (cf. Sec. 5.2 in \cite{Ha4}) the rigorous
realization of $\WLO(L)$
 can be evaluated explicitly and that
 \begin{equation} \label{eq2.50}
\WLO(L) \sim |L|
\end{equation}
holds where $|\cdot|$ is  Turaev's shadow invariant associated to   $\cG$ and $k$
 (cf.  Remark \ref{rm3.4} below and part B of  the Appendix in \cite{Ha7b}) and where  $\sim$ denotes equality up to a  multiplicative constant independent of $L$. \par

 In the present paper  we will develop an alternative
  approach for making rigorous sense of Eq. \eqref{eq2.50},
   which is based on a suitable ``discretization'' of the RHS  of
Eq. \eqref{eq2.48}.

\subsubsection{Some Remarks}
\label{subsubsec2.3.4}

\begin{remark}   \rm
  By generalizing the heuristic arguments above in an
 obvious way  Eq. \eqref{eq2.48}  can also be ``derived'' for a general
simply-connected compact $G= \prod_{i=1}^r G_i $ (and a fixed r-tuple
$k=(k_i)_{i \le r}$),  cf. Remark \ref{rm2.1} above.
In this case  the first equation in
\eqref{eq2.49} has to be replaced by
$S_{CS}(A^{\orth},B)  := S_{CS}(M,G,(k_i)_i)(A^{\orth},B)$.
\end{remark}

\begin{remark} \label{rm2.5pre}  \rm We could rewrite
  formula \eqref{eq2.48}  by applying  (at an informal level) the Poisson summation formula
 but for stylistic reasons we prefer to use the Poisson summation formula  only  at a
 later stage (at a rigorous level) in the proof of Theorem \ref{main_theorem} below, see \cite{Ha7b}.
\end{remark}

\begin{remark} \label{rm2.5}  \rm
 In the special case where $L$ consists of $m$ ``vertical''\footnote{cf. Sec. \ref{subsubsec2.2.3} above and see also
Remark \ref{rm3.3} below; we remark also that  the case $m=0$ corresponds to the situation
$L= \emptyset$, ie $L$ is the empty link, cf. Sec. \ref{subsec4.9} below}
 loops $l_i = l_{\sigma_i}$  ``above'' the fixed points $\sigma_i$ in $\Sigma$
  we have $\Hol_{l_i}(A^{\orth} +  B dt) = B(\sigma_i)$. In particular, the inner integral in
  Eq. \eqref{eq2.48} is then trivial
  and we can perform this trivial integration right away at a heuristic level.
  By doing so and using the notation $Z(\Sigma \times S^1,(\sigma_i)_i,(\rho_i)_i)$ instead of $\WLO(L) $
  we then obtain
 \begin{multline} \label{eq2.48_simpl}
Z(\Sigma \times S^1,(\sigma_i)_i,(\rho_i)_i) \sim \sum_{y \in I}  \int_{\cA^{\orth}_c \times \cB} \biggl\{
 1_{C^{\infty}(\Sigma,\ct_{reg})}(B)  \Det_{FP}(B) Z(B)  \bigl( \prod_{i=1}^m  \Tr_{\rho_i}\bigl(\exp(B(\sigma_i))\bigr) \bigr) \\
 \times   \exp\bigl( - 2\pi i k  \langle y,  B(\sigma_0)\rangle \bigr) \biggr\}
 \exp\bigl(i S_{CS}(A^{\orth}_c, B)\bigr) (DA^{\orth}_c \otimes DB)
 \end{multline}
 where $Z(B) := \int  \exp(i  S_{CS}( \Check{A}^{\orth}, B)) D\Check{A}^{\orth}$.
 (Here and throughout the present remark ``$\sim$'' denotes equality up to a multiplicative constant
 which depends only on $G$, $\Sigma$, and $k$ but not on  $(\sigma_i)_i$ or $(\rho_i)_i$.)\par

 As explained in Sec. 3 in \cite{BlTh1} in the special case where $B$ is constant and taking values
 in $\ct_{reg}$  (which is the only case relevant in the present situation)
 the expression $\Det_{FP}(B) Z(B)$ can be interpreted as the Ray-Singer torsion associated
 to  any fixed Riemannian metric on $\Sigma \times S^1$
 and  the  1-form $B dt$ on $M = \Sigma \times S^1$ (considered\footnote{observe
 that $B dt$ induces a (flat) connection 1-form on the trivial principle fiber bundle $P=M \times T$
 which in turn induces a flat connection on the associated vector bundle $P \times_{\rho} \ck \cong M \times \ck = E$
where $\rho:T \to \Aut(\ck)$ is the restriction of $\Ad_{|T}$ to  $\ck$.
  Recall that $\ck$ is the $\langle \cdot,\cdot \rangle$-orthogonal complement of $\ct$ in $\cG$,
  which is a $\Ad_{|T}$-invariant subspace of $\cG$} as a flat connection in the trivial vector bundle
 $E=M \times \ck$). This  Ray-Singer torsion can be evaluated explicitly and we then obtain
  \begin{equation} \label{eq_RaySinger} \Det_{FP}(B) Z(B) = \det(1_{\ck}-\exp(\ad(b))_{| \ck})^{\chi(\Sigma)/2}
 \end{equation}
where $b \in \ct_{reg}$ is the unique value of $B$ and
 where $\chi(\Sigma)$ is the Euler characteristic of $\Sigma$.
  In view of the last formula
  and the relation  $S_{CS}(A^{\orth}_c, B) = 2 \pi k  \int_{\Sigma }   \Tr\bigl(B \cdot dA_c^{\orth}  \bigr)$
  it is clear  that  Eq. \eqref{eq2.48_simpl} is closely related\footnote{observe that even though both formulas
  look quite similar and give rise to the same values of $\WLO(L)$ (cf. Eq. \eqref{eq_Gen_Verlinde} below)
there are some differences.
  For example, in Eq. \eqref{eq2.48_simpl} we have a sum $\sum_{y \in I}$,
   a  factor $ \exp(- 2\pi i k  \langle y, B(\sigma_0)\rangle)$, and the integration
   $\int \cdots DA^{\orth}_c$
       while in  (the generalization of) formula (7.9) in  \cite{BlTh1}
 we have  a sum $\sum_{\lambda \in \Lambda}$ over the weight lattice $\Lambda$,
 a factor ``$ \exp\bigl( -  \int_{\Sigma} tr(\lambda \cdot F)\bigr)$'',
 and the integration    $\int \cdots DF$ where $F$  runs over the space of all  2-forms
 on $\Sigma$. We remark also that in contrast to Eq. \eqref{eq2.48_simpl}, which
 is a special case of Eq. \eqref{eq2.48} for general links $L$,
   formula (7.9) in  \cite{BlTh1} does not seem to have a natural generalization
         to the situation of general links $L$} to the  formula (7.9) in \cite{BlTh1},  or rather, the obvious generalization/modification of (7.9) in \cite{BlTh1}
which one obtains after including the analogue of the factor $\prod_{i=1}^m  \Tr_{\rho_i}\bigl(\exp(B(\sigma_i))\bigr)$ appearing above, replacing\footnote{here $h$ is the notation in \cite{BlTh1} for the
 the dual Coxeter number of $\cG$ (which we denote by $\cg$). We refer to Remark \ref{rm_shift_in_k} below for a comment on the  replacement $k+h \to k$}
 ``$k+h$'' by $k$
and replacing the group $G=SU(n)$ by a general simple simply-connected Lie group.

\smallskip

 Formula (7.9) in \cite{BlTh1} was evaluated at a heuristic level leading to
 Eq. \eqref{eq_Gen_Verlinde} below.
 We can obtain Eq. \eqref{eq_Gen_Verlinde} below also from  Eq. \eqref{eq2.48_simpl}
 using essentially the same heuristic arguments as in \cite{BlTh1} (but in a slightly different order):\par
 First we integrate out the variable $A^{\orth}_c$. Since the only term in Eq. \eqref{eq2.48_simpl}
 depending on $A^{\orth}_c$ is the factor
 $ \exp\bigl(i S_{CS}(A^{\orth}_c, B)\bigr) =  \exp\bigl( 2 \pi i k  \ll  \star A^{\orth}_c, dB \gg_{\cA^{\orth}}
 \bigr) $
 we obtain, informally, a delta function $\delta(dB)$.
 In view of this delta-function the $\int \cdots DB$-integral  can be replaced
 by an integral over the subspace $\cB_{c} := \{ B \in \cB \mid B \text{ is constant }\} \cong \ct$.
From this and Eq. \eqref{eq_RaySinger} above
 we therefore obtain at a heuristic level
  \begin{multline}
Z(\Sigma \times S^1,(\sigma_i)_i,(\rho_i)_i) \sim \sum_{y \in I}  \int_{\ct} \bigl\{ 1_{\ct_{reg}}(b)  \det(1_{\ck}-\exp(\ad(b))_{| \ck})^{\chi(\Sigma)/2}   \\ \times \bigl( \prod_{i=1}^m  \Tr_{\rho_i}\bigl(\exp(b)\bigr) \bigr)  \exp\bigl( - 2\pi i k  \langle y, b \rangle \bigr) \bigr\} db
 \end{multline}
Using the Poisson summation formula (at an informal level) we arrive at
 \begin{multline}
Z(\Sigma \times S^1,(\sigma_i)_i,(\rho_i)_i) \\
\sim \sum_{\lambda \in \Lambda} \bigl\{ 1_{\ct_{reg}}(\lambda/k)  \det(1_{\ck}-\exp(\ad(\lambda/k)_{| \ck})^{\chi(\Sigma)/2}   \bigl( \prod_{i=1}^m  \Tr_{\rho_i}\bigl(\exp(\lambda/k)\bigr) \bigr)  \bigr\}
 \end{multline}
 where $\Lambda$ is the weight lattice of $(\cG,\ct)$, i.e. the lattice dual to $I$.
 Using Weyl's character formula\footnote{which implies that  $\Tr_{\rho_i}(e^{(\lambda + \rho)/k}) = \tfrac{S_{\lambda \mu_i}}{S_{\lambda 0}}$ for $\lambda \in  \Lambda_+^k$}, the relation
 $\det(1_{\ck}-\exp(\ad((\lambda+\rho)/k))_{| \ck}) \sim (S_{\lambda 0})^2$
where $\rho \in \Lambda$ is the    ``Weyl vector''\footnote{ ie. the half sum of positive roots of  $\cG$
   w.r.t.  $\ct$ and the fixed Weyl chamber mentioned after Eq. \eqref{eq_Gen_Verlinde} below},
 and a suitable invariance argument based on the affine Weyl group we eventually obtain
 \begin{equation} \label{eq_Gen_Verlinde}
Z(\Sigma \times S^1,(\sigma_i)_i,(\rho_i)_i) \sim \sum_{\lambda \in \Lambda_+^k} \bigl(\prod_{i=1}^m \tfrac{ S_{\lambda \mu_i}}{S_{\lambda0}}\bigr) (S_{\lambda0})^{\chi(\Sigma)}
\end{equation}
 where  $\Lambda_+^k \subset \Lambda$  is the set of ``dominant weights of $\cG$
   (w.r.t. $\ct$ and a fixed Weyl chamber) which are integrable at level $ k - \cg$''
    where $\cg$ is the dual Coxeter number of $\cG$.
  Moreover,  $(S_{\mu \nu})_{\mu, \nu \in \Lambda_+^k}$
   is the  ``$S$-matrix'' in our situation\footnote{more precisely, the
   $S$-matrix of the WZW model associated to $G$    and the level $k - \cg$
     or, equivalently, the $S$-matrix of the modular category associated
   to  $U_q(\cG_{\bC})$    with $q:= \exp( \tfrac{2 \pi i}{k})$, cf. Sec. 1.4 in Chap. II in \cite{turaev}}.
  Finally,  $\mu_i$ is the highest weight of $\rho_i$.

   \smallskip

In the special case where $\Sigma \cong S^2$ and $m=3$ we obtain
$$Z(S^2 \times S^1,(\sigma_1,\sigma_2,\sigma_3),(\rho_1,\rho_2,\rho_3)) \sim \sum_{\lambda \in \Lambda_+^k} \frac{ S_{\lambda \mu_1}  S_{\lambda \mu_2}  S_{\lambda \mu_3} }{S_{\lambda0}}  = N_{\mu_1 \mu_2 \mu_3}$$
 where $N_{\mu_1 \mu_2 \mu_3}$ is the so-called
``Verlinde number'' associated to $(\mu_1,\mu_2,\mu_3)$.

\smallskip

As we will see later, using the approach of the present paper we can obtain  formula \eqref{eq_Gen_Verlinde}
in a rigorous (and elementary\footnote{in particular, there is no need to involve any arguments
based on the Ray-Singer torsion})
way  from the rigorous version of Eq. \eqref{eq2.48} which we will introduce below,
 cf. Sec. \ref{subsec4.9} below (cf. also Sec. \ref{subsec4.10} and Remark \ref{rm_more_general_links} below).
In fact we will evaluate the rigorous version of Eq. \eqref{eq2.48}   for a
considerably larger class of (ribbon) links, cf. Theorem \ref{main_theorem} and Remark \ref{rm_more_general_links} below.

\end{remark}

\section{The general simplicial program for CS theory \& $BF_3$-theory}
\label{sec3}

\subsection{Overview}
\label{subsec3.1}

  The goal of  what we will call\footnote{following the terminology introduced in \cite{Mn2} in the context of BF-theory} the ``simplicial program'' for   Chern-Simons theory
  is to find a  rigorous ``simplicial''  realization
of the original (or gauge fixed) Chern-Simons path integral  for the WLOs associated
to arbitrary (colored) framed (or ribbon) links in arbitrary  oriented closed  3-manifolds $M$
such that the  values of these simplicial realizations   coincide with
 the values of the corresponding Reshetikhin-Turaev invariants (see  \cite{Ati}). \par

 More concretely, if $G$ is a (simple) simply-connected compact Lie group,
  $k \in \bN$,  $M$ is a general oriented closed  3-manifold,   $L$ a colored framed (or ribbon) link in $M$, and
 if $RT_{q}(M,L)$ denotes the   Reshetikhin-Turaev invariant of $L$
  associated to  $U_q(\cG_{\bC})$  with    $q=\exp(2\pi i/k)$ or\footnote{cf. Remark \ref{rm_shift_in_k} below} $q=\exp(2\pi i/(k+\cg)$)
 then we want that the simplicial  version $\WLO_{rig}(L)$ of $\WLO(L)$
coincides with  $RT_{q}(M,L)$.  The definition of $\WLO_{rig}(L)$
is supposed to involve only  a finite triangulation (or, more generally,
a finite polyhedral cell decomposition) of $M$.\par

In the present paper we will mainly consider a weaker version of the simplicial program
where instead of demanding $RT_{q}(M,L) = \WLO_{rig}(L)$
we  demand only that there is a constant $C \in \bC$ depending on $G$, $M$, and $k$ but not on the framed link $L$
such that we have
\begin{equation} \label{eq_RT=WLO} RT_{q}(M,L) = C \cdot \WLO_{rig}(L)
\end{equation}

\begin{remark}  \label{ex_sec3.2} \rm
The main result of the present paper is a partial result in this direction\footnote{other partial results
are given in \cite{Ad0,Ad1} for Abelian groups $G$ and
in \cite{FrLo1,FrLo2,BaNai}  for the  non-Abelian non-compact  group $G= Euc(3)^c$, ie the (universal cover of the) Euclidean group  in $3$ dimensions where the Lie algebra $\cG$ of $G$ is equipped with a suitable
non-degenerate bilinear form. We remark that in  the  case  $G= Euc(3)^c$ the CS path integral is
formally equivalent to the $BF_3$-path integral with structure group $SU(2)$  and {\em vanishing}
cosmological constant, see the third paragraph after Remark  \ref{ex_sec3.2}}.
This result is concerned with the  special case $M=\Sigma \times S^1$ (where $\Sigma$ is a closed oriented surface)
and  we consider the torus  gauge-fixed version of the CS path integral
derived in Sec. \ref{subsec2.3} above, cf. Eq. \eqref{eq2.48} above.
 In Sec. \ref{sec4} below we construct a  simplicial version of the RHS of Eq. \eqref{eq2.48},
 ie we construct a simplicial version
$\WLO_{rig}(L)$ of $\WLO(L)$.
Theorem \ref{main_theorem} below shows that for the simple class of framed links considered in Sec. \ref{sec6} we have
$\WLO_{rig}(L) =  |L|/|\emptyset|$ where $|\cdot|$ is the shadow invariant for
  $M =\Sigma \times S^1$,  $\cG$ and $k$.
We remark that   $|L| = C' \cdot RT_q(\Sigma \times S^1,L)$ with $q=\exp(2\pi i/k)$
where $C' \in \bC$ is a constant which depends only on $G$, $\Sigma$, and $k$ but not on $L$
so in this special situation we have indeed  Eq. \eqref{eq_RT=WLO}
with $C:= |\emptyset|/C'$.
\end{remark}

 Instead of working with a simple simply-connected compact Lie group $G$
we can also work with general simply-connected compact Lie group $G$, cf.  Remark \ref{rm2.1} above.
In fact, as we will see later (cf. Sec. 6 and Sec. 7 in \cite{Ha7b}),
if we want to have a realistic chance of being able to generalize the aforementioned
Theorem \ref{main_theorem} to the case of general links
we will probably have to restrict ourselves to the special case where
$G$ is of the form  $G= \prod_{i=1}^r G_i$  where $r$ is even
and, moreover (up to reordering) we have $G_{2j-1} \cong G_{2j}$
and $k_{2j-1} = - k_{2j}$ for all $j \le  r/2$.
(Here $(k_i)_{i \le  r}$ is as in Remark \ref{rm2.1}).
We will consider this situation (for $r=2$) in Sec. 7 in \cite{Ha7b}. \par

In the aforementioned situation $r=2$ and $\tilde{G}:= G_1 = G_2$ and $k:= k_1 = -k_2$
the corresponding Reshetikhin-Tureav invariant coincides with the Turaev-Viro invariant associated to $U_{q}(\tilde{\cG})$.
Moreover, in this case
the (heuristic) CS path integral  is equivalent to the (heuristic)  $BF_3$ path integral with group $\tilde{G}$
and positive cosmological constant $\Lambda$ given by $\Lambda = 1/k^2$, cf. Appendix C in \cite{Ha7b}
and see also \cite{CCFM}.
Thus the  simplicial program for CS-theory (for this restricted set of groups $G$)
 can be considered as a sub program of the simplicial program for $BF_3$-theory. \par

We mention here that for Abelian structure groups the
 simplicial program for $BF_3$-theory has been essentially completed
 in  \cite{Ad0,Ad1}
 and for  $BF_3$-theory with structure group $SU(2)$  and {\em vanishing}
cosmological constant in \cite{FrLo1,FrLo2,BaNai} (cf. also \cite{Mn2}).
The explicit expression in \cite{FrLo1,FrLo2,BaNai} for the simplicial partition functions
are given in terms of the  Reidemeister torsion of $M$
and the WLOs in terms of linking numbers and the   Alexander polynomial.
We remark that the approaches in \cite{Ad0,Ad1} and \cite{FrLo1,FrLo2,BaNai}
work for a rather large (but not totally general\footnote{in \cite{Ad0}
$M$ must be a homology sphere, in \cite{FrLo2,BaNai} $M$ must fulfill a certain condition
involving the  twisted second cohomology group  })
 class of 3-manifolds $M$.

 \smallskip

The case of non-Abelian $BF_3$-theory with strictly positive (or negative) cosmological constant
is  open.  We emphasize that it is only the case of non-vanishing  cosmological constant
 which is related to the theory of  3-manifold quantum invariants.

\begin{remark} \label{rm_shift_in_k} \rm
  It was predicted in \cite{Wi} that  the value of  $\WLO(L)$ for a (colored framed)
link $L$ in $M$  (and $G$ and  $k$ as above)
is given by\footnote{we remark  that even though the rigorous version of the Reshetikhin-Turaev invariants appeared only in \cite{ReTu2,ReTu1}, Witten had already given in \cite{Wi} an algorithm for computing these invariants using arguments from conformal field theory}
 $RT_q(M,L)$ where $q= \exp(\tfrac{ 2 \pi
i}{k + \cg})$ rather than  $q= \exp(\tfrac{ 2 \pi
i}{k})$. Here $\cg$ is the dual Coxeter number  of $\cG$.
   This ``shift'' $k \to k + \cg$ was later confirmed by some authors using several different methods for the evaluation of the path integral.   However, many other authors using other gauges/regularization procedures
   did not find such a shift, see  \cite{GuMaMi,BlCo} and the references there.
   Apparently this finding caused some confusion at the time
   but if I understood it correctly it is now generally accepted
   that the occurrence (and magnitude) of such a  shift in $k$
 will depend on the regularization procedure/renormalization prescription, as suggested
 already in \cite{GuMaMi} (cf.  in particular p. 599 in Sec. 5 in \cite{GuMaMi}).

\end{remark}

\subsection{Potential Applications}
\label{subsec3.2}

 If the simplicial program for non-Abelian CS theory with simply-connected compact group $G$
 (or the related simplicial program for   non-Abelian $BF_3$-theory
 and simply-connected compact structure group  and  positive cosmological constant)
  can be carried out successfully this would probably lead to some progress
  regarding  Problem 3 of Sec. \ref{intro},   as we will now explain.

  \smallskip

Let us assume that for every closed 3-manifold $M$ and every colored, framed (or ribbon) link $L$ in $M$
we can indeed find  a natural simplicial version of the corresponding WLO
whose value coincides with the Reshetikhin-Turaev invariant.
Of course, at present this is only a hypothesis,
so most of the rest of this section is highly speculative.

\smallskip

We begin by considering the simplest situation $L=\emptyset$.
In this case  $\WLO(L)= \WLO(\emptyset)$ is  just the  partition function $Z(M)$ of $M$.
For simplicity (and in order to keep the formulas short)
we will consider first the idealistic\footnote{cf. the comment for Project 3 in  part \ref{appG} of the Appendix below} scenario based on the assumption that one
can find a simplicial realization of the  original (= non-gauge fixed) path integral expression
for $Z(M)$. In other words, we  assume  that for every
sufficiently fine  triangulation $\cC$ of the base manifold $M$
we can  find a simplicial analogue $\cA(\cC)$ of the space $\cA$
and a simplicial version  $S_{CS}^{\cC}(A)$ of the continuum action function $S_{CS}(A)$
such that\footnote{this is Eq. \eqref{eq_RT=WLO} above in the special case $L = \emptyset$ but written in a different  notation:  we write  $RT_q(M)$ instead of $RT_q(M,\emptyset)$ and $Z^{\cC}(M)$ instead of $\WLO_{rig}(\emptyset)$}

 \begin{equation} \label{eq0'} RT_q(M) =   Z^{\cC}(M)
     \end{equation}
   where
  \begin{equation}  \label{eq0}
 Z^{\cC}(M) := \int_{\cA(\cC)} \exp(i S_{CS}^{\cC}(A)) DA
     \end{equation}
is defined in a suitable way (for example as an improper integral, cf. Definition \ref{def3.2} below).

\smallskip

In fact, instead of working with a fixed triangulation $\cC$ of the base manifold $M$
  we can  consider a sequence  $(\cC_n)_{n \in \bN}$  of  triangulations
     such that each $\cC_n$ is a subdivison of $\cC_{n-1}$.
  From    Eq. \eqref{eq0'} and Eq. \eqref{eq0} above we then obtain
  \begin{equation} \label{eq_pre_doublelimit}
   RT_q(M) = Z^{\cC_n}(M) = \int_{\cA(\cC_n)} \exp(i S_{CS}^{\cC_n}(A)) DA
   \end{equation}
     Observe that  $RT_q(M)$ does not depend on $n$, so by applying
     a $n \to \infty$-limit on the LHS and RHS of Eq. \eqref{eq_pre_doublelimit} we simply obtain
  \begin{equation} \label{eq_doublelimit}
   RT_q(M) = \lim_{n \to \infty} \int_{\cA(\cC_n)} \exp(i S_{CS}^{\cC_n}(A)) DA
   \end{equation}

In a similar way one can write down analogues of Eq.
  \eqref{eq_doublelimit} for the gauge-fixed CS path integral and
  analogues where non-empty links are included.
  Moreover,  we can also study suitable perturbative analogues of Eq.
  \eqref{eq_doublelimit} and doing so might  lead to some progress regarding Problem 3 of Sec. \ref{intro},
  cf. the discussion after Example \ref{ex_sec3.3} below.

\begin{remark} \label{rm_sec3.3} \rm
 Before studying in more detail a concrete perturbative analogue of Eq.   \eqref{eq_doublelimit}
 (namely, Eq. \eqref{eq_doublelimit_pert} below)  let us briefly consider the following instructive,  non-perturbative, heuristic argument first: \par

 Observe that   if the simplicial space $\cA(\cC_n)$ was chosen
naturally, there will be a canonical projection
$R_n:\cA \to \cA(\cC_n)$ (the ``de Rham map'') and on an informal level the normalized Lebesgue measure $DA$ on $\cA(\cC_n)$
appearing on the RHS of Eq. \eqref{eq_doublelimit} will be equal to the $R_n$-image of the (heuristic)
Lebesgue measure $DA$ on $\cA$  (up to a multiplicative constant).
So, after applying on an informal level the transformation theorem of measure theory
we obtain
$$\int_{\cA(\cC_n)} \exp(i S_{CS}^{\cC_n}(A)) DA =
      \int_{\cA} \exp(i S_{CS}^{\cC_n}(R_n(A))) DA$$
If the simplicial action function was defined naturally
we should have
$$\lim_{n \to \infty} S_{CS}^{\cC_n}(R_n(A)) =  S_{CS}(A)$$
 so by  interchanging the two limit procedures $\lim_{n \to \infty} \cdots$
 and $\int \cdots DA$  we obtain, informally,
 \begin{multline} \label{eq_doublelimit_b}
   \lim_{n \to \infty} \int_{\cA(\cC_n)} \exp(i S_{CS}^{\cC_n}(A)) DA
   =   \lim_{n \to \infty} \int_{\cA} \exp(i S_{CS}^{\cC_n}(R_n(A))) DA\\
    =    \int_{\cA}  \lim_{n \to \infty} \exp(i S_{CS}^{\cC_n}(R_n(A))) DA =
    \int_{\cA} \exp(i S_{CS}(A)) DA
   \end{multline}
 Clearly, the combination of  equation Eq. \eqref{eq_doublelimit}
 and Eq. \eqref{eq_doublelimit_b}
 provides a direct link between $RT_q(M)$
 and Witten's  path integral expression $ \int_{\cA} \exp(i S_{CS}(A)) DA$. \par

The last two expressions in  Eq. \eqref{eq_doublelimit_b} are only heuristic
but  it is possible that already  Eq. \eqref{eq_doublelimit}  and Eq. \eqref{eq_doublelimit_b}
could be exploited to obtain a better understanding of the heuristic
CS path integral and to obtain new mathematical conjectures and results.
Let us illustrate this point by considering
 the following example
which is, in fact, very closely related\footnote{as observed in Sec. 2 in \cite{Wi}
 the Ray-Singer torsion appears naturally when studying  the
  semi-classical limit of the heuristic CS path integral}  to Chern-Simons theory.
 \end{remark}

\begin{example} \label{ex_sec3.3}
Let $X$ be a closed, connected, oriented, smooth manifold of odd dimension and
let  $\rho:\pi_1(X) \to O(N)$, $N \in \bN$,   be a group homomorphism such that
the Reidemeister torsion  $T(X,\rho)$ associated to $X$ and $\rho$ is well-defined\footnote{a sufficient condition
for this being the case is that the de Rham complex associated to the $A_{\rho}$-twisted exterior derivative
is acyclic (with  $A_{\rho}$ given below)}.
  Let $(\cC_n)_n$ be a sequence of smooth triangulations of  $X$  such that each $\cC_n$ is a subdivison of $\cC_{n-1}$.    For every $n \in \bN$ we have (cf. Proposition 1.7 in \cite{RaSi})
  $$T(X,\rho) = \prod\nolimits_{k=0}^{\dim(X)} \det(-\triangle_k(\cC_n,\rho))^{(-1)^{k+1} k/2} $$
       where each $\triangle_k(\cC_n,\rho)$, $k=0,\ldots,\dim(X)$, is the
     ``$\rho$-twisted'' combinatorial Laplacian on the space of $k$-cochains
         associated to $\cC_n$.
   Since the LHS of the last equation
    does not depend on $n$ we trivially have
  \begin{equation} \label{eq3}
  T(X,\rho) = \lim_{n \to \infty} \prod\nolimits_{k} \det(-\triangle_k(\cC_n,\rho))^{(-1)^{k+1} k/2}
  \end{equation}
  As in Eq. \eqref{eq_doublelimit_b} above one can now try to informally
interchange the  limit $n \to \infty$  in Eq. \eqref{eq3} with each of the determinants\footnote{\label{ft_interchange} Clearly, since  the determinants can be  rewritten
 in terms of a (Gaussian) integral the situation in the present example
  is indeed very similar to the situation in Eq. \eqref{eq_doublelimit_b} above} $\det(\cdot)$.
 By doing so one is led quite naturally to
  Ray and Singer's famous conjecture (=  the Cheeger-Mueller Theorem), cf. Eq. \eqref{eq_RaySInger} below.
Indeed, if we fix an  auxiliary Riemannian metric $g$ on $X$
and denote by  $E$ the  flat vector bundle on $X$ associated to $\rho$, and
by  $A_{\rho}$ the  flat connection on $E$ associated to $\rho$
  then it is natural to conjecture that
   for a suitable notion of limit
    the    ${n \to \infty}$-limit of   $\triangle_k(\cC_n,\rho)$
   will be  the ``$A_{\rho}$-twisted'' Hodge Laplacian  $\triangle_k(g,A_{\rho}):\Omega^k(E) \to \Omega^k(E)$
      where   $\Omega^k(E)$ is  the space of $k$-forms with values in $E$.
  Of course, in order to make sense of the determinants $\det(-\triangle_k(g,A_{\rho}))$
  we will now have to work a bit  harder, and, e.g.,
perform a suitable zeta-function regularization. After doing so one arrives at
 Ray and Singer's ``analytic torsion'' $\tau(X,\rho)$ and their conjecture
 \begin{equation} \label{eq_RaySInger} T(X,\rho) = \tau(X,\rho)\end{equation}
  which was later proven independently in \cite{Che} and in \cite{Mue}.
The approximation idea sketched above could also be used as a strategy for proving Eq. \eqref{eq_RaySinger}
and, apparently, the proof in \cite{Mue} (which is partly based on \cite{Dod,Pat})
 makes use of this strategy.
\end{example}

So far we have restricted ourselves to
the original CS path integral  and to the situation of empty links
but -- as mentioned above -- we can easily generalize the arguments above
to the gauge-fixed CS path integral and to the situation
of non-empty links. If the simplicial program can be carried out successfully it is plausible
to expect that at least some progress towards the solution of Problem 3 of Sec. \ref{intro}
can be made (cf. also the list of problems in Sec. 7 in \cite{Oht}). \par

Let us be a bit more explicit and consider, for simplicity\footnote{cf. footnote \ref{ft_ASE} below},
the special case  $M=S^3$.
Assume that $L$ is a fixed (ribbon) link in $S^3$. We now choose  a sequence  $(\cC_n)_{n \in \bN}$ of
finite polyhedral cell decompositions of $S^3$ such that $\cC_{n}$ is a subdivision of $\cC_{n-1}$ for each $n \ge 2$
and such that for each $n \in \bN$ there is a ``simplicial ribbon link'' $L_n$ in $\cC_n$
(cf. Sec. \ref{subsec4.0.3b} below)
so that the sequence $(L_n)_{n \in \bN}$ approximates $L$ in a suitable way.
Finally, we assume that  for each $n \in \bN$ we can find  a rigorous simplicial realization
  $\WLO^{\cC_n}_{rig}(L_n)$ of the (original or gauge-fixed)   heuristic path integral $\WLO(L)$ such that
\begin{equation} \label{eq_pre_doublelimit2} RT_{q}(S^3,L) = \WLO^{\cC_n}_{rig}(L_n)
\end{equation}
and, in particular,  $RT_{q}(S^3) = Z^{\cC_n}(S^3):= \WLO^{\cC_n}_{rig}(\emptyset)$.
    If this is possible, then  after normalizing\footnote{cf. again footnote \ref{ft_ASE} below}
    we obtain
\begin{equation} \label{eq_RT=WLO_S3} \frac{RT_{q}(S^3,L)}{RT_{q}(S^3)} =
\frac{ \WLO^{\cC_n}_{rig}(L_n) }{Z^{\cC_n}(S^3)}
\end{equation}
Let us now  expand\footnote{\label{ft_ASE} If we do not normalize (cf. the preceding footnote)
 or if we consider the case $M \neq S^3$ then the situation becomes more complicated and  a simple asymptotic
 expansion in terms of (non-negative) integer powers of $1/k$  will in general not be possible.
 Instead one will then have to rewrite the argument/discussion above using a more complicated form of asymptotic expansion, cf., e.g., the RHS of Conjecture 7.7 in  \cite{And} which deals with (general $M$ and)
  the special case  $L= \emptyset$.}
   the LHS and the RHS of Eq. \eqref{eq_RT=WLO_S3}
 as an (asymptotic) series of powers of $1/k$.
  Let $c_m$, $m \in \bN_0$, be the coefficient appearing in front of the power $(1/k)^m$
in the asymptotic expansion (as $k \to \infty$) of $RT_q(S^3,L)/RT_q(S^3)$. Similarly, let $d_m^{(n)}$ be the coefficient appearing in front of the power $(1/k)^m$ in the asymptotic expansion (as $k \to \infty$)
of $\WLO^{\cC_n}_{rig}(L_n)/Z^{\cC_n}(S^3)$.
Clearly, for each fixed $m \in \bN_0$ we have $c_m = d_m^{(n)}$ for every $n \in \bN$
so by applying the (trivial) limit ${n \to \infty}$ we obtain
\begin{equation} \label{eq_doublelimit_pert}
c_m = \lim_{n \to \infty} d_m^{(n)}
\end{equation}
Using standard techniques from perturbative QFT we can hope to be able
to evaluate $ d_m^{(n)}$
explicitly\footnote{and even rigorously, since  $\WLO^{\cC_n}_{rig}(L_n)$ is a
 finite dimensional (oscillatory) integral} for each fixed $n \in \bN$
 and to find  a suitable topological/geometric interpretation of
the ``continuum expression'' $d_m := \lim_{n \to \infty} d_m^{(n)}$.  \par
This might have applications to the theory of Vassiliev invariants.
Recall that one  of the central results in knot theory is Kontsevich's discovery (cf. \cite{Kon}) of the
so-called universal Vassiliev invariant (for $M=S^3$ or $M=\bR^3$). There are several different versions of
the universal Vassiliev invariant, see \cite{BarStoi} where these versions are called the ``combinatorial'' version, the ``analytic'' version (involving the so-called Kontsevich integral, cf. \cite{Kon}),
the ``geometric''  version (involving so-called ``configuration space integrals'', see \cite{AlFr,Thu}), and the ``algebraic'' version (cf. \cite{Car,LeMu,Piu}).\par

As is explained, e.g. in  \cite{CCFM,Lab}, the perturbative evaluation of the heuristic CS path integral\footnote{
or $BF_3$-theory with simply-connected compact structure group and positive cosmological constant}
 with  $M=S^3$ or $M=\bR^3$ in different gauges is very closely related
  to the last three of these four versions:
\begin{itemize}
\item the geometric version arises naturally after applying Lorentz gauge fixing to the heuristic CS path integral
for the WLOs (with $M=S^3)$, see
 \cite{GMM,BN,AxSi1,AxSi2,BoTa,AlFr,Thu}.

\item the analytic version arises essentially\footnote{some subtleties are
not yet clear from the QFT point of view, e.g. it is not clear how  the so-called ``Kontsevich factors''  can be obtained from the CS path integral}  after applying light-cone gauge fixing (in a complexified setting) to the heuristic CS path integral
for the WLOs (with $M=\bR^3)$, see \cite{FK}.

\item the algebraic version was  conjectured in \cite{Lab}
to arise after applying temporal/axial  gauge fixing (in a non-complexified setting) to the heuristic CS path integral for the WLOs (with $M=\bR^3)$.
It seems more likely, though,  that a successful derivation of the
  algebraic version of the universal Vassiliev invariant
from the CS path integral will involve
 torus gauge fixing in some way\footnote{either applied in the situation $M=S^2 \times S^1$ and followed
 by a surgery argument, or alternatively, by applying  torus gauge fixing directly to the manifold $M=S^3$
 considered as a $S^1$-bundle, cf. \cite{BlTh4} and Sec. \ref{sec7} below}.
\end{itemize}

 Having a rigorous simplicial  version of the Chern-Simons path integral
  should increase the chances for finding a  unified treatment
   for the last three versions of the universal Vassiliev invariant.

\section{A ``simplicial'' differential geometric framework}
\label{sec4.0}

As a preparation for Sec. \ref{sec4} below we will  now introduce
 a suitable  ``simplicial'' differential geometric framework
 which was inspired by the work of Adams on Abelian $BF_3$-theory (cf. \cite{Ad0,Ad1}).
 Two important differences in comparison to \cite{Ad0,Ad1} are:
 \begin{itemize}
 \item the introduction of  the cell complex $q\cK$, cf. Sec. \ref{subsec4.0.4c} below
 \item the introduction of the  notion of a ``simplicial (closed) ribbon'', cf. Sec. \ref{subsec4.0.3b} below.
 \end{itemize}

\subsection{Chains and cochains}
\label{subsec4.0.1}

Let $d \in \bN$ and let  $\cK$ be an  oriented $d$-dimensional
simplicial complex. For $p \in \{0,1, \ldots, d  \}$ we will denote by   $\face_p(\cK)$ the set of  $p$-faces in $\cK$.
For every   $p
\in \{0,1, \ldots, d \}$ we will denote by $C_p(\cK)$  the space
of ``$p$-chains of $\cK$ with coefficients in $\bR$'' and by
$C^p(\cK,\bR)$ the space of ``$p$-cochains with values in $\bR$''\footnote{so  $C^p(\cK,\bR)$ is simply the space
 $\bR^{\face_p(\cK)}$ of all maps $\face_p(\cK) \to \bR$}.
As usual we will denote by
$\partial_{\cK}: C_p(\cK) \to C_{p-1}(\cK)$,  $p \in \{1,2,
\ldots, d \}$, the corresponding
``boundary operator'' and  by
$ d_{\cK} :C^p(\cK,\bR) \to C^{p+1}(\cK,\bR)$,   $p \in \{0,2,
\ldots, d-1 \}$ the corresponding ``coboundary operator''.

\smallskip

From now on we will assume that
  $\cK$ is finite and  we will make the identification
$C_p(\cK) \cong C^p(\cK,\bR) $ for each $p \in \{0,1,2,\ldots,d\}$
in the obvious way. \par

 By $\langle \cdot, \cdot \rangle_p := \langle \cdot, \cdot \rangle_{C_p(\cK)}$ we will denote the standard scalar product  on $C_p(\cK) \cong C^p(\cK,\bR) =  \bR^{\face_p(\cK)}$. Observe that the linear maps
$ d_{\cK} :C^p(\cK,\bR) \to C^{p+1}(\cK,\bR)$ and $\partial_{\cK}:C_{p+1}(\cK) \to C_{p}(\cK)$
 are dual to each other w.r.t. the scalar products
$\langle \cdot, \cdot \rangle_p$ and $\langle \cdot, \cdot \rangle_{p+1}$
i.e. we have
\begin{equation} \label{eq_d_vs_partial} \langle d_{\cK} \alpha, \beta \rangle_{p+1} =  \langle
\alpha, \partial_{\cK} \beta \rangle_p
\end{equation}
 for all $\alpha \in C_p(\cK) \cong C^p(\cK,\bR)$ and $\beta \in C_{p+1}(\cK) \cong
C^{p+1}(\cK,\bR)$.

\smallskip

Now let $V$ be a finite-dimensional real vector space and
let   $C^p(\cK,V)$, for $p \in \{0,1, \ldots, d  \}$,
denote the space of ``$p$-cochains with values in $V$''\footnote{so  $C^p(\cK,V)$ is the space
$V^{\face_p(\cK)}$ of all maps $\face_p(\cK) \to V$}.
Then we can make the identification
$$C^p(\cK,V) \cong C^p(\cK,\bR) \otimes_{\bR} V \cong C_p(\cK) \otimes_{\bR} V$$
We will denote the linear operators
$ d^V_{\cK} :C^p(\cK,V) \to C^{p+1}(\cK,V)$ and $\partial^V_{\cK}:C_{p+1}(\cK,V) \to C_{p}(\cK,V)$
given by $ d^V_{\cK} := d_{\cK} \otimes \id_V$ and $ \partial^V_{\cK} := \partial_{\cK} \otimes \id_V$
 simply by $d_{\cK}$ and  $\partial_{\cK}$ in the following.
 Observe that if we equip $V$
with a  scalar product $\langle \cdot, \cdot \rangle$
then this induces  scalar products $\langle \cdot, \cdot \rangle_p$ on $C^p(\cK,V)$, $p \in \{0,1, \ldots, d  \}$,
and  Eq. \eqref{eq_d_vs_partial} above will then  hold
 for all $\alpha \in  C^p(\cK,V)$ and $\beta \in C^{p+1}(\cK,V)$.

\begin{convention} \rm \label{conv_alpha=deltaalpha}
 For each $p \in \{0,1, \ldots, d  \}$ we identify $\face_p(\cK)$
  with its image in $  C_p(\cK) \cong C^{p}(\cK,\bR)$
  under the injection $\face_p(\cK) \ni \alpha \mapsto \delta_{\alpha} \in C^{p}(\cK,\bR)$
  where  $\delta_{\alpha} \in C^{p}(\cK,\bR)$
is given by $\delta_{\alpha}(\alpha') = \delta_{\alpha,\alpha'}$ for all  $\alpha' \in \face_p(\cK)$
\end{convention}

\smallskip

 The constructions and definitions above can be generalized in a straightforward way to
the situation where instead of a (finite) oriented simplicial complex $\cK$ we work
with a (finite) oriented
 ``polyhedral cell complex''  $\cP$, i.e.  a cell complex
which is obtained by glueing together bounded ``convex
polytopes''    in an analogous way as simplicial complexes arise from glueing
together simplices,  cf. part \ref{appF} of the Appendix for the formal definitions.\par
Polyhedral cell complexes arise naturally, e.g.  as the duals and
products of simplicial  complexes (cf.  part \ref{appF} of the Appendix).

\subsection{Simplicial curves, loops, and links}
\label{subsec4.0.3}

Let $\cP$ be a fixed oriented polyhedral cell complex\footnote{the only seven cases which will be relevant
later are $\cP \in \{\bZ_N, \cK,\cK',q\cK,\cK \times \bZ_N,\cK' \times \bZ_N,q\cK \times \bZ_N\}$
with $\bZ_N$ and $\cK$ as in Sec. \ref{subsec4.0.4}}.
We will call the elements of $\face_0(\cP)$ (resp. $\face_1(\cP)$
resp.  $\face_1(\cP) \cup \{0\} \cup (- \face_1(\cP))\subset
C_1(\cP)$) the ``vertices'' (resp. ``edges'' resp. ``generalized
edges'') in $\cP$. The element $0 \in C_1(\cP)$ will be called ``the
empty edge''.\par

Using the orientations on each of the edges
we can define the starting point $\startpoint(e)$ and the endpoint $\epoint(e)$ of an edge
$e \in \face_1(\cP)$ in the obvious way. For $e \in - \face_1(\cP)$ we set $\startpoint(e) := \epoint(-e)$
and  $\epoint(e) := \startpoint(-e)$.

\medskip

A  ``simplicial curve''\footnote{by abuse of language we use the term ``simplicial curve'' also
when  $\cP$ is  not  a simplicial complex but a general polyhedral cell complex}  in $\cP$ is a finite sequence $x=
(x^{(k)})_{k \le n}$, $n \in \bN$, of  vertices in $\cP$
 such that for every $k \le n-1$
the two vertices
 $x^{(k)}$ and  $x^{(k+1)}$
   either coincide or are the two endpoints
 of an edge $e \in \face_1(\cP)$.
 If $x^{(n)} = x^{(1)}$ we will call
$x= (x^{(k)})_{k \le n}$ a ``simplicial loop'' in $\cP$.

\medskip

Every simplicial curve  $x= (x^{(k)})_{k \le n}$ with $n > 1$ induces a
sequence $(e^{(k)})_{k \le n-1}$ of generalized edges given by
$$e^{(k)} =
\begin{cases}
e & \text{ if } x^{(k)} = \startpoint(e) \\
0 & \text{ if } x^{(k)} = x^{(k+1)} \\
-e & \text{ if } x^{(k)} = \epoint(e)  \\
\end{cases}
$$
where in the first and in the last case $e \in \face_1(\cP)$ is the unique edge with
$\{\startpoint(e), \epoint(e)\} = \{x^{(k)},x^{(k+1)} \}$. In the
following we will mostly\footnote{note that these two points of view
are almost equivalent the exception being the case where $x=
(x^{(k)})_{k \le n}$ is constant. In this (degenerate) case $x=
(x^{(k)})_{k \le n}$ can, of course,  not be recovered from $e=
(e^{(k)})_{k \le n-1}$.}
 take the ``generalized edge point of
view'', i.e. when we mention simplicial curves
we consider them as sequences $e= (e^{(k)})_{k}$
of generalized edges and write
 $\start e^{(k)}$ for the
corresponding  vertex  $x^{(k)}$.

\smallskip

Observe that every simplicial loop $l= (l^{(k)})_{k \le n}$  in
a polyhedral cell complex $\cP = (X,\cC)$ induces, in an obvious way,
a continuous loop $[0,1] \to X$, which will also be denoted by $l$.

\smallskip

Finally, let us specialize to the situation where the topological space $X$
appearing in  $\cP = (X,\cC)$ is a 3-dimensional topological manifold.
A ``simplicial link'' in $\cP$
is then defined to be a finite tuple $L=(l_1,l_2, \ldots, l_m)$, $m \in \bN$,
of simplicial loops in $\cP$ such that
the corresponding tuple of continuous loops in $X$ is a link in $X$\footnote{in particular, each $l_i$ neither intersects itself nor any of the other $l_j$, $j \neq i$}.

\subsection{Simplicial ribbons and ribbon links}
\label{subsec4.0.3b}

In version (v3) of the present paper and version (v2) of \cite{Ha7b}  (ie in arXiv:1206.0439v3 and in arXiv:1206.0441v2)   we worked  with simplicial loops $l_i$, $i \le m$, in $ K_1 \times \bZ_N$  which came with suitable ``framings'' $l'_i$.  This point of view, to which we will refer below
 as the ``loop pair point of view'',
  was motivated by \cite{Ad0,Ad1}. It is fairly general and works very well in the case of Abelian CS-theory/$BF_3$-theory.\par  However, Remark 5.3 in Sec. 5.2  of arXiv:1206.0441v2 suggests that
 for non-Abelian CS-theory/$BF_3$-theory
   it is more natural to work within a more restricted setting where for each $i \le m$ the two  loops  $l_i$ and $l'_i$ are the boundary loops of a ``(closed) simplicial ribbon''
   in the sense of Definition \ref{def_simpl.ribbon} below.

\smallskip

In order to motivate the notion of a ``(closed) simplicial ribbon''
let us first consider its continuum analogue:

\begin{definition} \label{def_cont.ribbon} Let $M$ be a topological space.
A closed ribbon $R$ in $M$ is an embedding
$R: S^1 \times [0,1] \to M$
\end{definition}

We remark  that closed ribbons (often called ``annuli'') are frequently used
in knot theory  cf., eg, Sec. 2.2. in Chap. I in \cite{turaev}.
In fact, the notion of a  closed ribbon in a 3-dimensional manifold $M$ is essentially equivalent to
the notion of a ``framed knot''\footnote{\label{ft_framing} a framed knot in $M$ can be defined as a pair $(l,X)$
where $l:S^1 \to M$ is a smooth embedding and $X: \arc(l) \to TM$
is a smooth vector field such that for each $p \in \arc(l)$ the vector $X_p$ is not tangent to $\arc(l)$.
Alternatively, we could define a framed knot in $M$ to be a pair $(l,l')$
where $l:S^1 \to M$ and $l':S^1 \to M$ are two non-intersecting embeddings which are homotopic to each other.
The second definition is rarely used. We mention it as a motivation for the
``loop pair point of view'' mentioned above}
in $M$.

\begin{definition} \label{def_simpl.ribbon} Let $\cP$ be a polyhedral cell complex.
A closed simplicial  ribbon $R$ in $\cP$ is a finite sequence $R = (F_i)_{i \le n}$ of 2-faces of $\cP$
such that
\begin{enumerate}

\item[(SR1)] Every $F_i$ is a tetragon

\item[(SR2)] $F_i \cap  F_{j} = \emptyset$ unless $i = j$ or $j= i \pm 1$ (mod n).
In the latter case  $F_i$ and $F_{j}$ intersect in  a (full) edge.
\end{enumerate}

In the following we will often  say just ``simplicial  ribbon'' instead of ``closed simplicial  ribbon''.
\end{definition}

\begin{figure}[h]
  \centering
  \begin{minipage}[b]{4.5 cm}
  \includegraphics[height=4cm,width=4cm, angle=0]{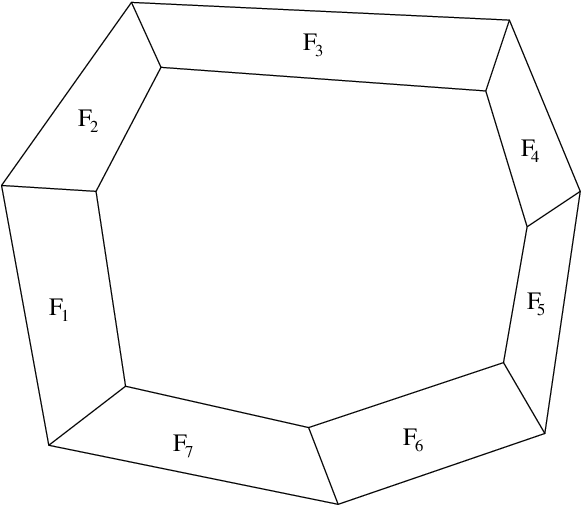}
    \caption{}     \label{fig_D1}
   \end{minipage}
  \begin{minipage}[b]{0.5 cm}
   \ \
    \end{minipage}
   \begin{minipage}[b]{4.5 cm}
   \includegraphics[height=4cm,width=4cm, angle=0]{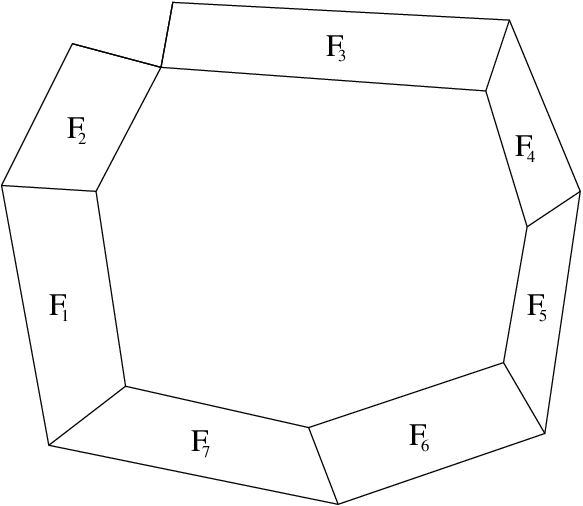}
    \caption{}     \label{fig_D3}
  \end{minipage}
  \begin{minipage}[b]{0.5 cm}
   \ \
    \end{minipage}
 \begin{minipage}[b]{4.5 cm}
  \includegraphics[height=4cm,width=4cm, angle=0]{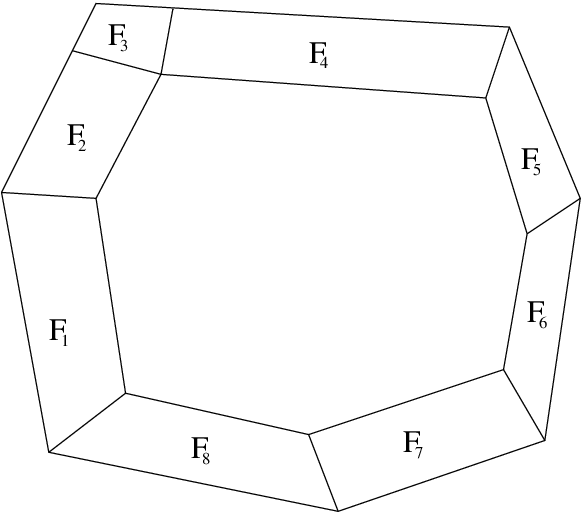}
    \caption{}     \label{fig_D2}
   \end{minipage}
\end{figure}

Fig. \ref{fig_D1} shows an example for a closed simplicial ribbon in 2 dimensions.
On the other hand Fig. \ref{fig_D3} and Fig. \ref{fig_D2}
  show  non-examples (where condition (SR2) is not fulfilled).
We remark that  condition (SR2) implies that the intersection of $F_i$ and  $F_{i+1}$
 is the edge that lies ``opposite'' to the edge which is the intersection of $F_{i-1}$ and  $F_{i}$.
Obviously, in Fig. \ref{fig_D2} this is not the case for the index $i=3$.

\begin{remark} \label{rm_appK_0} \rm
\begin{enumerate}

\item Observe that if $M$ is a topological space and $\cC$ a polyhedrical cell decomposition of $M$ then
every closed simplicial  ribbon $R$ in $\cP := (M,\cC)$ can be considered in a natural way
as a closed ribbon $S^1 \times [0,1]  \to M$. (This is  analogous to the fact that we can consider
every simplicial loop $l$ in $\cP$ in a natural way as  a ``continuous loop'', i.e. as
a continuous map $S^1 \to M$).

\item We observe also that every simplicial ribbon $R$ in $\cP$ induces a pair of simplicial loops $(l,l')$ in $\cP$ in the obvious way ($l$ and $l'$ are the loops which lie on the boundary of $R$).

\item One could try to generalize  Definition  \ref{def_simpl.ribbon}
by weakening one or both of the conditions (SR1) and (SR2),
for example by dropping the condition implicit in (SR2)
that the  two edges  $e := F_i \cap F_{i+1}$ and $e' := F_{i-1} \cap F_{i}$
must not touch each other (so that the example in Fig. \ref{fig_D2} is no longer excluded).
However, the reader will probably agree that such a generalized definition is  less natural
than the original one\footnote{In particular,  in the situation of part i) of the present remark
each such generalized simplicial ribbon would again induce  a
closed ribbon $S^1 \times [0,1]  \to M$ but this time the explicit formula would
be rather ugly}. Even more importantly, it turns out that if we weaken condition (SR2)
in the aforementioned way Theorem \ref{main_theorem} below will no longer
be valid.
\end{enumerate}
\end{remark}

\begin{definition} \label{def_appK_RL} \rm Let $M$ be a 3-dimensional topological manifold and $\cC$ a polyhedrical cell decomposition
of $M$.  A ``simplicial ribbon link'' L  in $\cP := (M,\cC)$
is a tuple $L=(R_1, \ldots, R_m)$, $m \in \bN$, of closed simplicial ribbons $R_i$
which do not intersect each other.
 (Here we consider each $R_i$ as a map $[0,1] \times S^1 \to M$, cf part i) of Remark \ref{rm_appK_0} above).
 \end{definition}

\begin{remark} \label{rm_appK_1} \rm
\begin{enumerate}

\item  Obviously, only for special polyhedral cell complexes $\cP$ simplicial ribbons will exist.
For example, if $\cP$ is a simplicial complex no 2-face will be a tetragon so
condition (SR1) above will never be fulfilled.
For certain polyhedral cell complexes $\cP$ like, eg,  $\cP = q\cK$ or $\cP = q\cK \times \bZ_N$
appearing in Sec. \ref{subsec4.0.4} below,
 every 2-face is a tetragon, so in this special situation condition (SR1) in Definition \ref{def_simpl.ribbon} above  is automatically fulfilled.
 However, this does not mean that
  the existence question  of simplicial ribbons in such polyhedral cell complexes is a trivial issue.
  In fact, as a result of condition (SR2) above   certain complications can arise.
 For example, for a simplicial loop $l$ in $q\cK$  we can  in general not find a simplicial ribbon $R$ in $q\cK$
such that $l$ is one of the two boundary loops of $R$.

\smallskip

\item On the other hand for every  smooth  link $L = (l_1, l_2, \ldots l_m)$, $m \in \bN$, in $\Sigma \times S^1$   and every $\eps > 0$ we can always find
 a $N \in \bN$ and (smooth) polyhedral cell complex $\cK$ on $\Sigma$ such that in $q\cK \times \bZ_N$ there exists a simplicial ribbon link $L^{disc}$ which is an ``$\eps$-approximation'' of $L$ (w.r.t to a fixed auxiliary Riemannian metric ${\mathbf g}$ on $\Sigma$)  in a suitable sense.

 Moreover, if $\eps$ was chosen sufficiently small\footnote{and if the notion of ``$\eps$-approximation''  mentioned above is sufficiently strong} then $L^{disc}$, considered\footnote{recall that each of the simplicial ribbons appearing in $L$  can be considered as a closed ribbon in  $\Sigma \times S^1$ (cf. part i) of Remark \ref{rm_appK_0} above) or, equivalently,  as a framed knot in  $\Sigma \times S^1$, cf. the paragraph after Definition \ref{def_cont.ribbon} above} as a framed (piecewise smooth) link in $\Sigma \times S^1$,  will be equivalent to the framed link which we obtain by equipping each of the knots $l_i$, $i \le m$, appearing in  $L$  with  a ``horizontal framing''\footnote{ie for each $l_i$ the corresponding vector field $X:\arc(l_i) \to TM$ where $M= \Sigma \times S^1$   mentioned in footnote \ref{ft_framing}  has  the following property: for each $p \in \arc(l_i)$ the vector  $X_p$ is not tangent to $\arc(l_i)$ and, secondly,  $X_p$ is  ``parallel to $\Sigma$'', ie   the projection $(\pi_{S^1})_*(X_p) \in T_{\pi_{S^1}(p)} S^1$ is zero}.
  \end{enumerate}
\end{remark}

\subsection{Some special (polyhedral) cell complexes}
\label{subsec4.0.4}

\subsubsection{The cell complex $\bZ_N$}
\label{subsec4.0.4a}

For the rest of this paper we fix a natural number $N$ and
we will denote by $\bZ_N$  the cyclic group of order $N$.
We will identify $\bZ_N$ with the  subgroup
$ \{ e^{\frac{2 \pi i}{N} k} \mid 1 \le k \le N\}$  of the Lie group $S^1 \cong \{z \in \bC \mid |z|=1\} $. \par

The points in $\bZ_N$
 induce\footnote{more precisely, the 1-cells of $\bZ_N$ are the connected components of $S^1 \backslash \bZ_N$}
  a (polyhedral) cell decomposition  of $S^1$
 and the corresponding (1-dimensional polyhedral) cell complex will also be denoted
  by $\bZ_N$ in the following.
  We will equip $S^1$ with the standard  orientation (= the 1-form $dt$).
  Moreover, we choose for each 1-cell of  the (1-dimensional polyhedral) cell complex  $\bZ_N$
  the orientation which is induced by the orientation on $S^1$.

\subsubsection{The cell complexes $\cK=K_1$ and $\cK'=K_2$}
\label{subsec4.0.4b}

Recall that above we fixed an oriented compact surface $\Sigma$.
For the rest of this paper we will now
fix a finite  polyhedral cell decomposition
 $\cC$ of $\Sigma$. By $\cC'$ we will denote  the  canonical dual of $\cC$,
 cf.  part \ref{appF} of the Appendix.
 We turn $\cC$ and $\cC'$ into oriented polyhedral cell decompositions by picking\footnote{if $\cC$ is a smooth
 cell decomposition it is natural to assume that the orientation on the 2-cells of $\cC$
 comes from the orientation $\nu_{\Sigma}$ of $\Sigma$ and that the orientation of all the cells of $\cC'$
 are the ones induced by the cell orientations of $\cC$ and $\nu_{\Sigma}$} an orientation for each cell of
 $\cC$ and of $\cC'$.

   \smallskip

 Let  $\cK$ and $\cK'$  denote  the oriented polyhedral
 cell complexes corresponding to $\cC$ and $\cC'$
  (i.e.  $\cK = (\Sigma,\cC)$ and $\cK' = (\Sigma,\cC')$).
 Below we will often write
$K_1$ instead of $\cK$ and $K_2$ instead of $\cK'$.

\subsubsection{The cell complex $q\cK$}
\label{subsec4.0.4c}

Since   $\cC'$ was chosen to be the canonical dual of $\cC$
the barycentric subdivision $b\cC$ of $\cC$ coincides with the one of  $\cC'$.
We will, however, not work with the cell complex $b\cK:= (\Sigma,b\cC)$
 but  with the ``coarser'' polyhedral cell complex $q\cK:= (\Sigma,q\cC)$
which is determined by
\begin{itemize}
\item $\face_0(q\cK) = \face_0(b\cK)$
\item $\face_1(q\cK) = \face_1(b\cK) \backslash \{e \in \face_1(b\cK) \mid \text{ both endpoints of $e$
lie in $\face_0(K_1) \sqcup  \face_0(K_2)$} \}$,
\end{itemize}
The set $\face_2(q\cK)$ is uniquely determined by $\face_0(q\cK)$ and $\face_1(q\cK)$.
Observe that each $F \in \face_2(q\cK)$ is a tetragon
 and that each $F$ is the union of exactly two faces of $\face_2(b\cK)$.\par

 We turn $q\cK$  into an oriented polyhedral cell complex by picking
 an orientation for each of the cells of $q\cK$.
 For simplicity    we assume that the orientation on each edge $e \in \face_1(q\cK)$
 is the one that comes from the orientation on the unique  edge
 $e' \in \face_1(K_1) \cup \face_1(K_2)$ in which $e$ is contained.

 \begin{figure}[h]
\begin{center}
\begin{minipage}[b]{6.5 cm}
\begin{example}
Fig. \ref{fig_sec3} gives an example.
In this example the 2-faces of the cell complex $q\cK$ are the 16 smaller faces that appear there.
The original polyhedral cell complex $\cK$ from which $q\cK$ is derived
has five 2-faces, namely one square, three equilateral triangles
and another triangle. These five faces are bounded by fat lines.
\end{example}
\end{minipage}
\begin{minipage}[b]{1.5 cm}
\ \
\end{minipage}
\begin{minipage}[t]{6.5 cm}
  \includegraphics[height=4 cm,width=6 cm]{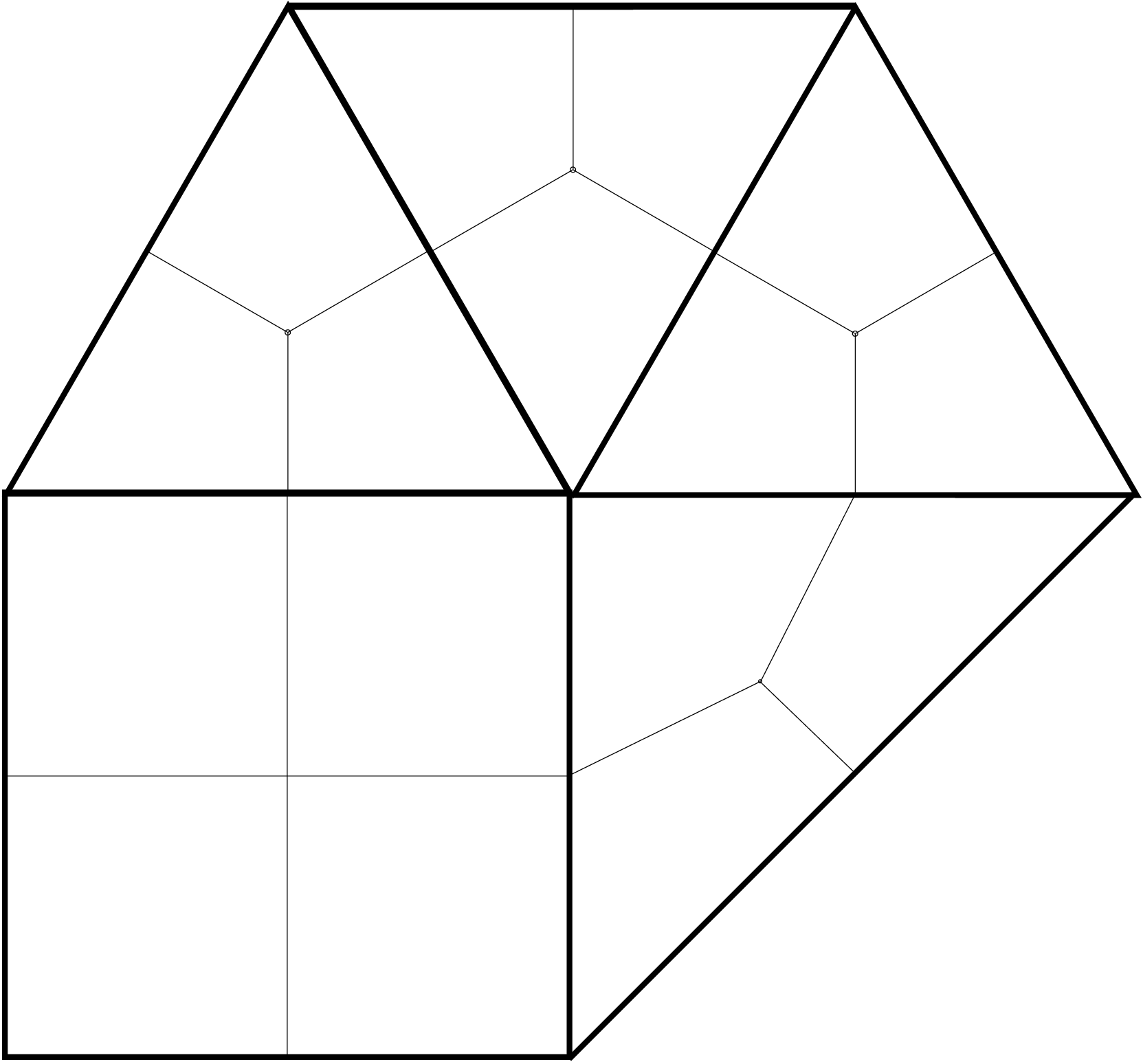}
\caption{}  \label{fig_sec3}
 \end{minipage}
\end{center}
\end{figure}

 Finally, observe  that we have
\begin{equation} \label{eq_disj_Ver0}
\face_0(q\cK) = \face_0(K_1) \sqcup \face_0(K_1|K_2) \sqcup \face_0(K_2)
\end{equation}
with
\begin{equation} \label{eq_def_K1|K2}
\face_0(K_1|K_2):= \{ \bar{e} \mid e \in \face_1(K_1)\} = \{ \bar{e} \mid e \in \face_1(K_2)\}
\end{equation}
where $\sqcup$ denotes disjoint union and $\bar{e}$ the barycenter of the edge $e$.

\begin{convention} \rm \label{conv_identify}
Let $V$ be a fixed finite-dimensional real vector space.
\begin{enumerate}
\item We set
$C_1(K):= C_1(K_1) \oplus C_1(K_2)$ \ and \ $C^1(K,V):= C^1(K_1,V) \oplus C^1(K_2,V) $.

\item Let $\psi: C_1(K) \to C_1(q\cK)$ be the (injective) linear map
given by
$$\psi(e)=e_1 + e_2 \quad \text{ for all } \quad e \in \face_1(K_1) \cup  \face_1(K_2)$$
where $e_1 = e_1(e), e_2 = e_2(e) \in  \face_1(q\cK)$ are the two edges of $q\cK$
``contained'' in $e$.
Using the identifications $C^1(q\cK,V) \cong C_1(q\cK) \otimes_{\bR} V$ and $C^1(K,V) \cong C_1(K) \otimes_{\bR} V$ we  naturally obtain the linear map
$$\psi^V := \psi \otimes \id_V: C^1(K,V) \to C^1(q\cK,V)$$

In the following we will identify $C_1(K)$ with the subspace $\psi(C_1(K))$ of $C_1(q\cK)$
and $C^1(K,V)$ with the subspace  $\psi^V(C^1(K,V))$ of $C^1(q\cK,V)$.
\end{enumerate}

\end{convention}

\subsubsection{The cell complexes $\cK \times \bZ_N$, $\cK' \times \bZ_N$, and $q\cK \times \bZ_N$}
\label{subsec4.0.4d}

 By $\cK \times \bZ_N$, $\cK' \times \bZ_N$, and $q\cK \times \bZ_N$
 we will denote the obvious product (polyhedral  cell) complexes.
 We omit the formal definitions. \par

Let $l = (l^{(k)})_{k \le n}$  be a simplicial loop in $q\cK \times \bZ_N$.
   By   $l_{\Sigma} = (l^{(k)}_{\Sigma})_{k \le n}$ and $l_{S^1} = (l^{(k)}_{S^1})_{k \le n}$
   we will denote\footnote{recall that $\Sigma$ and $S^1$ are the topological spaces
   that underly $q\cK$ and $\bZ_N$.
   We will later consider $l_{\Sigma}$ as a continuous curve in $\Sigma$.
   This is why we use the notation $l_{\Sigma}$
   rather than $l_{q\cK}$}
    the ``projected'' simplicial loops in $q\cK$  and $\bZ_N$.
(The definition of these loops  is obvious in the  ``vertex
point of view'' of a simplicial curve, cf. Sec.  \ref{subsec4.0.3}).\par
By $l^{red}_{\Sigma}$ we will denote the ``reduced''  simplicial loop in $q\cK$
which is obtained from $l_{\Sigma}$ by removing all the empty edges\footnote{this notation will be useful
for formulating condition (NCP) in Sec. \ref{sec6} below}.\par

It will be useful to introduce also for a simplicial ribbon  $R = (F_i)_{i \le n}$ in $q\cK \times \bZ_N$
the notion of a  ``reduced projected'' simplicial ribbon\footnote{we will use this notation  in Sec. \ref{subsec4.9} and in condition (NCP)' in Sec. \ref{sec6} below} $R_{\Sigma}$ in $q\cK$.
In order to define this notion
observe first that
for each $F_i$ appearing in $R$
there are two possibilities:\footnote{here we consider each $F_i$ as a subset of $M = \Sigma \times S^1$
and $\pi_{\Sigma}: M = \Sigma \times S^1 \to \Sigma$
is the canonical projection}
either $F_i$ is ``parallel'' or ``vertical'' w.r.t  $\Sigma$.
More precisely: either  $F^i_{\Sigma} := \pi_{\Sigma}(F_i)$ will be a 2-face of $q\cK$
or a 1-face of $q\cK$.
Clearly, the subsequence
of  the sequence $(F^i_{\Sigma})_{i \le n}$ of elements of $\face_2(q\cK) \cup \face_1(q\cK)$
which we obtain by omitting those  $F^i_{\Sigma}$ that are 1-faces
will be a simplicial ribbon in $q\cK$ which we will denote by $R_{\Sigma}$.

\subsection{Discrete Hodge star operators}
\label{subsec4.0.5}

Let $\cK$ and $\cK'$ be as in Sec. \ref{subsec4.0.4} above.
Moreover, let us assume that $\cK$ is a smooth polyhedral cell complex.
 For each $p \in \{0,1,2\}$ we define
  the operator  $\star_{\cK}:C_p(\cK) \to  C_{2-p}(\cK')$
 as the unique linear isomorphism such that for every
 $\alpha \in \face_p(\cK) \subset C_p(\cK)$ (cf. Convention \ref{conv_alpha=deltaalpha} above)
we have\footnote{recall that according to Convention \ref{conv_alpha=deltaalpha} above
by $\alpha$ and $\Check{\alpha}$
 we actually mean $\delta_{\alpha}$ and $\delta_{\Check{\alpha}}$}
\begin{equation} \star_{\cK} \alpha =
\begin{cases}
\ \ \Check{\alpha} & \text{ if  $or(\Check{\alpha})$ is the orientation induced by
$or(\alpha)$ and $\nu_{\Sigma}$} \\
- \Check{\alpha} & \text{otherwise} \\
\end{cases}
\end{equation}
where $\Check{\alpha} \in \face_{2-p}(\cK')$ is the face dual to $\alpha$,
$or(\alpha)$ and $or(\Check{\alpha})$ are the orientations of $\alpha$ and $\Check{\alpha}$,
and $\nu_{\Sigma}$ is the orientation on $\Sigma$.
 The operator $\star_{\cK'}:C_p(\cK') \to C_{2-p}(\cK)$ will be defined in
  a completely analogous way.

 \begin{convention} \rm \label{conv_star}
 In the following we will often simply write
 $\star$ instead of $\star_{\cK}$ or  $\star_{\cK'}$ .
   \end{convention}

 \begin{figure}[h]
\begin{center}
\begin{minipage}[b]{7 cm}
\begin{example} \label{ex_dualfaces} In the situation in Fig. \ref{dual_faces}
$\cC$ (resp. $\cC'$) is a ``hexagonal'' (resp. ``triangular'')
 (polyhedral) cell decomposition.
  The  2-face $F$ is dual to the 0-face $x'$,
the 0-face $x$ is dual to the 2-face $F'$,
and the two 1-faces $e$ and $e'$ are dual to each other.
Moreover, if $\cC$ is smooth and if  the  orientations of the  cells of $\cC'$
 are the ones induced by $\nu_{\Sigma}$ and the cell orientations of $\cC$
 we  have
$$\star x = F', \quad \star x' = F, \quad \star  e = e',$$
$$ \star e' = - e, \quad  \star  F = x', \quad  \star  F' = x $$
  \end{example}
\end{minipage}
\begin{minipage}[b]{1.0 cm}
\ \
\end{minipage}
\begin{minipage}[t]{7 cm}
 \includegraphics[height=7cm,width=7cm]{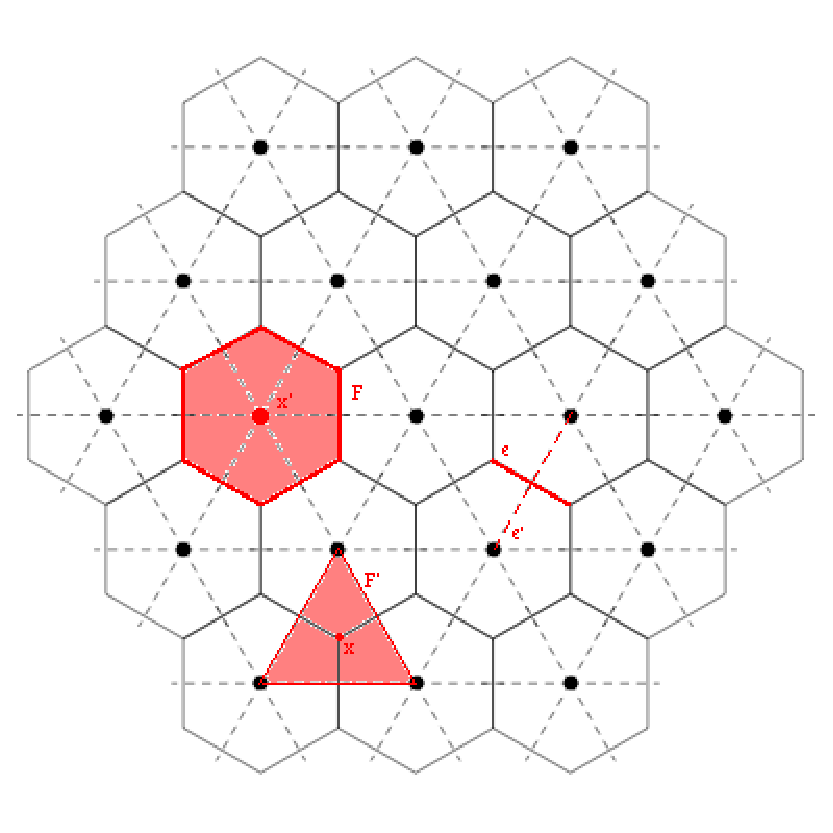}
\caption{}
\label{dual_faces}
 \end{minipage}
\end{center}
\end{figure}

Let  $V$ be a  real finite-dimensional vector space.
For each  $p \in \{0,1,2\}$ we define the
linear operators
$$\star^V_{\cK}:C^p(\cK,V) \to  C^{2-p}(\cK',V) \quad \text{ and } \quad
 \star^V_{\cK'}:C^p(\cK',V) \to  C^{2-p}(\cK,V) \quad \text{ by }$$
 $$ \star^V_{\cK} = \star_{\cK} \otimes \id_{V} \quad \text{ and } \quad \star^V_{\cK'} = \star_{\cK'} \otimes \id_{V}$$
where  we have used the obvious identifications $C^q(\cK,V) \cong C_q(\cK) \otimes V$ and
 $C^{q}(\cK',V) \cong C_{q}(\cK') \otimes V$ for $q \in \{0,1,2\}$.

 \begin{convention} \rm \label{conv_star2}
 In the following we will  simply write
 $\star_{\cK}$ and $\star_{\cK'}$ instead of $\star^V_{\cK}$ and $\star^V_{\cK'}$.
   \end{convention}

   From the definitions it follows that
\begin{equation} \label{eq4.2} \star_{\cK}^{-1} = (-1)^{p(2-p)} \star_{\cK'} =
\begin{cases} - \star_{\cK'} & \text{ if $p=1$  }\\
\ \ \star_{\cK'} & \text{ if $p \in \{0,2\}$  }
\end{cases}
\end{equation}
Moreover, when $V$ is equipped with a scalar product and
the spaces $C^p(\cK,V)$, $C^{2-p}(\cK',V)$, $C^p(\cK',V)$, and $C^{2-p}(\cK,V)$
with the induced scalar products then
$\star_{\cK}$ and  $\star_{\cK'}$ will be isometries.

\section{A simplicial realization of $\WLO(L)$}
\label{sec4}

The approach  for making rigorous sense of Eq. \eqref{eq2.48}
which we will introduce in the present section
was partly inspired by  Adams' approach in \cite{Ad0,Ad1} for discretizing   Abelian Chern-Simons
theory or, more precisely, Abelian $BF_3$-theory.
In fact, a crucial step in \cite{Ad0,Ad1} was
the transition to the ``BF-theory point of view'',  which involves, among other things,
  ``group doubling'' (or, equivalently, ``field doubling''),
cf. Sec. \ref{subsec3.1} above,   Sec. 7 in \cite{Ha7b}, and Appendix C in \cite{Ha7b}.\par

Adams's results  seem to suggest that for the
discretization of   non-Abelian CS theory  a similar strategy
will have to be used. It turns out, however, that for non-Abelian CS theory
the advantages of the ``$BF_3$-theory point of view'' are not as obvious
as in the Abelian case (cf. Remark 7.2 in Sec. 7 of \cite{Ha7b}) and that
 for the derivation of our main result, i.e. Theorem \ref{main_theorem} below, it will be sufficient
 to work with the original ``CS theory point of view''.
Accordingly, we will  postpone the transition to  $BF_3$-theory  to
Sec. 7 in \cite{Ha7b}.

\subsection{Definition of the spaces $\cB(q\cK)$, $\cA_{\Sigma}(q\cK)$,  $\cA_{\Sigma}(K)$, $\cA^{\orth}(q\cK)$, and $\cA^{\orth}(K)$}
\label{subsec4.1}

Let us first introduce suitable finite-dimensional analogues of the
spaces $\cB$,  $\cA_{\Sigma}$ and $\cA^{\orth}$.

 \smallskip

 For the transition from the continuum
to the discrete we now make the following replacements:

\medskip

\begin{tabular}{l c l}
$S^1$ & $\longrightarrow$ &  $\bZ_N$ \\
$\Sigma$ & $\longrightarrow$ &   $q\cK$\\
  $\cB = C^{\infty}(\Sigma,\ct) = \Omega^0(\Sigma,\ct)$   & $\longrightarrow$ & $\cB(q\cK):= C^0(q\cK,\ct)$\\
 $\cA_{\Sigma} = \Omega^1(\Sigma,\cG)$ &  $\longrightarrow$ &   $\cA_{\Sigma}(q\cK):= C^1(q\cK,\cG)$\\
$\cA^{\orth} \cong C^{\infty}(S^1, \cA_{\Sigma})$ & $\longrightarrow$ &
 $\cA^{\orth}(q\cK) := \Map(\bZ_N,\cA_{\Sigma}(q\cK))$\\
\end{tabular}

\medskip

\noindent
  Clearly, the scalar product $\langle \cdot, \cdot \rangle_{\cG}$ on $\cG$  induces
 scalar products $\ll \cdot, \cdot\gg_{\cB(q\cK)}$ and $\ll \cdot, \cdot\gg_{\cA_{\Sigma}(q\cK)}$
 on $\cB(q\cK)$ and $\cA_{\Sigma}(q\cK)$  in the standard way.
 We introduce a  scalar product
  $\ll \cdot , \cdot \gg_{\cA^{\orth}(q\cK)}$      on $\cA^{\orth}(q\cK) = \Map(\bZ_N,\cA_{\Sigma}(q\cK))$ by
   \begin{equation} \label{eq_norm_scalarprod}
\ll A^{\orth}_1 , A^{\orth}_2 \gg_{\cA^{\orth}(q\cK)}  =   \tfrac{1}{N} \sum_{t \in \bZ_N} \ll
A^{\orth}_1(t) , A^{\orth}_2(t) \gg_{\cA_{\Sigma}(q\cK)}
\end{equation} for all $A^{\orth}_1 , A^{\orth}_2 \in \cA^{\orth}(q\cK)$.

\begin{convention} \rm \label{conv4}
We identify $\cA_{\Sigma}(q\cK)$ with the subspace
$\{ A^{\orth} \in \Map(\bZ_N,\cA_{\Sigma}(q\cK)) \mid A^{\orth}  \text{ is constant}\}$  of $\cA^{\orth}(q\cK)$
 in the obvious way.
\end{convention}

For technical reasons\footnote{the use of these subspaces will allow us in  Sec. \ref{subsec4.2} below
to obtain a  ``good'' simplicial realization of the Hodge star operator $\star$ appearing in Eq. \eqref{eq2.42} above} we will not only work with the full spaces $\cA_{\Sigma}(q\cK)$ and $\cA^{\orth}(q\cK)$
but also with their subspaces  (cf. Convention \ref{conv_identify} above)
 $\cA_{\Sigma}(K)$ and $\cA^{\orth}(K)$ given by
\begin{align}
\cA_{\Sigma}(K) & :=C^1(K_1,\cG)  \oplus  C^1(K_2,\cG) \subset \cA_{\Sigma}(q\cK)\\
\cA^{\orth}(K) & := \Map(\bZ_N,\cA_{\Sigma}(K)) \subset \cA^{\orth}(q\cK)
\end{align}

\subsection{Discrete analogue of the operator $\tfrac{\partial}{\partial t} + \ad(B):\cA^{\orth} \to \cA^{\orth}$}
\label{subsec4.1b}

Let us now consider the issue of discretizing the operator\footnote{ Recall  that
$\tfrac{\partial}{\partial t}$ is the vector  field
on $S^1$ induced by the map $i_{S^1}: \bR \ni s \mapsto e^{2\pi i s} \in S^1$}
 $\tfrac{\partial}{\partial t} + \ad(B)$,
appearing in Eq. \eqref{eq2.42} above.  We will sometimes write $\partial_t$ instead of $\tfrac{\partial}{\partial t}$
 in the following.

\subsubsection{The operators $\hat{L}^{(N)}(b)$,  $\Check{L}^{(N)}(b)$, and $\bar{L}^{(N)}(b)$}

As a preparation let us consider first, for fixed $b \in \ct$,
 the continuum operator
 $$L(b):= \partial_t + \ad(b): C^{\infty}(S^1,\cG) \to  C^{\infty}(S^1,\cG)$$
 We want to find  a natural discrete analogue $L^{(N)}(b): \Map(\bZ_N,\cG) \to \Map(\bZ_N,\cG)$.
 Three natural choices for $L^{(N)}(b)$ are
$$\hat{\partial}_t^{(N)} + \ad(b), \quad \quad \Check{\partial}_t^{(N)} + \ad(b), \quad \quad \bar{\partial}_t^{(N)} + \ad(b)$$
with
\begin{equation} \label{eq_naive_dif.op}
\hat{\partial}_t^{(N)}  : =  N( \tau_1  -\tau_0), \quad \quad
\Check{\partial}_t^{(N)}  := N( \tau_0 - \tau_{-1}), \quad \quad
\bar{\partial}_t^{(N)}  := \tfrac{N}{2} ( \tau_1 - \tau_{-1})
\end{equation}
Here  $1$ and $-1$ are the obvious elements of $\bZ_N = \bZ/ N \bZ$
and $\tau_x:\Map(\bZ_N,\cG) \to  \Map(\bZ_N,\cG)$, for  $x \in \bZ_N$,
is  the translation operator
given by $(\tau_x f)(t) = f(t +x)$ for all $f \in \Map(\bZ_N,\cG)$ and $t \in \bZ_N$.\par

In fact, there are other  very natural discrete analogues of
$L(b)= \partial_t + \ad(b)$,
namely
\begin{subequations}  \label{eq_def_LOp}
\begin{align}
\hat{L}^{(N)}(b) & := N( \tau_1 e^{\ad(b)/N} -1)\\
\Check{L}^{(N)}(b) & := N(1 - \tau_{-1} e^{-\ad(b)/N}) \\
\label{eq_def_LOp_c}
\bar{L}^{(N)}(b) & := \tfrac{N}{2}( \tau_1 e^{\ad(b)/N} - \tau_{-1} e^{-\ad(b)/N}) \quad \text{ if $N$ is even }
\end{align}
\end{subequations}
Here we have written $1$ instead of  $\tau_0$ and $e^{s \ad(b)}$ instead of $\exp(s \ad(b))$ for $s \in \bR$.

\smallskip

The three operators \eqref{eq_def_LOp}  have a crucial advantage
over the operators \eqref{eq_naive_dif.op}: we do not
have to perform a ``continuum limit in the $S^1$-direction''\footnote{cf. Sec. 5.11 in the original
 (= June 2012) version of the present paper, see arXiv:1206.0439v1} in order to obtain the correct values
for $\WLO_{rig}(L)$ defined below.\par

We will now demonstrate that the three ``twisted'' difference operators \eqref{eq_def_LOp}
are indeed very natural\footnote{in fact,
twisted difference operators   play a crucial role in the
definition of the concept of the  Reidemeister torsion.
Since we are looking for a simplicial version of the approach in \cite{Ha4,Ha6}
which involved the Ray-Singer torsion  it should not be a surprise that
the operators \eqref{eq_def_LOp} will be useful for us}.
 We will restrict our attention to  the first of the three operators above, i.e. $\hat{L}^{(N)}(b)$.
Similar considerations can be made for the other two operators
$\Check{L}^{(N)}(b)$ and $\bar{L}^{(N)}(b)$.

\smallskip

Recall that  $i_{S^1}: \bR \ni s \mapsto e^{2\pi i s} \in S^1$.
We will often simply write $i(s)$ instead of $i_{S^1}(s)$, $s \in \bR$.
Recall also that
 we have identified $\bZ_N$ with the subgroup
 $ \{ e^{\frac{2 \pi i}{N} k} \mid 1 \le k \le N\}$ of
 the Lie group\footnote{We will  write the group law of $S^1$ additively} $S^1$.
Note that under this identification
$1 \in \bZ_N$ is identified with $ e^{2 \pi i \frac{1}{N}} = i_{S^1}(1/N) \in S^1$.

\subsubsection*{$ \hat{L}^{(N)}(b)$ is a natural discrete analogue of $L(b)$: first demonstration}

Let $(T_s)_{s \in \bR}$ be the    1-parameter group of  orthogonal operators on the real Hilbert space $L^2_{\cG}(S^1,dt)$ which is  generated by (the closure of) $L(b)$.
We have the following explicit formulas:
\begin{equation} \label{eq_cont_semigr} T_s = \tau_{i(s)} e^{s \ad(b)}, \quad s \in \bR
\end{equation}
\begin{equation} \label{eq_cont_gen} L(b) = \lim_{s \to 0} \tfrac{T_s - T_0}{s}
\quad \text{ on $C^{\infty}(S^1,\cG)$ }
 \end{equation}
where $\tau_t$ is the translation operator  $L^2_{\cG}(S^1,dt) \to L^2_{\cG}(S^1,dt)$
given by $(\tau_t f)(t') = f(t + t')$.

\medskip

As a  discrete analogue  of $(T_s)_{s \in \bR}$
we now take the family $(T^{(N)}_s)_{s \in \frac{1}{N} \bZ}$ given by
$$T^{(N)}_s = \tau_{i(s)} e^{s \ad(b)}, \quad s \in \tfrac{1}{N} \bZ$$
and a natural discrete analogue  of  the RHS  of Eq. \eqref{eq_cont_gen} is then
$$\tfrac{T^{(N)}_{1/N} - T^{(N)}_{0}}{1/N}=
 N(\tau_{i(\frac{1}{N})} e^{\frac{1}{N} \ad(b)} - 1) = \hat{L}^{(N)}(b)$$

\subsubsection*{$ \hat{L}^{(N)}(b)$ is a natural discrete analogue of $L(b)$: second demonstration}

Observe that $\hat{L}^{(N)}(b)$ coincides with $\hat{\partial}^{(N)}_t$
on $\Map(\bZ_N,\ct)$, which is a very natural operator.
For our purposes it is therefore enough to demonstrate that the operator
$$S^{(N)}:= \hat{L}^{(N)}(b)_{|\Map(\bZ_N,\ck)}$$
 is a natural discretization of the  continuum operator
  $$S:= L(b)_{|C^{\infty}(S^1,\ck)}$$
In the special case where $b \in \ct_{reg}$ (which is the only case relevant for us)
$S$ is invertible and it is easy to verify that
$S^{-1}: C^{\infty}(S^1,\ck) \to C^{\infty}(S^1,\ck)$ is given explicitly by
\begin{equation}
(S^{-1}f)(t)  = \bigl((e^{\ad(b)})_{| \ck} - 1_{\ck} \bigr)^{-1} \cdot \int_0^1 e^{s \ad(b)} f(t+ i(s)) ds,
 \quad \quad t \in S^1
\end{equation}
for all $f \in C^{\infty}(S^1,\ck)$.
This suggests the following discrete analogue $S^{-1}_{(N)} : \Map(\bZ_N,\ck) \to \Map(\bZ_N,\ck)$
of $S^{-1}$:
\begin{equation}
(S^{-1}_{(N)} f)(t)  = \bigl((e^{\ad(b)})_{| \ck} - 1_{\ck}  \bigr)^{-1} \cdot \tfrac{1}{N} \sum_{k=0}^{N-1}
\bigl[ e^{s \ad(b)} f(t+ i(s))\bigr]_{| s=k/N}, \quad \quad t \in \bZ_N
\end{equation}
for all $f \in \Map(\bZ_N,\ck)$.
 Clearly, if $S^{-1}_{(N)}$ is invertible then
$(S^{-1}_{(N)})^{-1}$ can be considered as a discrete analogue of $S$.
Now a short computation shows that
$S^{-1}_{(N)} \cdot S^{(N)}  = \id_{\Map(\bZ_N,\ck)}$
so  $S^{-1}_{(N)}$ is indeed invertible and we
have  $  (S^{-1}_{(N)})^{-1}  = S^{(N)}  = \hat{L}^{(N)}(b)_{|\Map(\bZ_N,\ck)}$.

\subsubsection{The operator $L^{(N)}(B)$}

For each $B \in \cB(q\cK)$ we will denote by $L^{(N)}(B)$
 the operator $\cA^{\orth}(K) \to \cA^{\orth}(K)$ which, under the identification
 \begin{equation} \label{eq_cA^{orth}(K)ident}
\cA^{\orth}(K) \cong  \Map(\bZ_N,C^1(K_1,\cG)) \oplus   \Map(\bZ_N,C^1(K_2,\cG))
\end{equation}
 is   given by (cf. Remark \ref{rm_for_appE} below)
 \begin{equation} \label{def_LN} L^{(N)}(B) = \left( \begin{matrix}
 \hat{L}^{(N)}(B) && 0 \\
0 && \Check{L}^{(N)}(B)
\end{matrix} \right)
\end{equation}
Here $\hat{L}^{(N)}(B):  \Map(\bZ_N, C^1(K_1,\cG))  \to \Map(\bZ_N, C^1(K_1,\cG)) $
and $\Check{L}^{(N)}(B):  \Map(\bZ_N, C^1(K_2,\cG))  \to \Map(\bZ_N, C^1(K_2,\cG)) $
are given by
\begin{align*}
 (\hat{L}^{(N)}(B)  A^{\orth}_1)(t)(e) & =  \hat{L}^{(N)}(B(\bar{e})) \cdot A^{\orth}(t)(e)  \quad
 \forall e \in \face_1(K_1), t \in  \bZ_N,  A^{\orth}_1 \in \Map(\bZ_N, C^1(K_1,\cG)\\
 (\Check{L}^{(N)}(B)  A^{\orth}_2)(t)(e) & =  \Check{L}^{(N)}(B(\bar{e})) \cdot A^{\orth}(t)(e)  \quad
 \forall e \in \face_1(K_2), t \in  \bZ_N,  A^{\orth}_2 \in \Map(\bZ_N, C^1(K_2,\cG)
 \end{align*}
where  $A^{\orth}_j \in \Map(\bZ_N,C^1(K_j,\cG))$, $j \in \{1,2\}$,
are the  components of
$A^{\orth} \in \cA^{\orth}(K)$ w.r.t the decomposition \eqref{eq_cA^{orth}(K)ident} above.

\begin{remark} \rm
Alternatively, $\hat{L}^{(N)}(B)$ and $\Check{L}^{(N)}(B)$ can be characterized
by \begin{subequations}
\begin{align} \label{eq_LN_ident1}
\hat{L}^{(N)}(B)) & \cong \oplus_{\bar{e} \in  \face_0(K_1 | K_2)}
 \hat{L}^{(N)}(B(\bar{e})) \\
 \label{eq_LN_ident2} \Check{L}^{(N)}(B)) & \cong \oplus_{\bar{e} \in  \face_0(K_1 | K_2)}
 \Check{L}^{(N)}(B(\bar{e}))
\end{align}
\end{subequations}
where $\face_0(K_1 | K_2)$ is as in Eq. \eqref{eq_def_K1|K2} above.
In Eqs.  \eqref{eq_LN_ident1} and \eqref{eq_LN_ident2} we used the obvious identification
$$ \Map(\bZ_N, C^1(K_j,\cG)) \cong \oplus_{e \in  \face_1(K_j)}
\Map(\bZ_N,\cG) \cong \oplus_{\bar{e} \in  \face_0(K_1 | K_2)}
\Map(\bZ_N,\cG)$$
\end{remark}

\begin{remark} \label{rm_for_appE} \rm
  Regarding the definition of the operator    $L^{(N)}(B)$ above
one might wonder why instead of the operators $\hat{L}^{(N)}(b)$ and $\Check{L}^{(N)}(b)$
we did not use the  more ``symmetric'' operators  $\bar{L}^{(N)}(b)$ appearing
in Eq. \eqref{eq_def_LOp_c} above.
 In fact, we will see in  Appendix D in \cite{Ha7b} that -- after making the aforementioned
 transition to  $BF_3$-theory (cf. the beginning of Sec. \ref{sec4} above) --
  one actually can work with the   operators $\bar{L}^{(N)}(b)$ in a natural way.
 \end{remark}

\subsection{Definition of $S^{disc}_{CS}(A^{\orth},B)$}
\label{subsec4.2}

 Let us now  introduce a
simplicial analogue for
\begin{equation} \label{eq4.7}
S_{CS}(A^{\orth},B) =  \pi k \bigl[ \ll A^{\orth},
 \star  \bigl(\tfrac{\partial}{\partial t} + \ad(B) \bigr) A^{\orth} \gg_{\cA^{\orth}}
+ 2 \ll \star A^{\orth},  dB \gg_{\cA^{\orth}} \bigr],
\end{equation}
cf.  Eq. \eqref{eq2.42} above. In order to do so  we have to look
for suitable  simplicial analogues of the mappings $ \star:
\cA_{\Sigma} \to \cA_{\Sigma}$,
 $\star: C^{\infty}(S^1, \cA_{\Sigma}) \to C^{\infty}(S^1, \cA_{\Sigma})$,
  $ d: C^{\infty}(\Sigma,\cG) \to  \cA_{\Sigma}$,
and the scalar product  $\ll \cdot , \cdot \gg_{\cA^{\orth}}$ on
  $\cA^{\orth} \cong C^{\infty}(S^1, \cA_{\Sigma})$
appearing in Eq. \eqref{eq4.7}.\par

For the transition from the continuum setting to the simplicial setting
let us now make the following replacements:

\medskip

\begin{tabular}{l c l}
$\ll \cdot , \cdot \gg_{\cA^{\orth}}$ & $\longrightarrow$ & $\ll \cdot , \cdot \gg_{\cA^{\orth}(q\cK)}$\\

\ &  \ \\

$d$ & $\longrightarrow$ & $d_{q\cK}$ \\

\ &  \ \\

$\star$ & $\longrightarrow$ &  $\biggl( \begin{matrix} 0 && \star_{K_2} \\
\star_{K_1} && 0 \end{matrix} \biggl) =: \star_K$ \\

\end{tabular}

\medskip

Here  $d_{q\cK}: C^0(q\cK,\ct) \to C^1(q\cK,\ct)$ is as  in Sec. \ref{sec4.0}.
The matrix operator notation for  the operator $\star_K$ refers  to the
identification \eqref{eq_cA^{orth}(K)ident}
and the  operators     $\star_{K_1}:\Map(\bZ_N,C^1(K_1,\cG)) \to \Map(\bZ_N,C^1(K_2,\cG)) $, \
 $\star_{K_2}  :\Map(\bZ_N,C^1(K_2,\cG)) \to
 \Map(\bZ_N,C^1(K_1,\cG))$
 appearing on the secondary diagonal  are the linear isomorphisms
defined in the obvious way\footnote{i.e.
by $(\star_{K_j} A^{\orth}_j)(t) = \star_{K_j} (A^{\orth}_j(t))$
for each $ A^{\orth}_j \in \Map(\bZ_N,C^1(K_j,\cG))$, $t \in \bZ_N$, and $j=1,2$
where  $\star_{K_j}: C^1(K_j,\cG) \to C^{1}(K_{3-j},\cG)$  is given as in  Sec. \ref{sec4.0} above}.\par

Using the replacements listed above and the operator $L^{(N)}(B)$ introduced in Sec. \ref{subsec4.1b} above
we now arrive at the following simplicial analogue for
$S_{CS}(A^{\orth},B)$:
\begin{equation} \label{eq4.10}
S^{disc}_{CS}(A^{\orth},B) :=   \pi  k \biggl[ \ll A^{\orth},
\star_K  L^{(N)}(B)
   A^{\orth} \gg_{\cA^{\orth}(q\cK)}
 + 2 \ll  \star_K A^{\orth},  d_{q\cK}  B \gg_{\cA^{\orth}(q\cK)}  \biggr]
\end{equation}
 for all $A^{\orth} \in \cA^{\orth}(K) \subset \cA^{\orth}(q\cK) $ and $B \in \cB(q\cK)$.
   (Observe that $d_{q\cK}  B \in \cA_{\Sigma}(q\cK) \subset \cA^{\orth}(q\cK)$  according to Convention \ref{conv4} in Sec. \ref{subsec4.1} above.)

\begin{proposition}
The  operator  $\star_K L^{(N)}(B): \cA^{\orth}(K) \to \cA^{\orth}(K)$
is symmetric  w.r.t to the scalar product $\ll  \cdot, \cdot  \gg_{\cA^{\orth}(q\cK)}$.
\end{proposition}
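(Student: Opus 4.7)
The plan is to reduce the claim to a single combinatorial commutation and verify it edge-by-edge.

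First, I exploit the block decomposition $\cA^{\orth}(K)\cong\Map(\bZ_N,C^1(K_1,\cG))\oplus\Map(\bZ_N,C^1(K_2,\cG))$. The two summands are orthogonal w.r.t.\ $\ll\cdot,\cdot\gg_{\cA^{\orth}(q\cK)}$, since no edge of $q\cK$ lies simultaneously in the image of $C^1(K_1)$ and of $C^1(K_2)$ under $\psi$. In this decomposition
\[
\star_K L^{(N)}(B)=\begin{pmatrix}0 & \star_{K_2}\Check{L}^{(N)}(B)\\ \star_{K_1}\hat{L}^{(N)}(B) & 0\end{pmatrix},
\]
and a block-antidiagonal operator on an orthogonal direct sum is self-adjoint iff its upper-right block equals the adjoint of its lower-left block. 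So it suffices to prove
\[
\star_{K_2}\Check{L}^{(N)}(B)=\hat{L}^{(N)}(B)^{*}\,\star_{K_1}^{*}.
\]

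Second, I would compute the two adjoints on the right separately. Since $\star_{K_1}$ is an isometry, $\star_{K_1}^{*}=\star_{K_1}^{-1}$, and the $p=1$ case of $\star_{\cK}^{-1}=(-1)^{p(2-p)}\star_{\cK'}$ gives $\star_{K_1}^{*}=-\star_{K_2}$. To handle $\hat{L}^{(N)}(b)^{*}$ for fixed $b\in\ct$ on $\Map(\bZ_N,\cG)$: the translations satisfy $\tau_{1}^{*}=\tau_{-1}$ (they are permutations, hence unitary for the normalized sum inner product); the operator $e^{\ad(b)/N}$ acts only on the $\cG$-fiber and has adjoint $e^{-\ad(b)/N}$, because $\ad(b)$ is skew-symmetric w.r.t.\ $\langle\cdot,\cdot\rangle$ (infinitesimal $\Ad$-invariance of the normalized Killing form); and $\tau_{\pm1}$ commutes with $e^{\pm\ad(b)/N}$ since they act on disjoint tensor factors. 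Substituting in the definition of $\hat{L}^{(N)}(b)$ yields
\[
\hat{L}^{(N)}(b)^{*}=N(e^{-\ad(b)/N}\tau_{-1}-1)=N(\tau_{-1}e^{-\ad(b)/N}-1)=-\Check{L}^{(N)}(b).
\]
Because $\hat{L}^{(N)}(B)$ and $\Check{L}^{(N)}(B)$ are block-diagonal over $\face_1(K_1)$ with blocks $\hat{L}^{(N)}(B(\bar e))$ and $\Check{L}^{(N)}(B(\bar e))$, this lifts edge-by-edge to $\hat{L}^{(N)}(B)^{*}=-\Check{L}^{(N)}(B)$ on $\Map(\bZ_N,C^1(K_1,\cG))$.

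Third, substituting the two adjoint identities transforms the reduced condition into
\[
\star_{K_2}\Check{L}^{(N)}(B)=(-\Check{L}^{(N)}(B))(-\star_{K_2})=\Check{L}^{(N)}(B)\,\star_{K_2},
\]
i.e.\ the commutation of $\star_{K_2}$ with $\Check{L}^{(N)}(B)$ (viewed as an operator on $\Map(\bZ_N,C^1(K_2,\cG))$ on the left and on $\Map(\bZ_N,C^1(K_1,\cG))$ on the right). This is a purely combinatorial fact about $q\cK$: $\star_{K_2}$ merely sends each $e\in\face_1(K_2)$ to $\pm\star e\in\face_1(K_1)$ and leaves the $\bZ_N$-fiber untouched, while for dual edges one has $\bar e=\overline{\star e}\in\face_0(K_1|K_2)$, so the scalar $B(\bar e)$ governing the $\Check{L}^{(N)}$-block agrees on both sides. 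I expect the main obstacle to be essentially bookkeeping — tracking the signs from $\star^{-1}=(-1)^{p(2-p)}\star'$ and from the adjoint of $\hat{L}^{(N)}$, and confirming the orthogonality of the direct-sum decomposition; the structural reason for the symmetry is precisely that $(\hat{L}^{(N)},\Check{L}^{(N)})$ is adjoint up to a sign, exactly compensated by the two $\star$-signs arising from the $p=1$ formula.
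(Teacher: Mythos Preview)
Your proof is correct and follows essentially the same route as the paper: the paper records that $\star_K$ is anti-symmetric and that the adjoint of $L^{(N)}(B)$ is the block-diagonal operator with entries $-\Check{L}^{(N)}(B)$ and $-\hat{L}^{(N)}(B)$, and then says the assertion ``easily follows''. You have simply unpacked this last step more explicitly, in particular making visible the commutation $\star_{K_2}\Check{L}^{(N)}(B)=\Check{L}^{(N)}(B)\,\star_{K_2}$ (which holds because dual edges share the same barycenter $\bar e$) that the paper leaves implicit.
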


\begin{proof} From the definition of $\star_K$, Eq. \eqref{eq4.2}
and the last paragraph of Sec. \ref{subsec4.0.5}
it follows that $\star_K$  is anti-symmetric  w.r.t to the scalar product
$\ll  \cdot, \cdot  \gg_{\cA^{\orth}(q\cK)}$.
On the other hand a short computation shows that the adjoint of $L^{(N)}(B)$
  w.r.t to the scalar product $\ll  \cdot, \cdot  \gg_{\cA^{\orth}(q\cK)}$
 is the operator
$$  \left( \begin{matrix}
 - \Check{L}^{(N)}(B) && 0 \\
0 && - \hat{L}^{(N)}(B)
\end{matrix} \right)$$
 From these two observations the assertion easily follows.
\end{proof}

\subsection{Definition of $\Hol^{disc}_{l}(A^{\orth},  B)$ and $\Hol^{disc}_{R}(A^{\orth},  B)$}
\label{subsec4.3}

We will now introduce two ``simplicial versions''
for the expression $ \Hol_{l}(A^{\orth},   B) = \Hol_{l}(A^{\orth}  + B dt)$
appearing in Eq. \eqref{eq2.49c} above.
The first version is $\Hol^{disc}_{l}(A^{\orth},  B)$ where $l$ is a simplicial loop
 and the second version is   $\Hol^{disc}_{R}(A^{\orth},  B)$
where $R$ is a closed simplicial ribbon.

\subsubsection{Simplicial loop case}
\label{subsubsec4.3.1}

We start with the
observation that for $A^{\orth} \in \cA^{\orth} \cong
C^{\infty}(S^1,\cA_{\Sigma})$ and $B \in \cB =
C^{\infty}(\Sigma,\ct)$ we have
\begin{equation} \label{eq4.16} A^{\orth}(l'(t))  = A^{\orth}(l_{S^1}(t))(l'_{\Sigma}(t)), \quad \quad
(Bdt) (l'(t))  = B(l_{\Sigma}(t)) \cdot dt(l'_{S^1}(t))
\end{equation}
 for $t \in [0,1]$.
 Here we have set $l_{\Sigma}:=
\pi_{\Sigma} \circ l$ and $l_{S^1}:= \pi_{S^1} \circ l$
where $\pi_{\Sigma} : \Sigma \times S^1 \to \Sigma$,  $\pi_{S^1} : \Sigma \times S^1 \to S^1$
are the canonical projections.
From  Eqs. \eqref{eq2.5}, \eqref{eq2.49c}, and \eqref{eq4.16} we obtain
\begin{equation} \label{eq4.17}
 \Hol_{l}(A^{\orth},   B)  = \lim_{n \to \infty} \prod_{k = 1}^n
 \exp\biggl(\bigl[\tfrac{1}{n} \bigl(
  A^{\orth}(l_{S^1}(t))(l'_{\Sigma}(t)) +
 B(l_{\Sigma}(t)) \cdot dt(l'_{S^1}(t)) \bigr) \bigr]_{| t = \tfrac{k}{n}}\biggr)
\end{equation}

Let us now discretize the RHS  of this equation. Let $l =
(l^{(k)})_{k \le n}$  be a simplicial loop in
 $q\cK \times \bZ_N$ and
let  $l_{\Sigma} = (l^{(k)}_{\Sigma})_{k \le n}$ and $l_{S^1} = (l^{(k)}_{S^1})_{k \le n}$
denote the projected simplicial loops in $q\cK$  and $\bZ_N$, cf. Sec. \ref{subsec4.0.4d}
 above.\par

In contrast to the situation in the continuum setting where the
parameter $n \in \bN$ was sent to $\infty$ we will leave $n$ fixed.
 It is now natural to make the replacements

\medskip

\begin{tabular}{l c l}

$l_{\Sigma}(\tfrac{k}{n})$   & $\longrightarrow$ & $\start l^{(k)}_{\Sigma}$\\
$l_{S^1}(\tfrac{k}{n})$ & $\longrightarrow$ & $\start l^{(k)}_{S^1}$ \medskip \\

$\tfrac{1}{n} l'_{\Sigma}(\tfrac{k}{n})$ & $\longrightarrow$ & $l^{(k)}_{\Sigma}$\\
$\tfrac{1}{n} l'_{S^1}(\tfrac{k}{n})$ & $\longrightarrow$ & $l^{(k)}_{S^1}$\\
$ dt$ & $\longrightarrow$ & $dt^{(N)}$
\end{tabular}

\medskip

\noindent with $dt^{(N)} \in C^1(\bZ_{N},\bR)$
  given by
 $dt^{(N)}(e)= \tfrac{1}{N}$ for all $e \in \face_1(\bZ_N)$.
 We will make the identification $C^1(\bZ_{N},\bR) \cong \Hom_{\bR}(C_1(\bZ_{N}),\bR)$
 and  $C^1(q\cK,\cG) \cong \Hom_{\bR}(C_1(q\cK),\cG)$. \par

 Applying the replacements above to the
RHS  of Eq. \eqref{eq4.17}  we   arrive at the ansatz
\begin{equation} \label{eq4.18}
\Hol^{disc}_{l}(A^{\orth},   B) :=    \prod_{k=1}^n \exp\biggl(
A^{\orth}(\start l^{(k)}_{S^1})(l^{(k)}_{\Sigma})  +
  B(\start l^{(k)}_{\Sigma}) \cdot dt^{(N)}(l^{(k)}_{S^1}) \biggr)
\end{equation}
for $A^{\orth} \in \cA^{\orth}(K) \subset \cA^{\orth}(q\cK)$ and $B \in \cB(q\cK)$.

\subsubsection{Simplicial ribbon case}
\label{subsubsec4.3.2}

 Instead of working with a  simplicial loops $l$
in $q\cK \times \bZ_N$ let us now work with closed simplicial
ribbons $R$ in $q\cK \times \bZ_N$. \par

According to Remark \ref{rm_appK_0}  above
 every (closed) simplicial ribbon $R  = (F_k)_{k \le n}$, $n \in \bN$,
  in $q\cK \times \bZ_N$ induces a pair  $(l,l')$ of
  simplicial loops $l = (l^{(k)})_{k \le n}$ and   $l' = (l^{'(k)})_{k \le n}$
      in $q\cK \times \bZ_N$ in the obvious way.
Let $l_{\Sigma}$, $l'_{\Sigma}$, $l_{S^1}$, $l'_{S^1}$  denote
the corresponding ``projected'' simplicial loops in $q\cK$ and $\bZ_N$,
 cf. Sec. \ref{subsec4.0.4d} above.

\smallskip

 We can then  introduce the following ribbon
analogue of Eq. \eqref{eq4.18} above
\begin{multline} \label{eq4.21}
\Hol^{disc}_{R}(A^{\orth},   B) :=
  \prod_{k=1}^n \exp\biggl( \tfrac{1}{2}\bigl(A^{\orth}(\start l^{(k)}_{S^1})\bigr)(l^{(k)}_{\Sigma}) + \tfrac{1}{2} \bigl(A^{\orth}(\start l^{'(k)}_{S^1})\bigr)(l^{'(k)}_{\Sigma}) \\
  +  \tfrac{1}{2} B(\start l^{(k)}_{\Sigma})
 \cdot dt^{(N)}(l^{(k)}_{S^1}) + \tfrac{1}{2} B(\start l^{'(k)}_{\Sigma})  \cdot dt^{(N)}(l^{'(k)}_{S^1}) \biggr)
\end{multline}

\subsection{Definition of $\Det^{disc}_{FP}(B)$}
\label{subsec4.4}

We will need a discrete analogue $\Det^{disc}_{FP}(B)$ of the factor $\Det_{FP}(B) =   \det\bigl(1_{\ck}-\exp(\ad(B))_{|\ck}\bigr)$ appearing in Eq. \eqref{eq2.49} above. We make the
ansatz
\begin{equation} \label{eq4.23}
\Det^{disc}_{FP}(B) :=
 \prod_{x \in \face_0(q\cK)}   \det\bigl(1_{\ck}-\exp(\ad(B(x)))_{| \ck}\bigr)
 \end{equation}
for every $B \in \cB(q\cK)$ where $1_{\ck}$ is the identity operator on $\ck$.

\subsection{Discrete version of $1_{C^{\infty}(\Sigma,\ct_{reg})}(B)$}
\label{subsec4.6}

Let us introduce a discrete analogue of the indicator function
$1_{C^{\infty}(\Sigma,\ct_{reg})}(B)$.
The obvious candidate is
$\prod_{x \in \face_0(q\cK)}
1_{\ct_{reg}}(B(x))$.
However, with this choice we would get some problems
later due to the fact that $1_{\ct_{reg}}: \ct \to \{0,1\} \subset [0,1]$
is non-continuous.
For this reason we will regularize the function
$1_{\ct_{reg}}$. We fix a family
$(1^{(s)}_{\ct_{reg}})_{s > 0}$,
of elements of $C^{\infty}_{\bR}(\ct)$,
 with the following properties:
\begin{itemize}
\item $\Image(1^{(s)}_{\ct_{reg}}) \subset [0,1]$ \quad and \quad
 $\supp(1^{(s)}_{\ct_{reg}}) \subset  \ct_{reg}$
\item $1^{(s)}_{\ct_{reg}} \to 1_{\ct_{reg}}$ pointwise as $s \to 0$
\item Each $1^{(s)}_{\ct_{reg}}$ is  invariant under the operation of the affine Weyl group $\cW_{\aff}$
 on $\ct$, cf. part \ref{appB} of the Appendix below.
\end{itemize}

\noindent
For fixed $s > 0$ we  take as the discrete analogue of $1_{C^{\infty}(\Sigma,\ct_{reg})}(B)$
 the expression
\begin{equation} \label{eq4.28} \prod_{x} 1^{(s)}_{\ct_{reg}}(B(x)):=
\prod_{x \in \face_0(q\cK)}
1^{(s)}_{\ct_{reg}}(B(x))
\end{equation}
Later we will let $s \to  0$.

\subsection{Discrete version of the  decomposition
$\cA^{\orth} =  \Check{\cA}^{\orth} \oplus \cA^{\orth}_c$}
\label{subsec4.7}

We introduce a decomposition of $\cA^{\orth}(K)$, which is
analogous to the decomposition $\cA^{\orth} =  \Check{\cA}^{\orth}
\oplus \cA^{\orth}_c$ in Sec.
 \ref{subsubsec2.3.3}. In order to do so we
introduce the notation
\begin{equation} \label{eq4.29}
  \cA_{\Sigma,V}(K)  := C^1(K_1,V) \oplus C^1(K_2,V)
\end{equation}
for every real vector space $V$. We then have the following analogue
of the decomposition $\cA^{\orth} =  \Check{\cA}^{\orth} \oplus
\cA^{\orth}_c$ in Sec. \ref{subsubsec2.3.3} above namely,
\begin{equation} \label{eq4.30} \cA^{\orth}(K) =  \Check{\cA}^{\orth}(K) \oplus \cA^{\orth}_c(K)
\end{equation}
where
\begin{subequations} \label{eq4.31}
\begin{align}
 \cA^{\orth}_c(K) & := \{ A^{\orth} \in  \cA^{\orth}(K) \mid A^{\orth}
 \text{ is constant and } \cA_{\Sigma,\ct}(K)\text{-valued}\} \cong \cA_{\Sigma,\ct}(K) \\
 \Check{\cA}^{\orth}(K) & :=
 \{ A^{\orth} \in  \cA^{\orth}(K) \mid
\sum\nolimits_{t \in \bZ_N}  A^{\orth}(t) \in  \cA_{\Sigma,\ck}(K) \}
 \end{align}
 \end{subequations}
Observe that $\cA_{\Sigma}(K) = \cA_{\Sigma,\cG}(K) \cong
 \cA_{\Sigma,\ct}(K)  \oplus \cA_{\Sigma,\ck}(K)$ and
 that for every $\Check{A}^{\orth} \in  \Check{\cA}^{\orth}(K)$,
  $A^{\orth}_c \in \cA^{\orth}_c(K)$, and $B \in \cB(q\cK)$ we have
\begin{equation} \label{eq4.32}
S^{disc}_{CS}(\Check{A}^{\orth} + A^{\orth}_c ,B) =
S^{disc}_{CS}(\Check{A}^{\orth},B) + S^{disc}_{CS}(A^{\orth}_c,B)
\end{equation}
with the two expressions on the RHS  given explicitly by  Eqs.
\eqref{eq4.33} and \eqref{eq4.35} below.

\smallskip

For referring to the elements of the space $\Check{\cA}^{\orth}(K)$
we will use the variable
$\Check{A}^{\orth}$
and for the elements of the space $\cA_{\Sigma,\ct}(K) \cong \cA^{\orth}_c(K)$
the variable $A^{\orth}_c$.

\subsection{Discrete versions of the two Gauss-type measures in Eq. \eqref{eq2.48}}
  \label{subsec4.8}

\begin{definition} \label{def3.1}
  An ``oscillatory
 Gauss-type measure'' on  a Euclidean vector space $(V, \langle \cdot, \cdot \rangle)$
 is a  complex Borel measure $d\mu$ on $V$
 of the form
 \begin{equation} \label{eq3.1}
 d\mu(x) = \tfrac{1}{Z} e^{ - \tfrac{i}{2} \langle x - m, S (x-m) \rangle} dx
\end{equation}
with $Z \in \bC \backslash \{0\}$,
   $m \in V$, and where  $S$ is a  symmetric endomorphism of $V$  and
 $dx$  the normalized\footnote{i.e. unit hyper-cubes have volume $1$ w.r.t. $dx$}
 Lebesgue measure on $V$.
 Note that $Z$, $m$ and $S$ are uniquely determined by $d\mu$
 so we can use the notation $Z_{\mu}$, $m_{\mu}$ and $S_{\mu}$
 in order to denote these objects.
\begin{enumerate}
\item We call $d\mu$ ``centered''iff $m_{\mu}=0$.

\item  We call $d\mu$ ``degenerate'' iff $S_{\mu}$ is not invertible
 \end{enumerate}
 \end{definition}

Clearly,  we have
\begin{align}  \label{eq4.33} S^{disc}_{CS}(\Check{A}^{\orth},B)
 & =  \pi k  \ll \Check{A}^{\orth}, \star_K  L^{(N)}(B) \cdot
 \Check{A}^{\orth} \gg_{\cA^{\orth}(q\cK)}
\end{align}
for all $\Check{A}^{\orth} \in \Check{\cA}^{\orth}(K) \subset \cA^{\orth}(q\cK)$ and  $B \in \cB(q\cK)$.
Moreover, if $B \in \cB(q\cK)$ fulfills $\prod_{x} 1_{\ct_{reg}}(B(x)) \neq 0$
then $L^{(N)}(B): \Check{\cA}^{\orth}(K) \to \Check{\cA}^{\orth}(K)$
will be injective, cf. Proposition 5.1 in \cite{Ha7b}.
In this case  the  (rigorous) complex measure
 \begin{equation} \label{eq4.34}
 \exp(iS^{disc}_{CS}(\Check{A}^{\orth},B))  D\Check{A}^{\orth}
 \end{equation}
is a non-degenerate  centered
oscillatory Gauss type measure on $\Check{\cA}^{\orth}(K)$.
Here we have equipped $\Check{\cA}^{\orth}(K)$
with the restriction $\ll \cdot,\cdot\gg_{\Check{\cA}^{\orth}(K)}$
of the scalar product $\ll \cdot,\cdot\gg_{{\cA}^{\orth}(q\cK)}$
 onto $\Check{\cA}^{\orth}(K)$ and $D\Check{A}^{\orth}$
 denotes the normalized Lebesgue measure on $\Check{\cA}^{\orth}(K)$
 w.r.t. $\ll \cdot,\cdot\gg_{\Check{\cA}^{\orth}(K)}$.

 \medskip

 Moreover, since
\begin{equation} \label{eq4.35} S^{disc}_{CS}(A^{\orth}_c ,B)
=   2 \pi k  \ll  \star_{K} A^{\orth}_c,  d_{q\cK} B \gg_{\cA^{\orth}(q\cK)}
\end{equation}
it follows that the complex measure
 \begin{equation} \label{eq4.36}
 \exp(i  S^{disc}_{CS}(A^{\orth}_c,B))    (DA^{\orth}_c \otimes  DB)
 \end{equation}
is a (degenerate) centered  oscillatory Gauss type measure
 on $\cA^{\orth}_c(K)  \oplus \cB(q\cK)$. Here we have equipped $\cA^{\orth}_c(K)  \oplus \cB(q\cK)$
with the scalar product  $ \ll \cdot, \cdot \gg_{\cA^{\orth}_c(K)  \oplus
\cB(q\cK)} :=  \ll \cdot, \cdot \gg_{\cA^{\orth}_c(K)} \oplus
 \ll \cdot, \cdot \gg_{\cB(q\cK)}$
 (where $\ll \cdot,\cdot\gg_{{\cA}^{\orth}_c(K)}$
denotes the restriction of the scalar product
$\ll \cdot,\cdot\gg_{{\cA}^{\orth}(q\cK)}$ onto the space $\cA^{\orth}_c(K)$)
 and $DA^{\orth}_c$ and $DB$
 denote the obvious normalized Lebesgue measures.

\smallskip

In the next subsection we will need the following definition

\begin{definition} \label{def3.2} Let  $d\mu$  be an oscillatory
 Gauss-type measure on a  Euclidean vector space $(V, \langle \cdot, \cdot \rangle)$.
 A (Borel) measurable function
  $f: V \to \bC$ will be called improperly integrable w.r.t. $d\mu$
  iff\footnote{Observe that
$\int_{\ker(S_{\mu})}  e^{- \eps \|x\|^2} dx =
(\tfrac{\eps}{\pi})^{-n/2}$. In particular, the factor
$(\tfrac{\eps}{\pi})^{n/2} $ in Eq. \eqref{eq3.2} above  ensures
that also for degenerate oscillatory
 Gauss-type measure the improper integrals  $\int\nolimits_{\sim} 1 \ d\mu$  exists}
 \begin{equation}\label{eq3.2} \int\nolimits_{\sim} f d\mu := \int\nolimits_{\sim} f(x)
   d\mu(x): =
    \lim_{\eps \to 0} (\tfrac{\eps}{\pi})^{n/2} \int f(x) e^{- \eps |x|^2} d\mu(x)
  \end{equation}
  exists. Here  we have set  $n:=\dim(\ker(S_{\mu}))$.
   Note that if $d\mu$ is non-degenerate we have $n=0$ so the factor $(\tfrac{\eps}{\pi})^{n/2}$
is then trivial.
 \end{definition}

\subsection{Definition of  $\WLO^{disc}_{rig}(L)$ and $\WLO_{rig}(L)$}
 \label{subsec4.9}

 For the rest of this
paper let us assume that\footnote{cf. Remark \ref{rm_appK_1} in Sec. \ref{subsec4.0.3b} above}
 $L= (R_1, R_2, \ldots, R_m)$, $m \in \bN$,
is a fixed simplicial ribbon  link   in $q\cK \times \bZ_{N}$ with ``colors''
$(\rho_1,\rho_2,\ldots,\rho_m)$.

\smallskip

Using the definitions of the previous
subsections  we then arrive at the following
 simplicial   analogue $\WLO^{disc}_{rig}(L)$
of the heuristic expression $\WLO(L)$ in Eq. \eqref{eq2.48}

\begin{multline} \label{eq_def_WLOdisc}
\WLO^{disc}_{rig}(L)  :=  \lim_{s \to 0}
  \sum_{y \in I}\int\nolimits_{\sim} \biggl\{ \bigl( \prod_{x} 1^{(s)}_{\ct_{reg}}(B(x))
  \bigr) \Det^{disc}_{FP}(B)\\
\times \biggl[
\int\nolimits_{\sim} \biggl( \prod_{i=1}^m  \Tr_{\rho_i}\bigl( \Hol^{disc}_{R_i}(\Check{A}^{\orth} +  A^{\orth}_c, B)\bigr) \biggr)  \exp(iS^{disc}_{CS}(\Check{A}^{\orth}, B))
D\Check{A}^{\orth} \biggr] \\
 \times        \exp\bigl(- 2 \pi i k  \langle y, B(\sigma_0) \rangle \bigr) \biggr\}
  \exp(i  S^{disc}_{CS}(A^{\orth}_c,B))    (DA^{\orth}_c \otimes  DB)
\end{multline}
where $\sigma_0$ is an arbitrary fixed point of $\face_0(q\cK)$
which does not lie in   $\bigcup_{i \le m} \Image(R^i_{\Sigma})$.
Here  $R^i_{\Sigma} := (R_i)_{\Sigma}$ is defined as in Sec. \ref{subsec4.0.4d} above\footnote{Recall that according to part i) of Remark \ref{rm_appK_0} above
  we can consider $R^i_{\Sigma}$ as a map $S^1 \times [0,1] \to \Sigma$
in a natural way}.

\smallskip

Finally, we set\footnote{at this stage we do not yet claim that $\WLO^{disc}_{rig}(L)$ and $\WLO_{rig}(L)$
are actually well-defined}
\begin{equation} \label{eq4.42}
\WLO_{rig}(L) :=    \frac{\WLO^{disc}_{rig}(L)}{\WLO^{disc}_{rig}(\emptyset)}
\end{equation}
where   $\emptyset$ is the ``empty'' link\footnote{so
$\WLO^{disc}_{rig}(\emptyset)$ is a simplicial analogue of the partition function
$Z(\Sigma \times S^1)$}.

 \begin{remark}  \label{rm3.2} \rm
   We could equally well state our main result below in terms of  $\WLO^{disc}_{rig}(L)$ rather than
  $\WLO_{rig}(L)$.  For stylistic reasons we prefer $\WLO_{rig}(L)$.
   \end{remark}

\subsection{Two modifications}
 \label{subsec4.10}

One might expect that -- at least for  simplicial ribbon links $L$ which are equivalent
to the  framed links of the simple type mentioned in Sec. \ref{subsubsec2.3.3} above --
the expression  $\WLO_{rig}(L)$ is well-defined and that we have
\begin{equation} \label{eq4.42b}
\WLO_{rig}(L) = \frac{|L|}{|\emptyset|}
\end{equation}
where  $|\cdot|$ is is the shadow invariant associated to $\cG$ and $k$, cf. \eqref{eq2.50} above and
 cf. Remark \ref{rm3.4} below.\par

It turns out, however, that in order to obtain this result
 we have to modify our original approach.
In order to do so we will now
make two modifications (Mod1) and (Mod2).
More precisely, we will redefine the notation
$\WLO^{disc}_{rig}(L)$ according to the modifications (Mod1) and (Mod2)
which we will now describe.
  $\WLO_{rig}(L)$ will again be given by Eq. \eqref{eq4.42} (with the redefined version of $\WLO^{disc}_{rig}(L)$
  appearing on the RHS).

\subsubsection*{Modification (Mod1)}

 Let us  reconsider the question
of what a suitable  discrete analogue
$\Det^{disc}_{FP}(B)$ of the continuum expression
 $\Det_{FP}(B) = \det\bigl(1_{{\ck}}-\exp(\ad(B))_{| {\ck}}\bigr)$
 should be.
Above we made the ansatz
\begin{equation} \label{eq_ Det_disc_FP0} \Det^{disc}_{FP}(B) =
 \prod_{x \in \face_0(q\cK)}  \det\bigl(1_{{\ck}}-\exp(\ad(B(x)))_{| {\ck}}\bigr)
\end{equation}
We will now modify Eq. \eqref{eq_ Det_disc_FP0} and make instead the ansatz
\begin{equation} \label{eq_ Det_disc_FP1}
\Det^{disc}_{FP}(B) :=
 \prod_{x \in \face_0(q\cK)}  \det\nolimits^{1/2}\bigl(1_{{\ck}}-\exp(\ad(B(x)))_{| {\ck}}\bigr)
\end{equation}
where $\det^{1/2}\bigl(1_{{\ck}}-\exp(\ad(\cdot))_{| {\ck}}\bigr): \ct \to \bR$
is any one of the two\footnote{Since $\det\bigl(1_{{\ck}}-\exp(\ad(b))_{| {\ck}}\bigr)   =  \prod_{{\alpha} \in {\cR}} (1 - e^{2 \pi i \langle \alpha, b \rangle})     = \prod_{{\alpha} \in {\cR}_+} \bigl( 4 \sin^2( \pi    \langle \alpha, b \rangle ) \bigr)$ for all $b \in \ct$
where $\cR$ is the set of real roots associated to $(\cG,\ct)$ and $\cR_+ \subset \cR$
is the subset of positive real roots (w.r.t. a fixed Weyl chamber)
 we conclude that either  $\forall b \in \ct: \det\nolimits^{1/2}\bigl(1_{{\ck}}-\exp(\ad({b}))_{|{\ck}}\bigr) =  \prod_{{\alpha} \in {\cR_+}} \bigl( 2 \sin( \pi    \langle \alpha, b \rangle ) \bigr)$
  or $\forall b \in \ct: \det\nolimits^{1/2}\bigl(1_{{\ck}}-\exp(\ad({b}))_{|{\ck}}\bigr) =  - \prod_{{\alpha} \in {\cR_+}} \bigl( 2 \sin( \pi    \langle \alpha, b \rangle ) \bigr)$} {\em smooth} functions $f:\ct \to \bR$ fulfilling $f(b)^2 = \det\bigl(1_{{\ck}}-\exp(\ad(b))_{| {\ck}}\bigr)$ for all $b \in \ct$.

\begin{remark}  \rm
At the moment we do not have a totally clear understanding of the
 fact that we have to  include the exponent $1/2$ in Eq. \eqref{eq_ Det_disc_FP1}
if we want to obtain the correct values for the WLOs.
This point seems to get clearer
after making  the transition to the $BF_3$-theoretic setting, see  Sec. \ref{sec7} below
 and Sec. 7 in \cite{Ha7b}. \par
Alternatively, one can simply bypass this point by rewriting the heuristic Eq. \eqref{eq2.48} in a suitable way
before discretizing it, cf. Sec. 2.5 and Sec. 3.6 in \cite{Ha9}.
  \end{remark}

\subsubsection*{Modification (Mod2)}

In the following we will replace\footnote{so, in particular, $DB$ will denote the normalized Lebesgue
measure on $\cB_0(q\cK)$ where we have equipped $\cB_0(q\cK)$ with the scalar product induced
by the one on $\cB(q\cK)$} the space $\cB(q\cK) = C^0(q\cK,\ct)$  appearing on the RHS of Eq. \eqref{eq_def_WLOdisc}
by a certain subspace $\cB_0(q\cK)$ (chosen as naturally
as possible) with the property that
\begin{equation} \label{eq_obs1}
\ker(\pi \circ (d_{q\cK})_{| \cB_0(q\cK)}) = \cB_{c}(q\cK)
\end{equation}
holds where $\pi: C^1(q\cK,\ct) \to C^1(K,\ct)$ is the orthogonal projection w.r.t.
$\ll \cdot, \cdot \gg_{\cA_{\Sigma}(q\cK)}$ and where we have set
\begin{equation}
\cB_{c}(q\cK):= \{ B \in C^0(q\cK,\ct) \mid B \text{ constant}\}
\end{equation}
We remark that Eq. \eqref{eq_obs1} will play a crucial role in the proof of Theorem \ref{main_theorem}.
(We also remark that we cannot choose simply $\cB_0(q\cK) =  \cB(q\cK)$
because $\ker(\pi \circ d_{q\cK}) \neq \cB_{c}(q\cK)$).

\smallskip

The following choice is probably the best  when working with  simplicial ribbons in $\cK \times \bZ_N$ instead of
simplicial ribbons in $q\cK \times \bZ_N$ (cf. Remark \ref{rm_full_ribbons} below,
 Sec. \ref{sec7} in \cite{Ha7b}, and \cite{Ha9}).

\begin{choice} \label{example3} \rm
$\cB_0(q\cK):= \psi(\cB(\cK))$ where $\cB(\cK):= C^0(\cK,\ct)$ and where $\psi: \cB(\cK) \to \cB(q\cK)$
is the linear injection which associates to each $B \in  \cB(\cK)$ the extension $\bar{B} \in \cB(q\cK)$
given by\footnote{In other words,  $\cB_0(q\cK):= \{ B  \in \cB(q\cK) \mid
B(\bar{F}) = \mean_{x \in \face_0(\cK) \cap F} B(x) \ \forall F \in
 \bigcup_{p=0}^2 \face_p(\cK) \}$
 where $\bar{F}$ is the barycenter of $F$ (cf. Convention \ref{conv_identif} in part \ref{appF} of the Appendix below)}
$$\bar{B}(x) = \mean_{y \in C(x)} B(y) \quad \quad \text{for all $x \in \face_0(q\cK)$} $$
with  ``mean'' referring to the arithmetic mean.
Above $C(x)$ denotes the set of all $y \in \face_0(\cK)$ which
 lie in the closure of the unique open cell of $\cK$  containing $x$.
\end{choice}

In the present paper and the main part of  \cite{Ha7b}
we will work with simplicial ribbons in   $q\cK \times \bZ_N$. In this case Choice \ref{example3} will not work
and we will   make  the following choice (which is a bit technical but justified by Remark \ref{rm_loc_constant} above):

\begin{choice} \label{example1} \rm

$\cB_0(q\cK) :=  \cB^{loc}_{\sigma_0}(q\cK) \cap \cB_{\aff}(q\cK)$
 with
\begin{align*} \cB^{loc}_{\sigma_0}(q\cK) & := \{ B \in \cB(q\cK) \mid B \text{ is constant on $U(\sigma_0) \cap \face_0(q\cK)$}  \}, \quad \\
 \cB_{\aff}(q\cK) & := \{ B \in \cB(q\cK) \mid B \text{ is affine on each $F \in \face_2(q\cK)$} \}
 \end{align*}
Here
$  U(\sigma_0) \subset \Sigma$ is the union of those few $F \in \face_2(q\cK)$ which contain the point $\sigma_0$
and by ``$B$ is affine on $F$'' we mean that
\begin{equation} \label{eq_for_aff_B} B(p_1)+B(p_4) = B(p_2)+B(p_3)
\end{equation}
 holds where $p_1, p_2, p_3,p_4$ are the four vertices
of $F$ and numbered in such a way that $p_1$ is diagonal to $p_4$ and therefore
$p_2$ is diagonal to $p_3$.\par

Clearly, $ \cB^{loc}_{\sigma_0}(q\cK)$ is a simplicial analogue of the space $ \cB^{loc}_{\sigma_0}$
appearing in Remark \ref{rm_loc_constant} above.

\end{choice}

\section{The main result}
\label{sec6}

\subsection{A special class of simplicial links}
\label{subsec6.1}

As a preparation for Sec. \ref{subsec6.2} below let us
consider briefly a simple class
of  simplicial links $L= (l_1, l_2, \ldots, l_m)$ in $q\cK \times \bZ_N$.
This class consists of  those simplicial links
 $L$ which fulfill  the following two conditions

\begin{description}
\item[(NCP)]  The  link $L$ has no crossing points,
 i.e. the loops $l^1_{\Sigma},l^2_{\Sigma},\ldots,l^m_{\Sigma}$ in $\Sigma$
 neither intersect each other nor themselves\footnote{i.e.  if $i \neq j$ then  $l^i_{\Sigma}$ and $l^j_{\Sigma}$ have disjoint images and each $l^i_{\Sigma}$, $i \le m$  is an embedding  $S^1 \to \Sigma$}.

\item[(NH)]
Each $l^i_{\Sigma}$ is null-homologous.
\end{description}
Here by $l^i_{\Sigma}$ we denote the (reduced) $\Sigma$-projection\footnote{more precisely: $l^i_{\Sigma}$ is the simplicial loop in $q\cK$ given by $l^i_{\Sigma}:=(l_i)^{red}_{\Sigma}$ in  the notation of Sec. \ref{subsec4.0.4d}}
 of $l_i$, cf. Sec. \ref{subsec4.0.4d} above. Moreover,
  we consider each simplicial loop $l^i_{\Sigma}$  as a continuous map
$S^1 \to \Sigma$ in a natural way (cf.  Sec. \ref{subsec4.0.3} above).
In view of Remark \ref{rm3.3} below let us also introduce the notation $l^i_{S^1} := (l_i)_{S^1}$.

\medskip

Figures \ref{fig2b}--\ref{fig1} below
show examples for links fulfilling  (NCP)
\begin{figure}[h]
  \centering
  \begin{minipage}[b]{6 cm}
  \includegraphics[height=2.5cm,width=5 cm, angle=0]{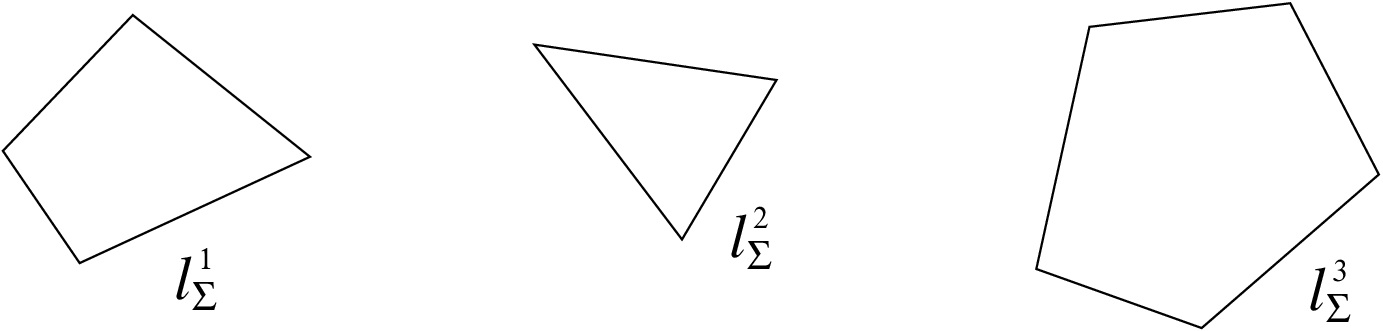}
    \caption{}     \label{fig2b}
   \end{minipage}
  \begin{minipage}[b]{3 cm}
   \ \
    \end{minipage}
   \begin{minipage}[b]{6 cm}
   \includegraphics[height=3cm,width=5 cm, angle=0]{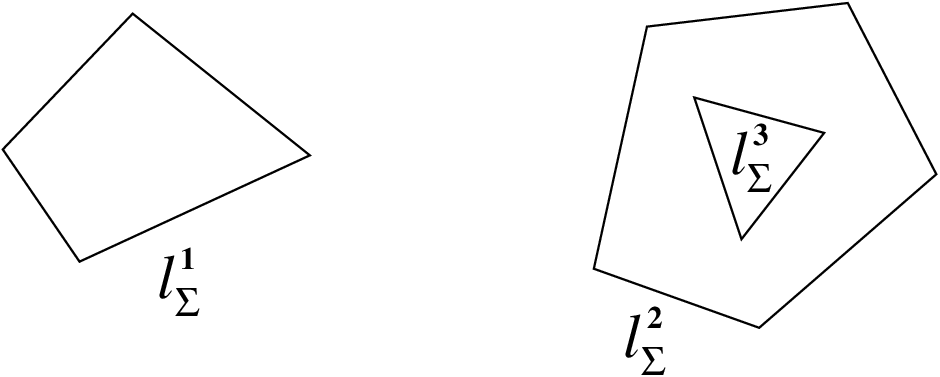}
    \caption{}     \label{fig2a}
  \end{minipage}
\end{figure}

 \begin{figure}[h]
\begin{center}
  \includegraphics[height=6cm,width=9 cm]{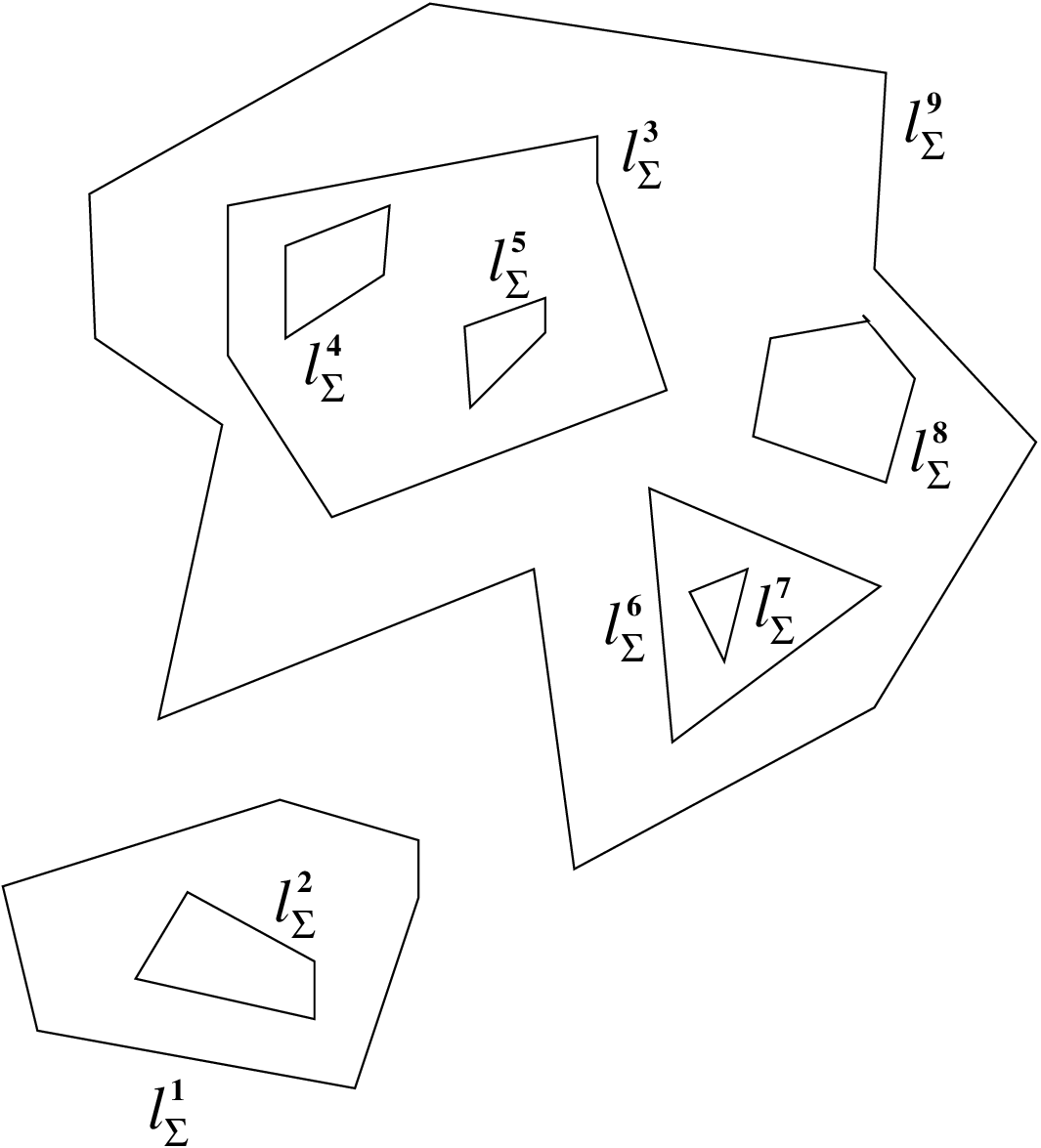}
\caption{}  \label{fig1}
\end{center}
\end{figure}

\begin{remark}  \label{rm3.3} \rm
 \begin{enumerate}
\item     Observe  that  (NCP) and (NH)  place no restrictions
  on the $S^1$-projections  $l^i_{S^1}$ of the loops $l_i$,  $i \le m$,
    and so in general the link $L$ will not be equivalent\footnote{for example, this applies to the links in
       Fig.  \ref{fig2a} and  Fig.  \ref{fig1} if, e.g., $l^i_{S^1}= \id_{S^1}$
       holds for $i \le 3$ and $i \le 9$, respectively; by contrast,  the link in Fig.  \ref{fig2b} will be equivalent         to a vertical link if $l^i_{S^1}= \id_{S^1}$ for $i \le 3$} to a link with the property
    that there is a sequence $(D_i)_{i \le m}$
    of pairwise disjoint disks $D_i \subset \Sigma$
    such that for each $i \le m$ the arc of  $l^i_{\Sigma}$ is contained in $D_i$.
                In particular, a link fulfilling conditions  (NCP) and (NH)
                 will in general not be equivalent to
     a link consisting of only ``vertical'' loops,
      i.e. loops whose $\Sigma$-projections
     are ``points'', see Fig. \ref{vertical_link} below for an example (cf. also  Remark \ref{rm2.5} above).
\begin{figure}[h]
\begin{center}
   \includegraphics[height=1.6cm,width=3in, angle=0]{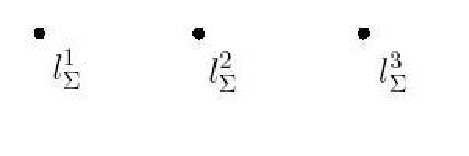}
    \caption{A vertical link consisting of three loops}   \label{vertical_link}
\end{center}
 \end{figure}

\item As a preparation for part ii) of Remark \ref{rm_more_general_links}  below we mention
that it would also be interesting to study the weaker version of condition (NCP)
which we obtain when instead of demanding that $l^i_{\Sigma}$   is an embedding  $S^1 \to \Sigma$
  itself we only demand that the image of  $l^i_{\Sigma}$ lies on  the image $C$ of an embedding $S^1 \to \Sigma$. Clearly, the image of $l_i$ will then lie on the torus $ C \times S^1 \cong  S^1 \times S^1$
 so  $l_i$ will be a special type of torus knot in $\Sigma \times S^1$.
   Let $p \in \bZ$ and $q \in \bZ$  be the two winding numbers of $l_i$ (considered as a map $S^1 \to C \times S^1$)  around the first and the second component of $C \times S^1$.
   The original version of (NCP) implies that $p = \pm 1$ (while $q$ can be an arbitrary integer).
  The weakened version of (NCP) just described implies nothing about $p$ so
  if we use the weakened version   both $p$ and $q$ can be arbitrary integers.
\end{enumerate}
 \end{remark}

In the following we will not work with simplicial loops and simplicial links
but instead with (closed) simplicial ribbons and with simplicial ribbon links.
 This should not be a surprise. It is very well known
 in the physics and mathematics literature on quantum 3-manifold  invariants
 that one must indeed work with framed links (or,  equivalently, ribbon links,
   cf. the beginning of Sec. \ref{subsec4.0.3b} above) if one wants to get meaningful results.\par

In fact, within the framework we use in the present paper we can give a very concrete argument\footnote{as a side remark we mention that  Remark \ref{rm_appB'_3} in part \ref{appB'_3} of the appendix  below gives another argument in favor of the use of simplicial ribbons instead of simplicial loops} which shows that things would go wrong if we used just
simplicial loops (without involving a framing or a simplicial ribbon in an explicit way):
 the crossing points of simplicial loops would  go ``undetected''\footnote{more precisely, if the simplicial
 loops $l^i_{\Sigma}$ are realized as simplicial loops in $K_1$ (or $K_2$)
 then all crossing points will go undetected in the sense that
 as in Sec. 5.1 in \cite{Ha7b} all relevant covariance expressions will vanish.
 If the simplicial
 loops $l^i_{\Sigma}$ are realized as simplicial loops in $q\cK$
 we would have the rather artificial phenomenon that some crossing points will go undetected
 while others will not}.
   Consequently, when evaluating $\WLO^{disc}_{rig}(L)$
there would be no chance of obtaining a factor like $|L|^{\varphi}_4$ in Eq. \eqref{eq_shadoinv} below.
 So when working with simplicial loops instead of simplicial ribbons
  there is no hope for finding a
 generalization\footnote{In fact, as we will explain in  Sec. 6 in \cite{Ha7b}, it is probably not possible
 to find a such  generalization of Theorem \ref{main_theorem} anyway, unless we perform additional modifications  like e.g. switching to the $BF_3$-theoretic setting.
 But also then we will have to work with simplicial ribbons
 (or with framed simplicial  loops)} of Theorem \ref{main_theorem} below
 which includes the case of   general links.

\subsection{A special class of simplicial ribbon links}
\label{subsec6.2}

From now on we will assume that the simplicial ribbon link $L= (R_1, R_2, \ldots, R_m)$
 in $q\cK \times \bZ_N$
fixed in Sec. \ref{subsec4.9} above
fulfills the following two conditions,
which are the ribbon analogues
 of the two conditions (NCP) and (NH) appearing in Sec. \ref{subsec6.1} above.

   \begin{description}
   \item[(NCP)'] The maps $R^i_{\Sigma}$, $i \le m$, neither intersect each other
    nor themselves\footnote{i.e. if $i \neq j$ then  $R^i_{\Sigma}$ and $R^j_{\Sigma}$ have disjoint images
    and i.e. each $R^i_{\Sigma}$, $i \le m$, considered
   as a continuous map $[0,1] \times S^1 \to \Sigma$, is an embedding.}.
    \item[(NH)'] Each of the maps $R^i_{\Sigma}$, $i \le m$ is null-homotopic.
  \end{description}
  where $R^i_{\Sigma}$ is defined as in Sec. \ref{subsec4.9} above.

\begin{remark} \label{rm_more_general_links} \rm
\begin{enumerate}
\item  In view of the discussion in Remark \ref{rm2.5} in Sec. \ref{subsubsec2.3.4} above
we remark that Theorem \ref{main_theorem} below (and its proof) can
easily be generalized to the situation of
 (colored) simplicial ribbon links
$L= (R_1, R_2, \ldots,  R_{m+r})$, $m, r \in \bN$ with the following properties:
\begin{itemize}

\item  the sub ribbon link $(R_1, R_2, \ldots, R_{m})$
fulfills conditions (NCP)' and (NH)'.

\item  each    $R_{m+j}$,  $j \le r$,  is a ``vertical''  simplicial ribbon
 in the sense that\footnote{this is equivalent to saying that
 each face $F_i$ appearing in  $R_{m+j}$ is  vertical w.r.t  $\Sigma$ in the sense of Sec. \ref{subsec4.0.4d} above} the set $e_{m+j}:=  \Image(\pi_{\Sigma} \circ R_{m+j})$ is an edge in $q\cK$.
 (Here we consider $R_{m+j}$ as a continuous map $[0,1] \times  S^1
 \to \Sigma \times S^1$).

\item we have $e_{m+j} \cap \Image(R^{i}_{\Sigma}) = \emptyset$ for all $j \le r$ and $i \le m$.
\end{itemize}
For this more general type of simplicial ribbon links Eq. \eqref{eq_maintheorem} below will still hold
if we generalize the definition of $|L|$ in a suitable way\footnote{more precisely, using
 the notation of Remark \ref{rm2.5} above and Remark \ref{rm3.4} below we must set
$$|L| := \sum_{\vf\in col(L)}
|L|_1^{\vf}\,|L|_2^\vf\,|L|_3^\vf \bigl( \prod_{i=m+1}^{m+r}  \tfrac{S_{ \vf_i \mu_i}}{S_{\vf_i 0}} \bigr)$$
 where $\mu_i$ is the  highest weight of the representation $\rho_i$ and  where for $i \in \{m+1, \ldots m+r\}$ we have set
  $\vf_i:=\vf(Y(i))$ where $Y(i)$ is the unique element
   of $F(L) = \{Y_0, Y_1, \ldots, Y_m\}$ containing $e_{i}$}.

\item  Theorem \ref{main_theorem} below (and its proof) can
also be generalized in another direction.
In fact, we can weaken condition (NCP)' in the following way (in the spirit of  part ii) of Remark \ref{rm3.3} above): Instead of demanding that each   $R^i_{\Sigma}$, $i \le m$, considered
   as a continuous map $[0,1] \times S^1 \to \Sigma$, is an embedding itself we only demand that
   the image of  $R^i_{\Sigma}$ lies on the image of such an embedding $[0,1] \times S^1 \to \Sigma$.
  Even when using this weaker version of condition (NCP)' we would still be able to evaluate
  $\WLO_{rig}(L)$ explicitly using a straightforward modification of the approach used in Sec. 5 in \cite{Ha7b}
  for proving Theorem \ref{main_theorem}.
\end{enumerate}
\end{remark}

\begin{remark} \label{rm_full_ribbons}
\rm   Instead of working with simplicial ribbons (and ribbon links) in $q\cK \times \bZ_N$
   one can also work with simplicial ribbons in
     $\cK \times \bZ_N$. In order to do so we only have to  modify the constructions  in Sec. \ref{subsubsec4.3.2}
    above in a more or less straightforward way, cf. Remark 3.2 in \cite{Ha7b}.\par

    The use of simplicial ribbons in $\cK \times \bZ_N$
    has several advantages but also two disadvantages over the use of
    simplicial ribbons in $q\cK \times \bZ_N$.
    The disadvantages are,  firstly, that the analogue of expression on the RHS of Eq. \eqref{eq4.21} in \ref{subsubsec4.3.2}
        for simplicial ribbons in $\cK \times \bZ_N$ is more complicated than the original expression
    on the RHS of Eq. \eqref{eq4.21}.     Secondly,   working with simplicial ribbons in $\cK \times \bZ_N$
    seems to require an additional regularization procedure,    cf.  Remark 5.4 in \cite{Ha7b}.\par

    The two main advantages of working with  simplicial ribbons in $\cK \times \bZ_N$
    are that in this case we can work with Choice \ref{example3} in Sec. \ref{subsec4.10}  above,
    which is clearly more natural than the use of Choice \ref{example1}.
         More importantly, the use of simplicial ribbons in $\cK \times \bZ_N$ is much better suited
     to deal with general simplicial ribbon links $L$, ie simplicial ribbon links
     not fulfilling a condition of the type (NCP)' above, cf. Remark 5.4 in \cite{Ha7b}. \par

    However, since the rest of the present paper and  the main part of \cite{Ha7b}
    will only be concerned with the study of special simplicial ribbon links (cf. condition (NCP)') the last
    advantage will not be relevant until Sec. 7 in \cite{Ha7b}. This is why
    we decided to work with simplicial ribbons in $q\cK \times \bZ_N$
    until the end of Sec. 5 in \cite{Ha7b}.

\end{remark}

\subsection{The main result}
\label{subsec6.3}

Recall that in Sec. \ref{subsec4.9} above we fixed a simplicial ribbon link $L= (R_1, R_2, \ldots, R_m)$ in $q\cK \times \bZ_N$ with colors $(\rho_1, \rho_2, \ldots, \rho_m)$. For $i \le m$ let $\lambda_i \in \Lambda_+$
denote the highest weight of $\rho_i$. Here  $\Lambda_+$ is the set of dominant real weights of
$\cG$ (w.r.t. $\ct$ and a fixed Weyl chamber).

\begin{theorem} \label{main_theorem}
Assume that $L= (R_1, R_2, \ldots, R_m)$  fulfills conditions
(NCP)' and (NH)' above.
 Assume also that\footnote{the situation  $0 < k < \cg$ is not interesting since in this case
the set $\Lambda^k_+$ appearing below is empty, cf. Remark B.1  in   Appendix B in \cite{Ha7b}.
 Accordingly,  $|L| = |\emptyset| = 0$. It turns out that we then also have $\WLO^{disc}_{rig}(L) = \WLO^{disc}_{rig}(\emptyset) = 0$,
 cf. Remark \ref{rm_WLOdisc} below } $k \ge \cg$ where $\cg$ is the dual Coxeter number of $\cG$
 and that $\lambda_i \in \Lambda_+^k\subset \Lambda_+ $, $i \le m$, where $\Lambda_+^k $ is as in  Remark \ref{rm3.4} below. Then  $\WLO_{rig}(L)$  is well-defined and we have
 \begin{equation} \label{eq_maintheorem}
 \WLO_{rig}(L) = \frac{|L|}{|\emptyset|}
 \end{equation}
 where $\emptyset$ is the ``empty link'' and
  where $|\cdot|$  is the  shadow invariant associated
 to  $\cG$ and $k$, cf. Remark \ref{rm3.4} below and
   Appendix B in \cite{Ha7b} for  the full definitions.
 \end{theorem}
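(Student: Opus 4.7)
The plan is to evaluate $\WLO^{disc}_{rig}(L)$ by successively carrying out the three Gaussian-type integrations over $\Check{\cA}^{\orth}(K)$, $\cA^{\orth}_c(K)$, and $\cB(q\cK)$, together with the lattice sum over $y \in I$, and then matching the resulting finite state sum to Turaev's shadow invariant.

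First I would carry out the innermost integration over $\Check{\cA}^{\orth}(K)$. On the support of $\prod_x 1^{(s)}_{\ct_{reg}}(B(x))$ the operator $L^{(N)}(B)$ restricts to an invertible operator on $\Check{\cA}^{\orth}(K)$ (Proposition~5.1 in \cite{Ha7b}), so the inner measure $\exp(iS^{disc}_{CS}(\Check{A}^{\orth},B))D\Check{A}^{\orth}$ is a non-degenerate oscillatory Gauss-type measure in the sense of Section~\ref{sec3}. Expanding each trace $\Tr_{\rho_i}\bigl(\Hol^{disc}_{R_i}(\Check{A}^{\orth}+A^{\orth}_c,B)\bigr)$ via \eqref{eq4.21} and applying the Gaussian Wick formula yields a sum of contractions between edge-data of the ribbons, weighted by matrix entries of the covariance $(\star_K L^{(N)}(B))^{-1}$. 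Because of assumption (NCP)' the projected ribbons $R^i_{\Sigma}$ are pairwise disjoint and embedded, so cross-ribbon contractions collapse (or contribute only a framing-independent factor absorbed by $|\emptyset|$), while same-ribbon contractions reconstruct the $T$-valued vertical holonomy of $B$ around each $R_i$. The asymmetric factors $e^{\pm\ad(B)/N}$ built into $\hat{L}^{(N)}$ and $\Check{L}^{(N)}$ are precisely what deliver the correct ribbon framing phase.

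Next I would perform the integration over $\cA^{\orth}_c(K)$. Since $S^{disc}_{CS}(A^{\orth}_c,B)$ is bilinear in $(A^{\orth}_c,B)$ via \eqref{eq4.35}, this integration produces, in the sense of oscillatory Gaussian moments, a discrete delta-function enforcing $(d_{q\cK}B)(e)=0$ for every edge $e$ not touched by the ribbons. Combined with modification (Mod3), this localizes $B$ on the finite-dimensional subspace of cochains constant on each connected component of $\Sigma\setminus\bigcup_i\Image(R^i_{\Sigma})$; by assumption (NH)' these components constitute exactly the faces $F(L)$ of the shadow underlying $L$, and the one containing $\sigma_0$ is distinguished. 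The sum $\sum_{y\in I}\exp(-2\pi ik\langle y,B(\sigma_0)\rangle)$ is then rewritten by Poisson summation over the lattice $\Lambda$ dual to $I$, and combined with the $\cW_{\aff}$-invariance of $1^{(s)}_{\ct_{reg}}$ the $s\to 0$ limit restricts this to a finite sum over $\lambda\in\Lambda^k_+$ (this is where the hypothesis $k\ge\cg$ enters to ensure $\Lambda^k_+\neq\emptyset$).

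At this stage one is left with a finite sum over colorings $\vf:F(L)\to\Lambda^k_+$. The modified Faddeev-Popov determinant (Mod1) contributes a factor $(S_{\vf(Y)0})^{\chi(Y)}$ per face $Y$, via the discrete analogue of the Ray-Singer computation \eqref{eq_RaySinger}; each ribbon contributes an $S$-matrix ratio $S_{\vf\mu_i}/S_{\vf 0}$ by applying the Weyl character formula to the surviving holonomy trace; and the framing phase extracted in the first step produces the gleam factors. These four ingredients match the building blocks $|L|_1^{\vf},\ldots,|L|_4^{\vf}$ of the state sum defining $|L|$ recalled in Appendix~B of \cite{Ha7b}, and dividing by the same computation for the empty link yields \eqref{eq_maintheorem}. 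The main obstacle is the simultaneous control of the two limits (the improper integrals of Definition~\ref{def3.2} and the $s\to 0$ regularization) together with the explicit evaluation of $(\star_K L^{(N)}(B))^{-1}$ on the ribbon edges: one must compute this covariance precisely enough to recognize the gleam factors, show that the result is independent of the refinement parameters $N$ and $\cK$ beyond the topological data encoded in $L$, and verify that cross-ribbon tadpole-type contractions cancel against the Faddeev-Popov normalization uniformly enough to justify the interchange with $\lim_{s\to 0}$.
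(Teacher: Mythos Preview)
The paper does not contain a proof of this theorem: immediately after the statement it reads ``Theorem \ref{main_theorem} will be proven in \cite{Ha7b}.'' So there is no proof in this paper to compare your proposal against.

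That said, your outline is broadly consistent with the strategy the paper telegraphs (the heuristic computation in Remark~\ref{rm2.5}, the role of Poisson summation noted in Remark~\ref{rm2.5pre}, and the structure of the shadow invariant recorded in Remark~\ref{rm3.4}). Two points in your sketch are off, however. First, for ribbon links satisfying (NCP)' there are no crossing points, so the factor $|L|_4^{\vf}$ is trivially $1$; the relevant state sum is \eqref{eq_shadoinv_simpl} with only the three factors $|L|_1^{\vf}$, $|L|_2^{\vf}$, $|L|_3^{\vf}$. Second, and more substantively, you write that ``each ribbon contributes an $S$-matrix ratio $S_{\vf\mu_i}/S_{\vf 0}$ by applying the Weyl character formula'' --- but that is the mechanism for \emph{vertical} ribbons (cf.\ Remark~\ref{rm2.5} and Remark~\ref{rm_more_general_links}). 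For the non-vertical ribbons here, the factor $|L|_3^{\vf}$ in \eqref{eq_XL3} involves the fusion coefficients $N_{\gamma(e)\vf(Y^+_e)}^{\vf(Y^-_e)}$, and the paper's footnote to Remark~\ref{rm3.4} is explicit that these enter ``not via an expression involving the $S$-matrix \ldots\ but by a sum over weight multiplicities (the `quantum Racah formula')''. Your outline does not indicate where this step comes from, and it is the least routine part of the identification with $|L|$.
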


Theorem \ref{main_theorem} will be proven in \cite{Ha7b}.

\begin{remark}  \label{rm_app1} \rm We emphasize  that $|\cdot|$ here is really the shadow invariant associated
 to  $\cG$ and $k$ and {\it not} to $\cG$ and $k + \cg$ where $\cg$ is the dual Coxeter number of $\cG$.
 In other words: we do not have a ``shift in $k$'', cf.
 Remark \ref{rm_shift_in_k}  in Sec. \ref{subsec3.1}  above.
 \end{remark}

\begin{remark} \rm \label{rm3.4}
As mentioned in Remark \ref{rm_more_general_links} above
Theorem \ref{main_theorem} can  be generalized in at least two ways.
In particular, by weakening condition (NCP)' in a suitable way one can deal with the case
where some of the simplicial ribbons $R_i$, $i \le m$, contained in $L$
are   ribbon analogues of framed torus knots, see \cite{Ha9}.
We emphasize here that
simplicial ribbon links $L$ fulfilling  the {\em original} version of condition (NCP)'   (and condition  (NH)')
are more interesting than they might appear at first sight.
In fact,  the expression for the shadow invariant $|L|$
   for such ribbon links is quite complicated.
   More precisely, using the notation $\Lambda_+^k$ for the set of ``dominant real weights of $\cG$
   (w.r.t. to $\ct$ and a  fixed Weyl chamber) which are integrable at level $k - \cg$'' (see
     Appendix A in \cite{Ha7b} and cf. also Remark \ref{rm2.5} above)
   and denoting by $F(L)$ the set  of connected components $\{Y_0,Y_1, \ldots, Y_m\}$ of\footnote{equivalently,
   we could work with the connected  components of $\Sigma \backslash
   \bigl( \bigcup_{ i \le m} \Image(R^i_{\Sigma}) \bigr)$} $\Sigma \backslash
   \bigl( \bigcup_{ i \le m} \arc(l^i_{\Sigma}) \bigr)$  we have\footnote{cf. footnote \ref{ft_turaev_norm} below}
      \begin{equation} \label{eq_shadoinv_simpl}
|L|= \sum_{\vf\in col(L)}
|L|_1^{\vf}\,|L|_2^\vf\,|L|_3^\vf
\end{equation}
where $col(L)$  is the set of all maps
$F(L) \to \Lambda_+^k$ ( ``area colorings'')
and where
  \begin{subequations} \label{eqA.5}
\begin{align} |L|_1^\vf&=\prod_{Y \in F(L)} \dim(\vf(Y))^{\chi(Y)}\\
|L|_2^\vf&= \prod_{Y \in F(L)}   \exp(\tfrac{\pi i}{{k}} \langle \vf(Y),\vf(Y) +2\rho\rangle)^{\gleam(Y)}\\
\label{eq_XL3}  |L|_3^\vf&= \prod_{e \in E(L)} N_{\gamma(e)
\varphi(Y^+_e)}^{\varphi(Y^-_e)}
  \end{align}
 \end{subequations}
 Here $\chi(Y)$ is the Euler characteristic of the region $Y \in F(L)$,
   $\gleam(Y) \in  \bZ$ is the so-called ``gleam'' of $Y$,
   and  $\rho$ is again the Weyl vector (cf. Remark \ref{rm2.5}  above).
   Moreover, we have set\footnote{$\dim(\lambda)$ is called the ``quantum dimension'' of $\lambda \in \Lambda^k_+$ in \cite{turaev}}
  $\dim(\lambda):=  S_{\lambda 0}/S_{00}$    for $\lambda \in \Lambda^k_+$ where $(S_{\mu \nu})_{\mu, \nu \in \Lambda^k_+}$ is the $S$-matrix associated to  $U_q(\cG_{\bC})$ with $q:= \exp( \tfrac{2 \pi i}{k})$ (cf. Remark \ref{rm2.5} above).
  Finally,  $N_{ \mu \nu}^{\lambda} \in \bN_0$, $\lambda, \mu, \nu \in \Lambda^k_+$
   are the corresponding ```fusion coefficients''\footnote{We mention that the fusion coefficients  are closely related to the Verlinde numbers.
   In fact we have $ N_{ \mu \nu}^{\lambda} = N_{\lambda^* \mu \nu}$
 where $\lambda^*$ is essentially the weight conjugated to $\lambda$
 (up to a shift in $\rho$).
 Let us emphasize that this time, however, the fusion numbers $N_{ \mu \nu}^{\lambda}$
 enter the computation not via an expression involving the $S$-matrix like in   Remark \ref{rm2.5} above
 but by a sum over weight multiplicities (the ``quantum Racah formula''), see Eq. B.7 Appendix B in \cite{Ha7b}
  and \cite{HaHa}.}.
   We will give full details (including  an explanation of $\gleam(Y)$ and the
notation  $E(L)$, $\gamma(e)$, $Y^{\pm}_e$ used above) in Appendix B in \cite{Ha7b}.

\smallskip

   For general  ribbon links $L$ in $\Sigma \times S^1$   the explicit formula
   is\footnote{\label{ft_turaev_norm} This definition of the shadow invariant (in the special case $M =  \Sigma \times S^1$)
   differs from Turaev's definition in Sec. 1.2 in Chap. X in \cite{turaev} by a normalization factor depending on $\cG$ and $k$.
We remark that the factor $|L|_1^{\vf}$ above corresponds to the factor $|X|_2^{\vf}$ in Sec. 1.2 in Chap. X in \cite{turaev},
  the factor $|L|_2^{\vf}$ corresponds to the factor $|X|_3^{\vf}$, the
  factor $|L|_3^{\vf}$ above corresponds to the factor $|X|_4^{\vf}$, and the  factor
  $|L|_4^{\vf}$ above corresponds to the product $|X|_1^{\vf} |X|_5^{\vf;contr}$
  where $|X|_5^{\vf;contr}$ is the number obtained by
   contracting the term $ \cdot |X|_5^{\vf} $ with the tuple $(\rho_1, \rho_2, \ldots, \rho_m)$ of colors of $L$ }
 \begin{equation} \label{eq_shadoinv}
      |L|= \sum_{\vf\in col(L)}
|L|_1^{\vf}\,|L|_2^\vf\,|L|_3^\vf\,|L|_4^\vf
\end{equation}
Here the factors $|L|_1^{\vf}$ and $|L|_2^\vf$ are as above
and  $|L|_3^\vf$ is given as in Eq. \eqref{eq_XL3} but with the product $\prod_{e \in E(L)} $
replaced by the product $\prod_{e \in E_*(L)} $ where $E_*(L)$ is a certain subset\footnote{cf. the
definition of the so-called ``circle-1-strata'' in \cite{turaev};
for the special links $L$ appearing above we have
$E_*(L) = E(L)$} of $E(L)$.
  The factor $|L|_4^{\vf}$ in Eq. \eqref{eq_shadoinv} is the most interesting one.
  Its definition contains a factor of the form $ \prod_{x \in V(L)} T(x,\vf)$
 where $V(L)$ is the set of crossing points of the loops $\{ l^i_{\Sigma} \mid i \le m\}$ in $\Sigma$
 and where the factors $T(x,\vf)$ involve the so-called
  ``quantum 6j-symbols''  associated to $U_q(\cG_{\bC})$ with $q:= \exp( \tfrac{2 \pi i}{k})$.
         Observe that   three of the four factors $|L|^{\varphi}_1$,  $|L|^{\varphi}_2$,
     $|L|^{\varphi}_3$, $|L|^{\varphi}_4$
  appearing in the formula for the shadow invariant  of a general ribbon link $L$
    also appear in Eq. \eqref{eq_shadoinv_simpl}. \par

  The fact that we can obtain the RHS  of Eq. \eqref{eq_shadoinv_simpl}
   directly from the CS path integral is (hopefully) interesting by itself but, of course,
  we   are mainly interested in the computation
    of $\WLO_{rig}(L)$ for general ribbon links.
  We will come back to this point in  Sec. 6 in \cite{Ha7b} where we will study the case
  of general ribbon links in more detail.

\end{remark}

 \begin{remark} \label{rm_WLOdisc}  \rm The explicit expression for
   $\WLO^{disc}_{rig}(L)$   is
   \begin{equation} \label{eq_rm3.4}
    \WLO^{disc}_{rig}(L) = c_1   k^{c_2} \bigl(\prod_{\alpha \in \cR_+} \sin(\tfrac{\pi}{k} \langle \rho,\alpha \rangle)\bigr)^{\chi(\Sigma)} \ |L|
    \end{equation}
    where $\cR_+$ is the set of positive real roots of $\cG$ (w.r.t. to $\ct$ and a  fixed Weyl chamber)
    and  where $c_1, c_2 \in \bC$ only depend on $G$, $\cK$,  $N$, and $\sigma_0$
     but not on $k$.   (We omit the precise formulas for $c_1$, $c_2$).

\end{remark}

\section{Outlook}

\label{sec7}

Theorem \ref{main_theorem} will be proven in \cite{Ha7b}. This is the first of the two main issues
in \cite{Ha7b}. The second important issue in \cite{Ha7b} is the transition
to the $BF_3$-theoretic setting we referred to at the beginning of Sec. \ref{sec4} above.
The $BF_3$-theoretic setting is more complicated than the original CS-theoretic setting
and in the case of non-Abelian structure groups $G$
the advantages of the $BF_3$-theoretic setting are not as obvious as in the Abelian case (cf. Remark 7.2 in
\cite{Ha7b}). However, a closer look shows that also for non-Abelian structure groups $G$
the  $BF_3$-theoretic setting has several important advantages:

\begin{enumerate}

\item As we will explain in Sec. 6 in \cite{Ha7b}
 it is probably not possible   to generalize  Theorem \ref{main_theorem}
    to the case of general ribbon links unless we modify our approach in a suitable way.
 The transition to the $BF_3$-theoretic setting
 could provide one way (and, possibly, the most natural way)
  to generalize Theorem \ref{main_theorem} successfully.

 \item The $BF_3$-theoretic setting leads to certain stylistic improvements, cf. Remark \ref{rm_for_appE} above
 and Appendix D in \cite{Ha7b}.

 \item The $BF_3$-theoretic setting leads to a better understanding
     of the $1/2$-exponents appearing   in (Mod1) in Sec. \ref{subsec4.10} above, cf.
     Appendix D in \cite{Ha7b}.

\end{enumerate}

\noindent
Other problems/questions which we plan to study in the near future are (cf. \cite{Ha9,Ha8}):

\begin{itemize}

\item[(P1)]  In part ii) of Remark \ref{rm_more_general_links} above  we mentioned that also for simplicial ribbon links $L$    fulfilling\footnote{recall that in this case
 some of the simplicial ribbons $R_i$, $i \le m$, contained in $L$
are allowed to be   ribbon analogues of non-trivial framed torus knots}
 only a  weaker version of condition (NCP)' we can  evaluate
   $\WLO_{rig}(L)$ explicitly using a suitable modification of the approach used in Sec. 5 in \cite{Ha7b}
   for proving Theorem \ref{main_theorem}.
   Do the values obtained for   $\WLO_{rig}(L)$ for these $L$ coincide with the values expected in the literature
   for all (simply-connected compact) structure groups $G$?

\item[(P2)] Is it possible to generalize the approach of the present paper to the situation where $\Sigma$
     is not a closed surface but a compact surface with boundary?
     If yes, then is  it also possible\footnote{we remark that if both (P1) and (P2) can be resolved successfully then one would immediately obtain a rigorous definition \& computation
      of the WLOs of arbitrary colored torus knots in $M=S^3$} to implement rigorously Witten's surgery arguments,
     at least in some special cases like the  surgery  $S^2 \times S^1 \to S^3$ performed on a ``vertical''\footnote{cf. Remark \ref{rm3.3} above} loop in  $S^2 \times S^1$?

\item[(P3)] In \cite{BlTh4} torus gauge fixing is a applied to the evaluation of the heuristic  path integral
   for the CS partition function $Z(M)=\WLO(\emptyset)$ where $M$ is a non-trivial $S^1$-bundle.
   It would be interesting to study whether the methods in   \cite{BlTh4} can be generalized
   so that   $\WLO(L)$ can be evaluated at a heuristic level also if $L \neq \emptyset$.
   If this is the case then one should study if it is possible to find a rigorous simplicial realization of the path integral expressions for $Z(M)$ and $\WLO(L)$.

\item[(P4)] If  (P3) can be resolved successfully  then
    it should be  very interesting to compare the new  results and methods   with those of the approach in  \cite{BeaWi,Bea}    where non-Abelian localization is applied to the CS path integral.

\end{itemize}

\bigskip

\noindent
 {\em Acknowledgements:}  I want to thank the anonymous referee of
my paper \cite{Ha4} whose comments motivated me to look for an
alternative approach for making sense of the RHS  of Eq.
\eqref{eq2.41_pre}, which is less technical than the continuum approach
in \cite{Ha3b,Ha4,Ha6}.  This eventually led  to  the present paper and its sequel \cite{Ha7b}.
 Moreover, I  would  like to thank Laurent Freidel for
 pointing out to me the widespread confusion about the ``shift in $k$''-issue,
 cf. Remark \ref{rm_shift_in_k} in Sec. \ref{subsec3.1} above. \par

I am also grateful to Jean-Claude Zambrini for several comments
which led to improvements in the presentation of the present paper. \par

Finally, it is a great pleasure for me to  thank  Benjamin Himpel
for  many useful and important comments and suggestions,  which
not only had a major impact on the presentation and
overall structure of  the present paper
but also inspired me to reconsider the issue
of  discretizing the operator $\partial_t + \ad(B)$
appearing in Eq.  \eqref{eq4.7} above. (This eventually led me
 to the operators \eqref{eq_def_LOp} in Sec. \ref{subsec4.1b} above).

 \renewcommand{\thesection}{\Alph{section}}
\setcounter{section}{0}

\section{Appendix: Lie theoretic notation I}
\label{appB}

For the convenience of the reader we summarize here the Lie theoretic notation  used
in the main part\footnote{this excludes Remark \ref{rm2.5} and Remark \ref{rm3.4} above where some additional notation
is used that will be explained only in Appendix A of \cite{Ha7b}} of the present paper.

\subsection{List of notation, part I}

\begin{itemize}
\item $G$: the simply-connected compact Lie group fixed in Sec. \ref{subsec2.1}
\item $\cG$: the Lie algebra of $G$

\item $T$: the maximal torus of $G$ fixed in Sec. \ref{subsec2.2}
\item $\ct$: the Lie algebra of $T$

\item  $\exp$: the exponential map $\cG \to G$ of $G$.

\item $I \subset \ct$: the kernel of $\exp_{|\ct}:\ct \to T$.

\item $G_{reg}:= \{g \in G \mid g \text{ is regular} \}$.
Recall from \cite{Br_tD} that an element $g $ of $G$ is called ``regular'' iff it is contained in exactly one maximal torus of $G$.

\item $T_{reg}:= T \cap G_{reg}$
\item  $\cG_{reg}:= \exp^{-1}(G_{reg})$
\item  $\ct_{reg}:= \exp^{-1}(T_{reg}) = \exp_{| \ct}^{-1}(T_{reg})$

\item  $\langle \cdot,\cdot \rangle$: the unique $\Ad$-invariant scalar product
  on $\cG$ such that $\langle \Check{\alpha},\Check{\alpha}
\rangle = 2$ holds for every short real coroot $\Check{\alpha}$
of the pair $(\cG,\ct)$, cf.  Appendix A in \cite{Ha7b}.

\item $\ck$: the $\langle \cdot,\cdot \rangle$-orthogonal
complement of $\ct$ in $\cG$

\item $\pi_{\ct}$: the $\langle \cdot,\cdot \rangle$-orthogonal projection $\cG \to
\ct$

\item $\pi_{\ck}$:  the $\langle \cdot,\cdot \rangle$-orthogonal projection $\cG \to
\ck$

\item $P$: the Weyl alcove fixed in Sec. \ref{subsubsec2.2.4}. Recall from \cite{Br_tD} that a Weyl alcove
       is, by definition, a connected component of $\ct_{reg}$.

\item $\cW_{\aff} \subset \Aff(\ct)$:  the affine Weyl group associated to $(\cG,\ct)$,
cf.  Appendix A in \cite{Ha7b}.
We remark that $\cW_{\aff}$ operates freely and transitively on the set of Weyl alcoves.

\end{itemize}

\medskip

\noindent Using  $\langle \cdot,\cdot \rangle$ we can make the identification $ \ct \cong \ct^*$.
 Sometimes we write
$\langle \cdot,\cdot \rangle_{\cG}$  instead of $\langle \cdot,\cdot \rangle$.

\subsection{Example: the special case $G=SU(2)$}

 In the present paper and in \cite{Ha7b}
we  work with general  simply-connected compact Lie groups
$G$ since this does not involve much more work
than we would have to invest if we restricted ourselves
to a special case like, e.g., $G=SU(2)$.
On the other hand treating the general situation
makes it necessary to use several abstract concepts from Lie theory.
The reader who prefers a more elementary treatment should feel free to
 concentrate on the special case $G=SU(2)$ for which we have the following explicit
 formulas:

  \medskip

In the special case $G = SU(2)$ we can choose the maximal torus $T$ as

\smallskip

$T = \{\exp( \theta \tau) \mid  \theta \in \bR\}$ where $\tau :=\left( \begin{matrix} i  && 0 \\ 0 && -i  \end{matrix} \right)$

\smallskip

\noindent Then we have (with $\cong$ meaning ``homeomorphic''
and with the convention  $S  \tau  := \{ t \tau \mid t \in S\}$ for any $S \subset \bR$):
\begin{align*}
 G  & =SU(2) = \{ A \in \Mat(2,\bC) \mid A A^* = 1, \det(A)=1\}  &&\cong S^3 \\
 \cG  & = su(2) = \{ A \in \Mat(2,\bC) \mid A + A^* = 0, \Tr(A)=0\} &&\cong \bR^3\\
 T & = \{\exp( \theta \tau \mid  \theta \in \bR\} =
\left\{ \left( \begin{matrix} e^{i \theta} && 0 \\ 0 && e^{-i \theta} \end{matrix} \right) \mid \theta \in \bR \right\}  && \cong S^1 \\
 \ct &  = \bR \cdot \tau = \{\theta \tau \mid \theta \in \bR \} && \cong \bR\\
 G/T  & = \{gT \mid g \in G \} &&  \cong S^2 \\
 G_{reg} & = SU(2) \backslash \{-1,1\}  && \cong S^2 \times (0,1) \\
\cG_{reg} & = \cG \backslash \bigcup_{n \in \bN_0} \{b \in \cG \mid |b| =n \}  &&
\cong S^2 \times ( \bR_+ \backslash \bN) \\
T_{reg} & = T \backslash  \{-1,1\} && \cong S^1 \backslash \{-1,1\} \\
\ct_{\reg} & = \ct \backslash \{ n \pi \tau \mid n \in \bZ \} && \cong \bR \backslash \bZ
\end{align*}

Moreover, we have

\begin{itemize}

\item $\exp: \cG \to G$ is the restriction  of the  exponential map of $\Mat(2,\bC)$ onto $\cG$, i.e.
$$\exp(A) = \sum_{n=0}^{\infty} \tfrac{A^n}{n!} \in G \subset \Mat(2,\bC)  \quad  \text{ for } \quad A \in \cG \subset \Mat(2,\bC)$$

\item $I =  \bZ \cdot 2 \pi \tau$

\item the scalar product $\langle \cdot, \cdot \rangle$ is given by
$ \langle A, B \rangle  = -\tfrac{1}{4\pi^2} \Tr_{\Mat(2,\bC)}(A \cdot B)$ for all $A, B \in su(2)$.
The norm  $|\cdot|$ appearing in the formula for $\cG_{reg}$
denotes the norm associated to this  scalar product.

\item $\ck = \left\{ \left( \begin{matrix} 0 && -z \\ \bar{z} && 0 \end{matrix} \right)
\mid z \in \bC \right\}$

\item the set of Weyl alcoves is $\{P_n \mid n \in \bZ \}$ where $P_n:= (n \pi, (n+1) \pi) \ \tau$.
Accordingly,  for $P$ we could take, e.g., the set   $P := P_0 = (0, \pi)  \tau$.

\item $\cW_{\aff}$: the subgroup of the affine group  $\Aff(\ct)$ which is generated by
   the reflection $\ct \ni b \mapsto -b \in \ct$ and  the translation
    $\ct \ni b \mapsto b + 2\pi \tau \in \ct$.

\end{itemize}
Let us also mention that
$$\det(1_{\ck} - \exp(\ad(x \tau))_{| \ck}) = 4 \sin^2(x), \quad \quad x \in \bR$$

\section{Appendix: Some technical details for Sec.  \ref{sec2}}
 \label{appB'}

\subsection{Some additional details for Sec. \ref{subsec2.2}}
 \label{appB'_1}

\noindent  i) We will now motivate the choice of the space $\overline{V}$ and the map $\Pi_{\overline{V}}$
 appearing in  Sec. \ref{subsubsec2.2.4}.

\smallskip

 Observe first that $\G = \tilde{\G} \rtimes \G_{\Sigma}$
  where $\tilde{\G}:=  \{ \Omega \in \G \mid \forall \sigma \in \Sigma: \Omega((\sigma,1)) = 1  \}$
  and  where $\rtimes$ denotes the semi-direct product
 ($\tilde{\G}$ being the normal subgroup).
 Taking this into account we easily see that there is a natural
  right-operation of $\G_{\Sigma}$ on $A_{reg}/ \tilde{\G}$
 and that
  \begin{equation}\label{eq2.24_plus1} A_{reg}/ \G \cong( A_{reg}/ \tilde{\G} ) /
\G_{\Sigma}
\end{equation}
Secondly,  if $S$ is a connected component of $\cG_{reg}= \exp^{-1}(G_{reg})$
then  $\exp: S \to G_{reg}$ is a
diffeomorphism\footnote{this follow because $\exp:\cG_{reg} \to G_{reg}$
is a smooth covering and  $G_{reg}$ is simply-connected (that $G_{reg}$ is simply-connected
follows from our assumption that $G$ is simply-connected)}.
  It is not difficult to see that this implies that also
  $q:A^{\orth} \times C^{\infty}(\Sigma,S) \ni (A^{\orth},B) \mapsto
  (A^{\orth} + B dt) \cdot  \tilde{\G} \in  A_{reg}/ \tilde{\G} $ is a bijection, cf. Proposition 3.1 in
\cite{Ha3c}.
Thirdly,  if $P$ is a Weyl alcove contained in $S$
 then the map  $\theta: P \times G/T \ni (b,\bar{g}) \mapsto \bar{g} b
\bar{g}^{-1} \in S$ is a well-defined diffeomorphism\footnote{cf.  Example \ref{rm_SU2_Top} below
 for the special case $G=SU(2)$}.
Thus we have the identification
 \begin{equation}\label{eq2.24_plus2}
 \cA_{reg}/ \tilde{\G} \cong \cA^{\orth} \times
C^{\infty}(\Sigma,S)
 \cong \cA^{\orth} \times C^{\infty}(\Sigma,P)
 \times C^{\infty}(\Sigma,G/T)
 \end{equation}
Note that under this identification the  $\G_{\Sigma}$-operation on
$\cA_{reg}/ \tilde{\G}$ mentioned above induces the  $\G_{\Sigma}$-operation on
$\cA^{\orth} \times C^{\infty}(\Sigma,P)
 \times C^{\infty}(\Sigma,G/T)$ which is given by
 $(A^{\orth},B, \bar{g}) \cdot \Omega =  (A^{\orth}\cdot \Omega,B,
 \Omega^{-1}\bar{g})$ for each $\Omega \in \G_{\Sigma}$.\par

 Fourthly, observe that the map $p: \cA^{\orth} \times C^{\infty}(\Sigma,P)
 \times C^{\infty}(\Sigma,G/T)/\G_{\Sigma}
 \to \bigl( \cA^{\orth} \times  C^{\infty}(\Sigma,P)
 \times C^{\infty}(\Sigma,G/T) \bigr) / \G_{\Sigma}$
which maps each $(A^{\orth},B, \cl)$ to the $\G_{\Sigma}$-orbit of
$(A^{\orth},B, \bar{g}_{\cl})$  is a surjection.

\smallskip

We have just seen how the set $ C^{\infty}(\Sigma,G/T)/\G_{\Sigma} = [\Sigma,G/T]$ arises  naturally.
 Moreover, by replacing the space  $C^{\infty}(\Sigma,P)$,
 which looks a bit technical, by the space $\cB = C^{\infty}(\Sigma,\ct)$
 we arrive at the space $\overline{V}$.
 It is easy to check that, under the two identifications
\eqref{eq2.24_plus1} and \eqref{eq2.24_plus2} $p$ coincides with  the restriction of
 $\Pi_{\overline{V}}$ to $\cA^{\orth} \times
 C^{\infty}(\Sigma,P)  \times [\Sigma,G/T]$, ie Eq. \eqref{eq2.21} is fulfilled if we replace
 $\Pi_{\overline{V}}$ by $p$. By using that equation to extend $p$
  to the space  $\overline{V}$ we arrive at the map $\Pi_{\overline{V}}$. \par
Finally, it is  clear that
 \begin{equation} \label{eq2.22_app}  \cA_{reg}/ \G = \Pi_{\overline{V}}(\cA^{\orth} \times
 C^{\infty}(\Sigma,P)  \times [\Sigma,G/T]) \subset
 \Image(\Pi_{\overline{V}}),
 \end{equation}
 which implies the inclusion \eqref{eq2.22} in Sec. \ref{subsubsec2.2.4}.

 \medskip

ii) In the derivation of  \eqref{eq2.22} in part i) we did not make use
 of the assumption that $\Sigma$ is compact,
 so in fact relation \eqref{eq2.22} also holds for noncompact $\Sigma$.
 But since\footnote{this follows, e.g., by
 combining the observation in  Footnote \ref{ft11} with the first two observations in Remark \ref{rm2.3}}
the set  $[\Sigma,G/T]$ then just consists of the single point
$[1_T]$ where $1_T$ is the constant function on $\Sigma$ taking only the value $T \in G/T$,
 the relation \eqref{eq2.22}  reduces to relation
\eqref{eq2.13} above.

\medskip

iii) By contrast, if  $\Sigma$ is compact then  relation
\eqref{eq2.13} does {\em not} hold. In order to see this observe
that in this case $[\Sigma,G/T]$  has infinitely many elements, cf.
Remark \ref{rm2.4}. The map $p$ in part i)
 is not injective but it does have the weaker property that
$\cl_1 \neq \cl_2$ implies
 $p(A^{\orth}_1, B_1, \cl_1) \neq  p(A^{\orth}_2, B_2, \cl_2)$.
 Thus for compact $\Sigma$ the set
 $$ \Pi_{\overline{V}}(\cA^{\orth} \times
 C^{\infty}(\Sigma,P)  \times \{ [ 1_T]\}) =
\pi_{\G}(\cA^{\orth} \oplus  C^{\infty}(\Sigma,P) dt) = \pi_{\G}(\cA^{qax}(T)) \cap \cA_{reg}/ \G$$
  will be a proper subset of
 $ \Pi_{\overline{V}}(\cA^{\orth} \times
 C^{\infty}(\Sigma,P)  \times [\Sigma,G/T]) = \cA_{reg}/ \G$, cf.  part i).
 Clearly, this implies  $\cA_{reg} / \G \not\subset \pi_{\G}(\cA^{qax}(T))$.

\begin{example} \label{rm_SU2_Top} \rm
It is probably instructive to verify
some of the claims made above (and some of the claims made in Sec. \ref{subsubsec2.2.4})
directly in the special case where $\Sigma = S^2$ and where $G=SU(2)$.
 \begin{enumerate}

\item Let $S$ be a  connected component of $\cG_{reg}$
 and $P$  a connected component of $\ct_{reg}$
 (ie  a Weyl alcove of $(\cG,\ct)$).
 Above we claimed that $\exp: S \to G_{reg}$ is a diffeomorphism.
 Moreover, we claimed that if $P$ is contained in $S$
 then $\theta: P \times G/T \ni (b,\bar{g}) \mapsto \bar{g} b
\bar{g}^{-1} \in S$ is a (well-defined)  diffeomorphism.
In particular, this means that the three spaces  $G_{reg}$, $ S$, and $ P \times G/T$
are  homeomorphic to each  other.
In the special case $G= SU(2)$ we can verify the  homeomorphy of these three spaces
directly by using the  concrete formulas for $G_{reg}$, $G/T$, $\cG_{reg}$, and $\ct_{reg}$
in part \ref{appB} of the Appendix.

\item   In the special case $G=SU(2)$ we have $G\cong S^3$ and $G/T \cong S^2$
and  the fiber bundle   $\pi_{G/T}:G \to G/T$ turns out to be isomorphic to the  Hopf fibration.
If $\Sigma = S^2$ it follows from the well-known result that $\pi_2(S^2) \cong \bZ$ and $\pi_2(S^3) =0$
 that not every map $\bar{g}:\Sigma = S^2 \to S^2$ admits a lift w.r.t. $\pi_{G/T}$ to a map
 $\Omega:\Sigma = S^2 \to S^3$.
 On the other hand,   the restriction of $\bar{g}$ to the (contractible) subset
 $\Sigma \backslash \{\sigma_0\} =
 S^2 \backslash \{\sigma_0\} \cong \bR^2$ always admits such a lift. (This illustrates
  Remark \ref{rm2.3} in Sec. \ref{subsubsec2.2.4} in the present special case.)

\item Remark \ref{rm2.4} in Sec. \ref{subsubsec2.2.4} implies
that once we have fixed an orientation on $\Sigma$
there is a natural bijection from $[\Sigma,G/T]$ to $I = \ker(\exp_{|\ct})$.
In the special case $\Sigma = S^2$ this is quite plausible
since\footnote{recall footnote \ref{ft11}} $[\Sigma,G/T] = [S^2,G/T] = \pi_2(G/T)$ (as sets)
and   $\pi_2(G/T) \cong \pi_2(S^2) \cong \bZ \cong I$ (as groups).
\end{enumerate}
\end{example}

\subsection{Justification for Remark \ref{rm_loc_constant} in Sec. \ref{subsubsec2.3.1}}
 \label{appB'_3}

Let $B \in \cB$ and set
\begin{equation} \label{eq_appB'_3a0}
T(B):=   \int_{\cA^{\orth}} \prod_i \Tr_{\rho_i}\bigl( \Hol_{l_i}(A^{\orth} + B dt)\bigr)    \exp(i  S_{CS}( A^{\orth} +  B dt))  DA^{\orth}
\end{equation}
We will now show on a heuristic level that if $T(B) \neq 0$
then $B$ must be constant on every connected component
of $\Sigma \backslash (\bigcup_{j=1}^m \arc(l^j_{\Sigma}))$.
In particular, $B$ must be locally constant around $\sigma_0$ (cf. condition \eqref{eq_ass_sigma0} in Sec. \ref{subsubsec2.3.1} above).

\smallskip

 Let $U$ be any open subset of $\Sigma$ which fulfills the following condition
\begin{equation} \label{appB'_2_Udef}
 U \subset \Sigma \backslash (\bigcup_{j=1}^m \arc(l^j_{\Sigma}))
 \end{equation}
 Let  $a_c$ be an arbitrary element of $\cA_{\Sigma,\ct} \cong \cA_c^{\orth} \subset \cA^{\orth}$
 fulfilling $\supp(a_c) \subset U$.
 Then we have
\begin{align} \label{eq_appB'_3_rm}
 T(B) & = \int_{\cA^{\orth}} \prod_i \Tr_{\rho_i}( \Hol_{l_i}(A^{\orth} + Bdt))    \exp(i  S_{CS}( A^{\orth}  + B dt))  DA^{\orth} \nonumber \\
  & \overset{(+)}{=} \int_{\cA^{\orth}} \prod_i \Tr_{\rho_i}( \Hol_{l_i}(A^{\orth} + a_c + Bdt  ))    \exp(i  S_{CS}( A^{\orth}  + Bdt ))  DA^{\orth} \nonumber \\
   & \overset{(++)}{=} \int_{\cA^{\orth}} \prod_i \Tr_{\rho_i}( \Hol_{l_i}(A^{\orth}  + Bdt ))    \exp(i  S_{CS}( A^{\orth}  - a_c + Bdt ))  DA^{\orth} \nonumber \\
     & = \biggl[ \int_{\cA^{\orth}} \prod_i \Tr_{\rho_i}( \Hol_{l_i}(A^{\orth}  + Bdt  ))    \exp(i  S_{CS}( A^{\orth} +  Bdt ))  DA^{\orth} \biggr] \exp(i S_{CS}(- a_c + B dt)) \nonumber \\
   & = T(B)  \exp\biggl(- i  2 \pi k \int_{\Sigma} \Tr\bigl(a_c \wedge  dB\bigr) \biggr)
\end{align}
In step  $(+)$ we exploited the support properties of $a_c$
and in step  $(++)$  we performed the change of variable $ A^{\orth} + a_c \to  A^{\orth}$.\par

If $T(B) \neq 0$ then we  obtain
$\exp\bigl(- i  2 \pi k \int_{\Sigma} \Tr\bigl(a_c \wedge  dB\bigr) \bigr)=1$.
Since this holds for all $a_c \in \cA_{\Sigma,\ct}$ with $\supp(a_c) \subset U$
we can conclude that $dB \equiv 0$ on $U$.
 Since  $U$ was an arbitrary open subset of $\Sigma$ fulfilling  \eqref{appB'_2_Udef}
 we can conclude at a heuristic level that
if $T(B) \neq 0$ then $B$ must indeed be constant on every connected component
of $\Sigma \backslash (\bigcup_{j=1}^m \arc(l^j_{\Sigma}))$.

\begin{remark} \label{rm_appB'_3} \rm
According to what we just said only  ``step functions''
will contribute to the integral $\int \cdots DB$ appearing in Eq. \eqref{eq2.41simpl} in Sec. \ref{subsubsec2.3.1}. This is good news because this means that -- when evaluating the Wilson loop observables $\WLO(L)$ -- we can
expect  to obtain ``sum over area coloring''-expressions
even in the case of  general links $L$,
and not only for the special links $L$ appearing in Theorem \ref{main_theorem} above.\par

On the other hand there  is an obvious complication. Recall that $B$ was supposed to be an
element of  $\cB = C^{\infty}(\Sigma,\ct)$
and the only elements of $\cB$ that have the step function property just mentioned will
be the constant maps.
This strongly suggests\footnote{recall that at the end of Sec. \ref{subsec6.1} above we
give several additional arguments supporting this point of view}  that instead of working with links $L$ consisting of genuine loops $l_1, \ldots, l_m$ we should rather be working with links $L$ consisting
of  ``closed ribbons'' (cf.  Definition \ref{def_cont.ribbon} in Sec. \ref{subsec4.0.3b} above).
And this is of course exactly what we were doing in Sec. \ref{sec4} and Sec. \ref{sec6} of the present paper
where closed simplicial ribbons play a key role.\par

Let us mention that the ribbon analogues of the expressions
$\Tr_{\rho_i}\bigl(\Hol_{l_i}(A)\bigr)$ appearing in Eq. \eqref{eq2.4} above
are not  gauge-invariant functions (cf. also point (C2) in part \ref{appG} of the Appendix below).
So in order to derive the ribbon analogues of the formulas in Sec. \ref{subsec2.3} and
Appendix \ref{appB'_2} and  \ref{appB'_3} we will have to find suitable gauge-invariant versions
of theses expressions $\Tr_{\rho_i}\bigl(\Hol_{l_i}(A)\bigr)$.
A quick  way to do this is to simply
reverse engineer\footnote{Using  the notation of Sec. \ref{subsubsec2.2.2} above
and considering for simplicity the case of a proper gauge fixing (instead of an abstract gauge fixing)
this ``reverse engineering'' procedure amounts to the following:
If $V$ is a gauge fixing subspace of $\cA$ and $\chi_V: V \to \bC$ any function then
by setting $\chi:= \chi_V \circ \Pi_V^{-1} \circ \pi_{\G}$
we obtain a gauge invariant function $\chi$ on $\cA$ which extends $\chi_V$ }
 the desired gauge-invariant functions from
 the ribbon analogues of the restrictions of $\Tr_{\rho_i}\bigl(\Hol_{l_i}(A)\bigr)$
 onto $\cA^{qax}(T)$.
 \end{remark}

\subsection{Careful derivation of Eq. \eqref{eq2.41simpl} in Sec. \ref{subsubsec2.3.1}}
 \label{appB'_2}

Let $U \subset \Sigma$ be an open neighborhood of $\sigma_0$  which is sufficiently small
not to intersect any of the $\arc(l^j_{\Sigma})$, $j \le m$ (recall condition \eqref{eq_ass_sigma0} in Sec. \ref{subsubsec2.3.1} above).
 In other words we assume again that
condition \eqref{appB'_2_Udef} above is fulfilled.
Let $\eta:\Sigma \to [0,1]$ be smooth and fulfill
$$\eta \equiv 1 \quad \text{ on $V$} \quad \quad \text{ and } \quad \quad  \eta \equiv 0 \quad \text{on $\Sigma \backslash U$}$$
where $V$ is another  open neighborhood of $\sigma_0$  such that the closure $\bar{V}$ is contained in $U$.\par

From the definition $ \overline{\Tr_{\rho_i}( \Hol_{l_i}\bigr)}$ in Sec. \ref{subsubsec2.3.1} above
 and the properties of $\eta$ it follows that
for all $A^{\orth} \in \cA^{\orth}$, $B \in \cB$, and $\cl \in [\Sigma,G/T]$ we have
\begin{equation}  \overline{\Tr_{\rho_i}( \Hol_{l_i}\bigr)}\bigl(A^{\orth}+ A_{\sing}(\cl) + Bdt\bigr)  \\
= \overline{\Tr_{\rho_i}( \Hol_{l_i}\bigr)}\bigl(A^{\orth}+ (1-\eta) A_{\sing}(\cl) + Bdt\bigr)
\end{equation}
Observe also that we can consider
 $(1-\eta) A_{\sing}(\cl) \in \cA_{\Sigma \backslash \{\sigma_0\},\ct}$ in a natural way as an element of  $\cA_{\Sigma,\ct} \subset \cA^{\orth}$ (by trivially extending $(1-\eta) A_{\sing}(\cl)$ in the point $\sigma_0$).

\smallskip

Accordingly, we obtain for fixed $B \in \cB$ and $\cl \in [\Sigma,G/T]$
\begin{align}  \label{eq_appB'_2a}
& \int_{\cA^{\orth}} \prod_i  \overline{\Tr_{\rho_i}( \Hol_{l_i}\bigr)}\bigl(A^{\orth}+ A_{\sing}(\cl) + Bdt\bigr)    \exp(i  \overline{S^{qax}_{CS}}( A^{\orth} + A_{\sing}(\cl) + B dt))  DA^{\orth} \nonumber \\
& =  \int_{\cA^{\orth}} \prod_i  \overline{\Tr_{\rho_i}( \Hol_{l_i}\bigr)}\bigl(A^{\orth}+ (1-\eta) A_{\sing}(\cl) + Bdt\bigr)    \exp(i  \overline{S^{qax}_{CS}}( A^{\orth} + A_{\sing}(\cl) + B dt))  DA^{\orth}  \nonumber \\
& \overset{(*)}{=}  \int_{\cA^{\orth}} \prod_i  \overline{\Tr_{\rho_i}( \Hol_{l_i}\bigr)}\bigl(A^{\orth} + Bdt\bigr)    \exp(i  \overline{S^{qax}_{CS}}( A^{\orth} - (1-\eta) A_{\sing}(\cl) +  A_{\sing}(\cl) + B dt))  DA^{\orth}  \nonumber \\
& = \biggl[ \int_{\cA^{\orth}} \prod_i  \overline{\Tr_{\rho_i}( \Hol_{l_i}\bigr)}\bigl(A^{\orth} + Bdt\bigr)    \exp(i  \overline{S^{qax}_{CS}}( A^{\orth} +B dt))  DA^{\orth} \biggr] \times \exp(i  \overline{S^{qax}_{CS}}( \eta A_{\sing}(\cl) + B dt)) \nonumber \\
 & =  \biggl[ \int_{\cA^{\orth}} \prod_i  \Tr_{\rho_i}\bigl( \Hol_{l_i}(A^{\orth} + Bdt)\bigr)    \exp(i  S_{CS}( A^{\orth} +B dt))  DA^{\orth} \biggr]  \times \nonumber\\
  & \hspace{8cm} \times  \exp\biggl(i  2 \pi k \int_{\Sigma \backslash
\{\sigma_0\}} \! \! \Tr\bigl(d (\eta A_{\sing}(\cl)) \cdot B\bigr)\biggr)
\end{align}
where in step $(*)$ we applied
 the informal change of variable $A^{\orth} + (1-\eta) A_{\sing}(\cl) \to A^{\orth}$
(which is now justified since  $(1-\eta) A_{\sing}(\cl)$ is an element of  $\cA^{\orth}$)
and  in the last step we used the definition of $ \overline{S^{qax}_{CS}}$.

\smallskip

Clearly, it is enough\footnote{in view of Eq. \eqref{eq_appB'_2a} above and the definition of $T(B)$
it is clear that for those $B$ for which $T(B)$ does vanish
Eq. \eqref{eq_appB'_2d} below will be trivially fulfilled}
to restrict oneself  to those $B$ for which the integral $T(B)$ in \eqref{eq_appB'_3a0} above does not vanish. According to the heuristic argument in Sec. \ref{appB'_3} above
all such $B$ will be locally constant around $\sigma_0$.
Using this  we obtain
\begin{align} \label{eq_appB'_2b} \int_{\Sigma \backslash
\{\sigma_0\}} \Tr\bigl(d (\eta A_{\sing}(\cl)) \cdot B\bigr)
& =  \int_{\Sigma \backslash \{\sigma_0\}} \Tr\bigl(d (\eta A_{\sing}(\cl) \cdot B)\bigr)
   + \int_{\Sigma \backslash \{\sigma_0\}} \Tr (\eta A_{\sing}(\cl) \wedge dB)  \nonumber \\
 & =  \int_{\Sigma \backslash \{\sigma_0\}} \Tr\bigl(d ( A_{\sing}(\cl) \cdot \eta B)\bigr)
   + \int_{\Sigma \backslash \{\sigma_0\}} \Tr ( A_{\sing}(\cl) \wedge \eta dB) \nonumber \\
& \overset{(+)}{=}  \Tr\bigl( n(\cl) \cdot (\eta B)(\sigma_0))
   + \int_{\Sigma \backslash \{\sigma_0\}} \Tr ( A_{\sing}(\cl) \wedge \eta dB) \nonumber \\
& \overset{(++)}{=}  \Tr\bigl( n(\cl) \cdot B(\sigma_0))
   + \int_{\Sigma \backslash \{\sigma_0\}} \Tr ( A_{\sing}(\cl) \wedge \eta dB)
\end{align}
where in step $(+)$ we used the same
 argument as in step $(*)$ in Eq. \eqref{eq_identity_ncl} in Sec. \ref{subsubsec2.3.1} above
(taking into account that both $B$ and $\eta$ are locally constant around $\sigma_0$).
Step $(++)$ follows because $\eta(\sigma_0)=1$.
Observe that Eq. \eqref{eq_appB'_2a} holds for all $U$ and  $\eta$ fulfilling the assumptions made above.
Since $B$ is locally constant around $\sigma_0$
 the last integral in Eq. \eqref{eq_appB'_2b} vanishes if $U$ was chosen small enough.
Taking this into account we get from Eq. \eqref{eq_appB'_2a} and Eq. \eqref{eq_appB'_2b}
\begin{multline} \label{eq_appB'_2d}
 \int_{\cA^{\orth}} \prod_i  \overline{\Tr_{\rho_i}( \Hol_{l_i}\bigr)}\bigl(A^{\orth}+ A_{\sing}(\cl) + Bdt\bigr)    \exp(i  \overline{S^{qax}_{CS}}( A^{\orth} + A_{\sing}(\cl) + B dt))  DA^{\orth} \\
  =  \biggl[    \int_{\cA^{\orth}} \prod_i \Tr_{\rho_i}\bigl( \Hol_{l_i}(A^{\orth} + B dt)\bigr)    \exp(i  S_{CS}( A^{\orth} +  B dt))  DA^{\orth} \biggr] \exp\bigl( 2 \pi i k\Tr\bigl( n(\cl) \cdot B(\sigma_0))\bigr)
\end{multline}
Using Eq. \eqref{eq_appB'_2d} in  Eq. \eqref{eq2.41_pre}
 for each  $B \in \cB$ and $\cl \in [\Sigma,G/T]$
 we arrive at Eq. \eqref{eq2.41simpl}.

\section{Appendix: Polyhedral cell complexes}
\label{appF}

Talking informally, a
``polyhedral cell-complex''  is just a cell-complex
which is obtained by glueing together ``convex
polytopes''   in an analogous way as simplicial complexes arise from glueing
together simplices. For the convenience of the reader we will now state the formal definitions
and introduce some additional notation which we use in the main text of the present paper.

\begin{definition} \label{def_C1}
\begin{enumerate}
\item  Let $V$ be a finite-dimensional real vector space.
A  convex polytope\footnote{or, more precisely,
 a closed bounded convex polytope} in $V$
 is a non-empty bounded subset $P$ of $V$
 which is   of the form
 \begin{equation} \label{eq_convPoly} P = \bigcap_{a \in A} H_{a}
 \end{equation}
  where  $(H_{a})_{a \in A}$ is a finite family of
 closed halfspaces
 of $V$. Here with ``closed halfspace'' we mean  a  subset $H$ of $V$ of the form
 $H = \{x \in V \mid l(x) \ge 0 \}$ for some non-trivial
 linear form $l: V \to \bR$.

\item Let $V$ and $P$ be as above,  let  $(H_{a})_{a \in A}$ be a
family of halfspaces such that \eqref{eq_convPoly} holds,
and let $h_a$ be  the hyperplane bounding $H_a$, for  $a \in A$.  \par

  A ``face'' of $P$ is a non-empty proper subset $S$ of $P$ of the form
   $S =  P \cap \bigcap_{a \in A'}  h_a$    where $A' \subset A$.
  Observe that  each face $S$ of $P$
  is again a convex polytope in $V$.

\item An abstract convex polytope (or simply, a ``convex polytope'') is
a pair $(P,V)$ where $V$ is a finite-dimensional real vector space
and $P$ a convex polytope in $V$ (equipped with the topology inherited from the standard
topology of $V$). Instead of $(P,V)$ we will often write simply $P$.

\item The dimension $\dim(P)$ of an abstract convex polytope $P=(P,V)$
is the dimension of the linear span $V_P$ of the subset $P - x$ in $V$ where $x$ in an arbitrary  point in $P$.\par

An orientation on $P=(P,V)$ is a non-vanishing element of  $\Lambda^d V_P$ where $d:= \dim(V_P)$.

 \item We will equip every abstract convex polytope $P=(P,V)$ with the obvious topology.
 Clearly, after doing so each $P$ will  be a a closed $n$-cell, i.e. homoemorphic to
an $n$-dimensional closed ball, where $n := \dim(P)$.
  \end{enumerate}
   \end{definition}

\begin{definition} \label{def_C2}
A ``polyhedral cell decomposition'' $\cC$ of a topological space $X$
 is a family $\cC = ((P_a,\Phi_a))_{a \in A}$
 where each $P_a = (V_a,P_a)$ is an (abstract) convex polytope
 and $\Phi_a$ is an embedding $P_a \to X$
 such that the following conditions are fulfilled:
 \begin{enumerate}
\item Each point $x \in X$ lies in exactly one of the sets
  $\Phi_a(\overset{\circ}{P}_a)$, $a \in A$.\footnote{in the terminology of Definition \ref{def_C4} below,  condition i)  just says that  $X$ is the disjoint union of all the ``open cells'' of $\cC$}
  Here  $\overset{\circ}{P}_a$
  denotes the interior of $P_a$ in $V_a$.

\item   For each face $F$ of $P_a$, $a \in A$,
 the restriction $(\Phi_a)_{| F}$ of  $\Phi_a$ onto $F$
 can be ``identified'' with one of the $\Phi_b$, $b \in A$.
 More precisely, we have  $(\Phi_a)_{| F} \circ \psi_{| P_b} = \Phi_b$
  for a suitably chosen $b \in A$ and  a suitably chosen affine injective map $\psi:V_b \to V_a$
with the property $\psi(P_b) = F$.

\item A set $S \subset X$ is open in $X$ iff $\Phi_a^{-1}(S)$ is open
   in $P_a$ for all $a$.
 \end{enumerate}
 We call $\cC$ finite iff the set $A$ is finite.
 The dimension of $\cC$ is the supremum of the set $\{\dim(P_a) \mid a \in A \} \subset \bN_0 \cup \{\infty\}$.
  We call $\cC$ oriented iff every $P_a$, $a \in A$, is equipped with an orientation.
  In the special case where $X$ is a smooth manifold we call $\cC$ smooth iff each of the maps
  $\Phi_a$, $a \in A$, is smooth.
\end{definition}

\begin{example}
  Fig. \ref{Fig. 1a} shows a polyhedral cell decomposition of a closed square
and Fig. \ref{Fig. 1b} shows a polyhedral cell decomposition
of $S^2$.
\begin{figure}[h]
  \centering
  \begin{minipage}[b]{4 cm}
   \includegraphics[height=1.5in,width=1.5in, angle=0]{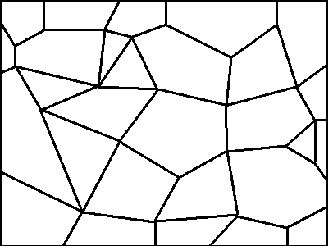}
    \caption{}
    \label{Fig. 1a}
  \end{minipage}
   \begin{minipage}[b]{3.5 cm}
   \ \
      \end{minipage}
  \begin{minipage}[b]{5.6 cm}
    \includegraphics[height=1.5in,width=2in, angle=0]{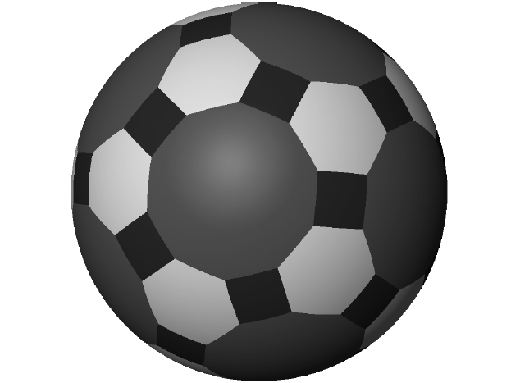}
    \caption{}
    \label{Fig. 1b}
  \end{minipage}
\end{figure}
\end{example}

\begin{definition} \label{def_C3}
A ``polyhedral cell complex''
is a pair $\cP = (X, \cC)$ where $X$ is a topological (Hausdorff)  space
and $\cC$ is a  ``polyhedral cell decomposition'' of $X$. \par
We call $\cP$ finite (resp. oriented)  iff $\cC$ is finite (resp. oriented).
The dimension of $\cP$ is the dimension of $\cC$.
In the special case where $X$ is a smooth manifold we call
$\cP$  smooth iff $\cC$ is smooth.
\end{definition}

\begin{remark} \label{rm_C3}  \rm
\begin{enumerate}

\item  Clearly, the notions   ``polyhedral cell decomposition'' and ``polyhedral cell complex''
   generalize the notions ``triangulation'' and ``simplicial complex''.

\item A polyhedral cell complex is a regular CW-complex where\par
  1. Each of the  closed cells involved   has some extra structure  and,\par
   2. The way in which these closed cells are glued together respects
    this extra structure (as explained in condition ii) in Definition \ref{def_C2}  above).

\end{enumerate}
\end{remark}

\begin{definition} \label{def_C4}  Let $\cC = ((P_a,\Phi_a))_{a \in A}$
 be a polyhedral cell decomposition of $X$ and $p \in \bN_0$.
\begin{enumerate}
\item We set $\face_p(\cC):= \{ \Phi_a({P}_a) \mid a \in A \text{ with } \dim(P_a) = p\}$.
The elements of $\face_p(\cC)$
are called the ``p-faces''  of $\cC$.

\item We set  $\cell_p(\cC):= \{ \Phi_a(\overset{\circ}{P}_a) \mid  a \in A \text{ with } \dim(P_a) = p\}$.
The elements of  $\cell(\cC) := \bigcup_p \cell_p(\cC)$ are called
the ``open cells'' of $\cC$.

\item  If $\cP = (X, \cC)$ is polyhedral cell complex
we write $\face_p(\cP)$ instead of $\face_p(\cC)$.
\end{enumerate}
\end{definition}

\begin{remark} \label{rm_C4}  \rm There is a
 natural 1-1-correspondence
  between the elements of $\cell_p(\cC)$ and of $\face_p(\cC)$:
    $ E \in \cell_p(\cC)$ corresponds to $F \in \face_p(\cC)$
  iff $F$ is the closure of $E$ in $X$
\end{remark}

Let us now fix a surface $\Sigma$ and
a polyhedral cell decomposition $\cC$ of $\Sigma$.

\begin{definition} \label{def_C5}
Let $\cC'$ be  another polyhedral cell decomposition  of $\Sigma$.
\begin{enumerate}
\item We say that $\cC'$ and $\cC$ are dual to each other iff the following three conditions are fulfilled
\begin{itemize}
\item There is a  bijection $\psi_0:  \cell_0(\cC) \to  \cell_2(\cC')$
such that $x \in \psi_0(x)$ for all $x \in  \cell_0(\cC)$.

\item There is a bijection $\psi_1:  \cell_1(\cC) \to  \cell_1(\cC')$
such that  each $e \in \cell_1(\cC)$
intersects $\psi_1(e) \in  \cell_1(\cC')$ in exactly one point
and $e$ intersects none of the other elements of $ \cell_1(\cC')$.

  \item There is a bijection $\psi_2:  \cell_2(\cC) \to  \cell_0(\cC')$
such that $x' \in \psi_2^{-1}(x')$ for all $x' \in  \cell_0(\cC')$.
\end{itemize}
Observe that each of the bijections $\psi_p:\cell_p(\cC) \to \cell_{2-p}(\cC')$
 is necessarily unique
   and  induces a bijection
 $\bar{\psi}_p:\face_p(\cC) \to \face_{2-p}(\cC')$ (via the 1-1-correspondence in Remark \ref{rm_C4})

\item If $\cC$ and $\cC'$ are dual to each other
 we set $\Check{F}:= \bar{\psi}_p(F) \in \face_{2-p}(\cC')$  for $F \in \face_p(\cC)$
 and $\Check{F}':= \bar{\psi}^{-1}_p(F') \in \face_{2-p}(\cC)$ for $F' \in \face_p(\cC')$.\par
   We call  call $\Check{F}$ (resp. $\Check{F}'$)  the face ``dual'' to $F$ (resp.  ``dual'' to $F'$),
 cf. Example \ref{ex_dualfaces} in Sec. \ref{subsec4.0.5} above.

\item  If $\cC$ and $\cC'$ are dual to each other
 then for each $e \in \face_1(\cC)$ (or $e \in \face_1(\cC')$)
  the unique intersection point
of $e$ and the dual face $\Check{e}$
      will be denoted by\footnote{the notation $\bar{e}$  is motivated
 by the fact that in the special case where $\cC'$ is the canonical dual of $\cC$ as defined below, $\bar{e}$  will just be the ``barycenter'' of  $e$}
 $\bar{e}$.
\end{enumerate}
\end{definition}

In the set of polyhedral cell decomposition which are dual to $\cC$
there is a distinguished element $\cC'$, which we will call  ``the canonical dual of $\cC$'', or simply,  ``the'' dual of $\cC$, cf.  Fig. \ref{dual_faces} in Sec. \ref{subsec4.0.5} above for an example.\par

The canonical dual $\cC'$ is constructed in a two step process
where one first ``breaks down'' $\cC$ into smaller pieces,
which are given by the so-called  ``barycentric subdivision'' of $\cC$,
and then, in a second step, reassembles these pieces in a suitable way.
The   explicit\footnote{strictly speaking we would also have to describe
 the maps $\Phi'_{a'}$, $a' \in A'$, explicitly.
 Since this is both technical and straightforward
 we omit this here} description of the dual $\cC'$  is as follows:
\begin{itemize}
\item  $\face_0(\cC') := \{ \bar{F} \mid F \in \face_2(\cC)\}$
       where $\bar{F}$ is the ``barycenter'' of $F$

\item $\face_1(\cC') := \{[x_+(e),\bar{e}] \cup [\bar{e},x_-(e)]    \mid e \in \face_1(\cC) \}$
where, for each $e \in \face_1(\cC)$,
$\bar{e}$ is the ``barycenter'' of $e$
and $x_{+}(e)$ and $x_{-}(e)$ are the barycenters of the two
2-faces $F_{+}(e), F_{-}(e) \in \face_2(\cC)$
which bound $e$.

\item The elements of $\face_2(\cC')$ are the closures
  of the connected components of $\Sigma \backslash (\face_0(\cC') \cup \face_1(\cC'))$
  \end{itemize}

Above we have used the following convention:
\begin{convention} \rm \label{conv_identif} Let $X$ and $\cC = ((P_a,\Phi_a))_{a \in A}$
be as above,  let  $F \in \face_p(\cK)$, $p \in \bN_0$
and let $b \in A$ be given by $F= \Phi_b(P_b)$.
We then usually  identify  $F$ with the convex polytope $P_b$
via the homeomorphism $\Phi_b$.
Clearly, after doing so,  the notions of the  ``barycenter'' $\bar{F}$ of $F$
 and the ``convex hull'' $[x,y]$ of two points
 $x, y, \in F$ are then well-defined.
 \end{convention}

\section{Appendix: The simplicial program revisited}
\label{appG}

In Sec. \ref{sec3} above we described the general ``simplicial program'' for   CS-theory/$BF_3$-theory.
The simplicial program should be considered as a long term project.
One possible strategy for attacking this project is to divide it into suitable sub projects.
Here are three natural sub projects:

\begin{description}

\item[Project 1] For non-Abelian $CS$-theory/$BF_3$-theory  on the special base manifold $M = \Sigma \times S^1$ find a simplicial realization of the path integral expressions   in the torus gauge
    for the WLOs associated to general links.

 \smallskip

Clearly, Project 1 is exactly what we are dealing with
in   the present paper and in  \cite{Ha7b}.
 Theorem \ref{main_theorem} above
 can be seen as a first step  towards the completion of Project 1.

 \smallskip

 {\em Comment:} At least in the case of $BF_3$-theory
  the chances for completing Project 1 successfully should be fairly good,
cf.  Sec. \ref{sec7} above and Sec. 6 and Sec. 7 in \cite{Ha7b}.

\item[Project 2] Generalize the results obtained in Project 1  to a larger class of  base manifolds $M$,
for example for manifolds $M$ which are (non-trivial) $S^1$-bundles or which can be obtained from $\Sigma \times
S^1$ by  surgery operations on suitable links.

\smallskip

{\em Comment:} At least in the case $M=S^3$ the chances are very good since in this case it should be possible to implement Witten's surgery argument rigorously. Also the chances for dealing successfully with base manifolds $M$ which  are (non-trivial) $S^1$-bundles should be quite good (cf. Problem (P3) in Sec. \ref{sec7} above).

\item[Project 3] Find a  simplicial definition
 of the WLOs associated to general links  for the {\it non-gauge fixed}
 non-Abelian $CS$-theory/$BF_3$-theory  on $M$ (where $M$ is as general as possible).

\smallskip

Observe that there is a natural ``discrete'' analogue of the torus gauge
fixing procedure. So if one can complete Project 1 successfully there might be a  quick
 way to complete also Project 3 in the special case $M = \Sigma \times S^1$.
 In order to do so we could  look at the simplicial definition of the WLOs used in Project 1
and then try to ``reverse engineer'' from it a non-gauge fixed ``version'',
i.e. a suitable simplicial expression which -- after applying ``discrete'' torus gauge fixing --
leads to the simplicial expression used in Project 1.

\smallskip

{\em Comment:} There are several serious ``obstructions'' or ``complications''
for Project 3 even in the special case $M = \Sigma \times S^1$,  see the list below. This is why I am rather sceptical regarding the  successful implementation of Project 3.

\end{description}

We end the present section with a list of the ``complications''
 regarding Project 3, which we mentioned above\footnote{for simplicity we
 have formulated these complications within the CS-theoretic setting relevant in the present paper.
 It is straightforward to rewrite (C1)--(C3) within the $BF_3$-theoretic setting of Sec. 7 in \cite{Ha7b}}:

\begin{description}

\item[Complication (C1)] Recall the definition of $\Det^{disc}_{FP}(B)$ in Eq. \eqref{eq_ Det_disc_FP1}

 We had
$$ \Det^{disc}_{FP}(B)  :=
   \prod_{x \in \face_0(q\cK)}  \det\nolimits^{1/2}\bigl(1_{{\ck}}-\exp(\ad(B(x)))_{| {\ck}}\bigr)$$
Discrete torus gauge fixing (cf.  Project 3 above) is related to the map
$q: G/T \times T \to G$ given by $ q(\bar{g},t) = \Ad(g) t$ for $t \in T$ and  $\bar{g} = gT  \in G/T$.
 Since $(\bar{g},t) \mapsto \det\bigl(1_{{\ck}}- \Ad(t)_{|\ck}\bigr)$  is the ``Jacobian''\footnote{more precisely, we have $q^*(\nu_G) = \det\bigl(1_{{\ck}}-\Ad(\pi_2(\cdot))_{| \ck}\bigr) (\pi_1^*(\nu_{G/T}) \wedge
\pi_2^*(\nu_T)))$
where $\nu_G$, $\nu_T$, and $\nu_{G/T}$ are the normalized left-invariant
 volume forms on $G$, $T$, and $G/T$ and $\pi_1:G/T \times T \to G/T$ and $\pi_2:G/T \times T \to T$
the canonical projections}
of the map $q$
it is easy to see  how factors of the form  $\det\bigl(1_{{\ck}}-\exp(\ad(B(x))_{|\ck})=
\det\bigl(1_{{\ck}}-\Ad(\exp(B(x)))_{|\ck}\bigr)$ in Eq. \eqref{eq_ Det_disc_FP1} above
can arise from  discrete torus gauge fixing.\par

However, it is not clear how the $1/2$-exponents in
 Eq. \eqref{eq_ Det_disc_FP1} above  can arise from such a procedure.

\item[Complication (C2)] In the standard formulation of lattice gauge theory (LGT)
the traces of the holonomies associated to simplicial loops
  are gauge-invariant  functions (ie invariant under the operation of the lattice gauge group).
  However, in the present paper
  we used closed simplicial ribbons for reasons explained  at the end of Sec. \ref{subsec6.1} above.
   In contrast to the traces of the holonomies associated to simplicial loops
  the traces of the holonomies associated to closed simplicial ribbons will no longer be
   be gauge-invariant functions.
     In other words: even if we can resolve Complication (C1) above and we can work within
     a non-gauged-fixed LGT setting there will not be a totally natural
  candidate for a ``non-gauge fixed version'' of the expression
    $\Tr_{\rho}\bigl( \Hol^{disc}_{R}(A^{\orth},   B)\bigr)$
   where $\Hol^{disc}_{R}(A^{\orth},   B)$    is as in  Eq. \eqref{eq4.21} above
        (and where $\rho$ is a fixed finite-dimensional complex representation of $G$). \par

  One possible approach for resolving (C2) could be  to use the LGT-analogue of the
    strategy sketched in Remark \ref{rm_appB'_3} at the end of Appendix \ref{appB'_3}
             for resolving a similar complication in the continuum setting.

\item[Complication (C3)] Recall that the sum $\sum_{y \in I} \cdots$ and the
factor $\exp\bigl( - 2\pi i k  \langle y, B(\sigma_0) \rangle \bigr)$ in the continuum
equation Eq. \eqref{eq2.48} above are the result of certain topological obstructions
which arose when applying (continuum)  torus gauge fixing, cf. Sec. \ref{subsubsec2.2.4} above.
On the other hand, for discrete torus gauge fixing
there are no topological obstructions.
Accordingly, it is not clear how -- by applying discrete torus gauge fixing
one can obtain the sum $\sum_{y \in I} \cdots$
and the factors $\exp\bigl( - 2\pi i k  \langle y, B(\sigma_0) \rangle \bigr)$
appearing in Eq. \eqref{eq2.48} above.
\end{description}

\end{document}